\numberwithin{equation}{section}
\newtheorem{defi}{Definition}[section]
\newtheorem{lemma}[defi]{Lemma}
\newtheorem{prop}[defi]{Proposition}
\newtheorem{cor}[defi]{Corollary}
\newtheorem{theorem}[defi]{Theorem}
\newtheorem{rem}[defi]{Remark}
\begin{document}
\title[Static Solutions of the EVM-System and Thin Shell Limit]{Existence of Static Solutions of the Einstein-Vlasov-Maxwell System\\and the Thin Shell Limit}
\author{Maximilian Thaller}
\address{Mathematical Sciences, Chalmers University of Technology and the University of Gothenburg, SE-412 96 G\"oteborg, Sweden}
\email{maxtha@chalmers.se}

\date{\today}
\maketitle

\begin{abstract}
In this article the static Einstein-Vlasov-Maxwell system is considered in spherical symmetry. This system describes an ensemble of charged particles interacting by general relativistic gravity and Coulomb forces. First, a proof for local existence of solutions around the center of symmetry is given. Then, by virtue of a perturbation argument, global existence is established for small particle charges. The method of proof yields solutions with matter quantities of bounded support - among other classes, shells of charged Vlasov matter. As a further result, the limit of infinitesimally thin shells as solution of the Einstein-Vlasov-Maxwell system is proven to exist for arbitrary values of the particle charge parameter. In this limit the inequality (\ref{a_b_ineq_charged_1}) obtained by Andr\'easson in \cite{a09} becomes sharp. However, in this limit the charge terms in the inequality are shown to tend to zero.
\end{abstract}

\tableofcontents

\section{Introduction}

The Einstein-Vlasov-Maxwell system (EVM-system) is a model to describe an ensemble of charged particles whose motion is governed by gravity and interaction via Coulomb forces. Static regular solutions of this system describe particle configurations whose distribution is not changing while the individual particles are in motion. The aim of developing a mathematical understanding of static solutions of the Einstein-Vlasov system or the EVM-system is to provide models for astrophysical objects, as e.g.~galactic nebulae. In this context a number of questions on static solutions have been investigated. For the uncharged Einstein-Vlasov system, which can be used to describe galaxies or galaxy clusters, it has been proved in several different settings that static asymptotically flat solutions exist and that for massive particles the matter quantities of these solutions can have compact support \cite{aft15,akr14,akr11,r99,r93}. For massless particles the existence of static solutions with matter quantities of compact support has been established in \cite{aft16}. \par
Static solutions can be interpreted as equilibrium states of a kinetic system in the sense that the space-time and all matter quantities stay constant in the time evolution. Whether these static solutions are stable or unstable is still little understood, see \cite{ar06} for a numerical study. It has however been investigated how dense the matter in such a state can be concentrated. Bounds on the mass-to-radius ratio have been proven for an extensive class of spherically symmetric static objects covering kinetic and fluid models \cite{a08}, see the equations (\ref{a_b_ineq_background}) in Section \ref{sect_pre_mat} and equation (\ref{a_b_ineq_charged_1}) below. If the mass contained within a sphere is too large such that this inequality is not satisfied, the object cannot be static. There is numerical evidence \cite{ac14, ar06} that such overly concentrated objects tend to collapse. \par
A very high concentration of Vlasov matter can lead to trapped surfaces \cite{ar10}, i.e.~black hole formation \cite{a12}. The concentration parameter
\begin{equation} \label{def_conc_pram_0}
\Gamma := \sup_{r\in(0,\infty)} \frac{2m(r)}{r},
\end{equation}
where $r$ is the area radius and $m(r)$ the Hawking mass, can serve as an indicator if a trapped surface arises. The criterion would be that $\Gamma \to 1$. For the uncharged Einstein-Vlasov system it is known \cite{a08} that $\Gamma \leq 8/9$ for each static solution, i.e.~there is a gap between the highest possible value of $\Gamma$ and $1$. In particular an adiabatic transition, i.e.~a sequence of static solutions, from a regular static solution to a black hole is ruled out. In the charged case however the situation is different, no such gap is guaranteed by the corresponding inequality 
\begin{equation} \label{a_b_ineq_charged_1}
\sqrt{\frac{m_g(r)}{r}} \leq \frac 1 3 + \sqrt{\frac{1}{9} + \frac{q(r)^2}{r^2}}.
\end{equation}
Here $m_g(r)$ is a mass parameter involving some charge terms in addition to the Hawking mass, see the definition (\ref{def_g_mass}) below in Section \ref{sect_pre_mat} for details, and $q(r)$ denotes the charge contained in a ball of radius $r$. For charged, static dust solutions transitions to black holes have been studied in \cite{mh11}. These transitions are achieved by a sequence of solutions whose matter quantities are confined to a sphere of decreasing radial coordinate while the mass is kept constant, in a particular frame. \par
Little is known about static solutions of the EVM-system. Some light on the properties of static solutions in spherical symmetry has been shed by the numerical study \cite{aer09}. The following numerical observations of \cite{aer09} are investigated analytically in this article and theorems are proven that capture the observed behavior. Firstly the solutions resemble the solutions of the uncharged system if the particle charge parameter is not chosen too large. Shell and  {\em multishell} solutions have been observed. If the particle charge parameter is increased a critical value is encountered. If this critical value is exceeded no solutions with compactly supported matter quantities could be constructed. Secondly, a sequence of charged thin shell solutions has been constructed in whose limit the inequality (\ref{a_b_ineq_charged_1}) becomes sharp. In this limit the shells become arbitrarily thin and the ratio $Q/M$ of the total charge and the total mass approaches zero. In the uncharged case there exists an analogous limit which is analytically well understood \cite{a08, a072}. \par
Finally, we mention an interesting observation of \cite{aer09} which is not addressed analytically in this article but which makes the study of static solutions of the EVM-system even more interesting. In \cite{aer09} a different sequence of static solutions with a limit in which the inequality (\ref{a_b_ineq_charged_1}) becomes sharp has been constructed numerically. It was observed that in this limit $R=M=Q$ where $R$ is the radius of the support of the matter quantities, $M$ is the ADM mass of the solution, and $Q$ is the total charge. This limit is very interesting since it shows behavior of the EVM-system with no correspondence in the uncharged case and it can serve as candidate for an adiabatic black hole transition. Its analytical understanding remains an open problem.\par
The first result of this article is a local existence result. For the proof a contraction argument is applied. This is a standard technique to show local existence which has already been used in similar settings \cite{r93,rr92, aft15}. In these settings, however, the Einstein-Vlasov system can be reduced to one single integro-differential equation. In the context of the EVM-system, a function triple which satisfies a coupled system of integro-differential equations is needed to be considered. The method has been adapted to this function triple in this article. Then, as a next result, for small particle charges the existence of solutions with matter quantities of bounded support is shown. These solutions are regular for all radii and asymptotically flat. For this global existence result, we use a perturbation argument to show that up to a certain radius a static solution of the EVM-system converges to the uncharged solution with the same model parameters, as the particle charge goes to zero. This result allows to perturb uncharged solutions with matter quantities of bounded support to obtain charged solutions that again have matter quantities of bounded support and finite mass. \par
The existence of thin shell solutions of the EVM-system is the third result of this article. It is based on the insight that the charge terms become small close to the center of symmetry, independently of other characteristic quantities of the solutions, as e.g.~the concentration parameter $\Gamma$, defined in (\ref{def_conc_pram_0}). This smallness is established via a bootstrap argument. This insight allows to employ the methods developed in \cite{a072} to prove the existence of a sequence of static shell solutions with a fixed particle charge that converges to an infinitesimally thin shell. In other words the ratio $R_2/R_1$ of outer and inner radius of the matter shells in this sequence converges to $1$. At the same time the radius of the support of the matter quantities tends to zero in this limit. If a sequence of shell solutions of the EVM-system approaches an infinitesimally thin shell the inequality (\ref{a_b_ineq_charged_1}) becomes sharp. However, the charge-to-radius ratio vanishes as the shells become infinitesimally thin. In other words, in the limit of thin shells the static solutions of the charged system behave like the solutions with zero particle charge. In particular, in this limit there is no transition to a black hole comparable to \cite{mh11}. \par
The question of existence of static solutions of the EVM-system has already been addressed in \cite{n13}. However, the results of the present article go beyond the results of \cite{n13} at many places and various technical aspects of static solutions of the Einstein-Vlasov system, as well as peculiarities of the EVM-system are treated with greater care in this article.\par
This article is organized as follows. First, in Section \ref{sect_preliminaries}, the EVM-system is introduced and some known properties about static solutions that this article relies on, are mentioned. In Section \ref{sec_loc} the proof of local existence and a continuation criterion is presented. In Section \ref{sec_glo} the existence of charged solutions with matter quantities of bounded support and finite mass is shown for small particle charges. In Section \ref{sec_qr_small} we show that for small radii the charge density can be controlled by $r^2$, the areal radius squared. In Section \ref{sect_thin_shells} this fact is exploited to prove that for a fixed particle charge parameter there exists a sequence of solutions that approach an infinitesimally thin shell.

\subsection*{Acknowledgments}
The authors likes to thank H\aa{}kan Andr\'easson for a series of helpful discussions, as well as for his comments on the manuscript. He would also like to thank Thomas B\"ackdahl for helpful discussions and further comments on the manuscript. Finally the helpful comments and corrections and the careful review of the manuscript by the anonymous referees is gratefully acknowledged.

\section{The Einstein-Vlasov-Maxwell system} \label{sect_preliminaries}

In this section we state the EVM-system in a spherically symmetric, static setting and introduce the relevant objects. The intention is mostly to fix notation, for a detailed derivation of the equations, see e.g.~\cite{lic}.\par
Let $\mathscr M$ be a four dimensional manifold equipped with the Schwarzschild coordinates $t\in\mathbb R$, $r\in [0,\infty)$, $\vartheta\in [0,\pi]$, $\varphi \in [0,2\pi)$ and the Lorentzian metric $g$ of signature $(-,+,+,+)$. For this metric $g$ we take the ansatz
\begin{equation}
g_{(t,r,\vartheta,\varphi)} = -e^{2\mu(r)}\mathrm d t^2 + e^{2\lambda(r)} \mathrm dr^2 + r^2\mathrm d\vartheta^2 + r^2\sin^2(\vartheta)\, \mathrm d\varphi^2
\end{equation}
to incorporate spherical symmetry and staticity. We define the mass shell
\begin{equation}
\mathscr P = \{(x,p) \in T\mathscr M\,:\,g_x(p,p) = -1,\, p \mathrm{\,future\,directed}\}.
\end{equation}
The mass shell is a submanifold of the tangent bundle $T\mathscr M$ of the space-time manifold $\mathscr M$. The tangent bundle $T\mathscr M$ of $\mathscr M$ can be equipped with the coordinates $(t,r,\vartheta,\varphi,p^{(t)}, p^{(r)}, p^{(\vartheta)}, p^{(\varphi)})$, where $(p^{(t)}, p^{(r)}, p^{(\vartheta)}, p^{(\varphi)}) \in \mathbb R^{4}$ are the canonical momenta to the coordinates $(t,r,\vartheta,\varphi)$. The symmetry suggests to describe the system in terms of the variables
\begin{equation}
r,\quad w=e^{\lambda(r)} p^{(r)}, \quad L = r^4\left(\left(p^{(\vartheta)}\right)^2 + \sin^2(\vartheta) \left(p^{(\varphi)}\right)^2\right).
\end{equation}
The physical interpretation of these variables is the following. The variable $r$ is the area radius, $w$ is the radial momentum and $L$ is the square of the angular momentum. The particles are described by the particle distribution function $f\in C^1(\mathscr P;\mathbb R_+)$ satisfying the Vlasov equation.  The integral of $f$ over a volume in the mass shell gives the number of particles in the corresponding space-time volume with momenta in the corresponding volume in momentum space. We assume that the particle distribution function $f$ is static and spherically symmetric, i.e.~$f=f(r,w,L)$, by slight abuse of notation. In spherical symmetry the electro-magnetic field is entirely described by the function $q\in C^1([0,\infty);\mathbb R_+)$ describing the charge contained in a ball of radius $r$. The Vlasov equation in terms of the variables $r$, $w$, $L$ reads
\begin{equation}
w \frac{\partial f}{\partial r}+ \left(\frac{L}{r^3} + q_0\frac{q(r)}{r^2}  e^{\mu(r) + \lambda(r)} - \mu'(r) \left(1 + w^2 + \frac{L}{r^2}\right) \right) \frac{\partial f}{\partial w}=0.
\end{equation}
For details and definitions, see \cite{lic}. The particles move along the characteristic curves of the Vlasov equation. So a function $f\in C^1(\mathscr P;\mathbb R_+)$ is a solution of the Vlasov equation if and only if it is constant along these characteristics. \par
The quantities $L$ and $E$ are conserved along the characteristics of the Vlasov equation, where $E$ is given by
\begin{equation} \label{def_cons_quan}
E(r,w,L) = e^{\mu(r)}\sqrt{1+w^2+\frac{L}{r^2}} - I_q(r), \quad I_q = q_0\int_0^r e^{(\mu+\lambda)(s)}\frac{q(s)}{s^2} \mathrm ds,
\end{equation}
cf.~\cite{lic} for further explanations. Any function of $E$ and $L$ is then a solution of the Vlasov equation. In this article we consider {\em polytropic} ansatz functions
\begin{equation} \label{art_ans_f}
f(r,w,L) = \left[1-\frac{E(r,w,L)}{E_0}\right]_+^k [L-L_0]_+^\ell
\end{equation}
with parameters
\begin{equation} \label{param_constraints}
k\geq 0, \quad L_0 > 0, \quad \ell \geq 0, \quad E_0>0.
\end{equation}
The constants $E_0$ and $L_0$ can be interpreted as cut-off values. They bound the energy from above and the angular momentum form below, respectively. Furthermore $[x]_+ = x$ if $x\geq 0$ and $[x]_+=0$ otherwise. In terms of the matter distribution function $f$ an energy-momentum tensor $T=T_{\mu\nu}\mathrm dx^\mu\mathrm dx^\nu$ can be defined. This energy momentum tensor is divergence free and satisfies the dominant energy condition \cite{a05}. \par
In spherical symmetry the Vlasov matter enters the Einstein equations solely through the three matter quantities $\varrho$, $p$, $p_T$ which can be interpreted as follows. The quantity $\varrho$ is the energy density of the particles, $p$ the radial pressure, and $p_T$ the transversal pressure, i.e.~the pressure tangential to spheres of fixed radius $r$. These matter quantities are given by
\begin{align}
\varrho(r) &= \frac{\pi}{r^2} \int_0^\infty \int_{-\infty}^\infty f(r,w,L) \sqrt{1+w^2+ \frac{L}{r^2}} \mathrm dw \mathrm dL, \\
p(r) &= \frac{\pi}{r^2} \int_0^\infty \int_{-\infty}^\infty f(r,w,L) \frac{w^2}{\sqrt{1+w^2+L/r^2}} \mathrm dw \mathrm dL, \\
p_T(r) &= \frac{\pi}{2r^4} \int_0^\infty \int_{-\infty}^\infty f(r,w,L) \frac{L}{\sqrt{1+w^2+L/r^2}} \mathrm dw \mathrm dL.
\end{align}
Furthermore we give a working definition of the charge density $\varrho_q$ by
\begin{equation}
\varrho_q(r) = q_0 e^{\lambda(r)}\frac{\pi}{r^2} \int_0^\infty \int_{-\infty}^\infty f(r,w,L)\,\mathrm dw\,\mathrm dL.
\end{equation}
Again, see \cite{lic} for more details and a derivation of these formulas. \par
We are now ready to state the spherically symmetric, static EVM-system. It reads
\begin{align}
e^{-2\lambda(r)}\left(2r\lambda'(r) -1\right)+1 &= 8\pi r^2\left(\varrho(r) + \frac{q^2(r)}{2r^4}\right), \label{eieq1}\\
e^{-2\lambda(r)}\left(2r\mu'(r) +1\right)-1 &= 8\pi r^2 \left(p(r) - \frac{q^2(r)}{2r^4}\right), \label{eieq2}\\
q'(r) &= 4\pi r^2 \varrho_q(r), \label{maxeq}
\end{align}
with the boundary conditions
\begin{align}
\mu(0) = \mu_c, \quad \lambda(0) = q(0) = 0, \label{bound_inner} \\
\lim_{r\to\infty} \mu(r) = \lim_{r\to\infty} \lambda(r) = 0. \label{bound_outer}
\end{align}
For more more information on the system and details to its derivation the reader may consult \cite{lic, n05, rr92, sz14}. We did not write down all non-trivial Einstein equations but if (\ref{eieq1}) and (\ref{eieq2}) are solved, this implies a solution for the other Einstein equations.
\begin{rem}
In the present coordinates a solution is asymptotically flat if $\mu(r), \lambda(r) \to 0$ as $r\to\infty$. However, the cutoff energy $E_0$ and the limit 
\begin{equation}
\mu_\infty := \lim_{r\to \infty}\mu(r)
\end{equation}
are not independent parameters. The boundary condition $\mu_\infty = 0$ together with the choice of a central value $\mu_c$ implies a certain value for $E_0$. Conversely, a choice of $\mu_c$ and $E_0$ will imply a value for $\mu_\infty$. However, a shift of $\mu_\infty$ by a constant corresponds merely to a scaling of the time coordinate $t$. In the remainder of this article we assume $E_0=1$ bearing in mind that the boundary conditions (\ref{bound_outer}) follow by a rescaling of the time coordinate, if necessary.
\end{rem}

\section{Characterization of the matter quantities} \label{sect_pre_mat}

The following lemma characterizes the dependency of the matter quantities on the metric, the charge, and the radial coordinate.

\begin{lemma} \label{lem_mat_1}
For fixed parameters
\begin{equation} \label{param_constraints_p}
k\geq 0, \quad L_0 > 0, \quad \ell \geq 0, \quad E_0=1,
\end{equation}
there exist functions $g, h, k: \mathbb R_+ \times \mathbb R \times \mathbb R_+ \to\mathbb R_+$ such that the matter quantities $\varrho$, $p$, and $\varrho_q$ can be written as
\begin{align}
\varrho(r) &= g(r,\mu(r),I_q[\mu,\lambda,q](r)), \label{eq_rho_fun} \\
p(r) &= h(r,\mu(r),I_q[\mu,\lambda,q](r)),  \label{eq_p_fun} \\
 \varrho_q(r) &= q_0e^{\lambda(r)} k(r,\mu(r),I_q[\mu,\lambda,q](r)),  \label{eq_rhoq_fun}
\end{align}
with the integral
\begin{equation} \label{def_op_iq}
I_q[\mu,\lambda,q](r) = q_0\int_0^r e^{(\mu+\lambda)(s)}\frac{q(s)}{s^2} \mathrm ds
\end{equation}
defined in (\ref{def_cons_quan}) as electro-magnetic contribution to the particle energy. The functions $g,h,k$ are continuously differentiable. The functions and their partial derivatives are increasing in the first and the third argument. All three functions, as well as their partial derivatives with respect to the first and the third argument, are non-increasing in the second argument whereas their partial derivatives with respect to the second argument are non-decreasing.
\end{lemma}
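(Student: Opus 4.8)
The strategy is to produce the functions $g$, $h$, $k$ by explicitly carrying out the $w$–integration in the definitions of $\varrho$, $p$, $\varrho_q$, after inserting the polytropic ansatz (\ref{art_ans_f}). The key observation is that with $E_0=1$ the cutoff factor is
\[
\left[1 - E(r,w,L)\right]_+ = \left[1 + I_q(r) - e^{\mu(r)}\sqrt{1+w^2+L/r^2}\right]_+,
\]
so the integrand depends on $(r,w,L)$ only through $r$, $L$, $w$ and the two scalars $\mu(r)$ and $I_q(r)$. Hence if I \emph{define}
\[
g(r,u,v) := \frac{\pi}{r^2}\int_0^\infty\!\!\int_{-\infty}^\infty \left[1+v-e^{u}\sqrt{1+w^2+L/r^2}\right]_+^k [L-L_0]_+^\ell \sqrt{1+w^2+L/r^2}\,\mathrm dw\,\mathrm dL,
\]
and analogously $h$ (with $w^2/\sqrt{\cdots}$ in place of $\sqrt{\cdots}$) and $k$ (with $1/\sqrt{\cdots}$ and without the $q_0e^\lambda$ prefactor, which is pulled out in (\ref{eq_rhoq_fun})), then (\ref{eq_rho_fun})–(\ref{eq_rhoq_fun}) hold by construction once one substitutes $u=\mu(r)$, $v=I_q[\mu,\lambda,q](r)$. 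So the content of the lemma is entirely the regularity and monotonicity claims for these three explicit integrals.

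**Regularity.** For $C^1$ regularity in $(r,u,v)$ I would first change variables in the $w$–integral to a bounded domain: the bracket is positive only when $1+w^2+L/r^2 < e^{-2u}(1+v)^2$, i.e.\ on a compact $w$–interval whose endpoints depend smoothly on $(r,u,v,L)$, and likewise the $L$–integration effectively runs over a bounded set (one needs $L>L_0$ and $1+L/r^2<e^{-2u}(1+v)^2$). On this domain the integrand, for $k\ge 1$ and $\ell\ge 0$, is continuously differentiable in the parameters with locally integrable derivatives, so differentiation under the integral sign applies; the boundary terms from the moving endpoints vanish because the bracket $[\cdots]_+^k$ is $C^1$ across the boundary when $k\ge1$. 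For $0\le k<1$ the standard trick (as in \cite{aft15}) is a substitution absorbing the singular part, or one simply notes the paper's standing assumptions make $g,h,k$ and their first derivatives the relevant objects; in any case the integrals defining the \emph{first} partial derivatives are again of the same type, which is what gives the second-order statements. The cleanest route is: rescale $w=\sqrt{1+L/r^2}\,\sinh\psi$ or, better, set $\tau = e^u\sqrt{1+w^2+L/r^2}$ as the new integration variable so that the bracket becomes $[1+v-\tau]_+^k$ with $\tau$ ranging over $[e^u\sqrt{1+L/r^2},\,\infty)$; then all $r,u$–dependence sits in smooth coefficients and integration limits, and $C^1$-dependence is routine.

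**Monotonicity.** After the substitution that isolates $[1+v-\tau]_+^k$, monotonicity becomes transparent. Increasing $v$ enlarges the region where the bracket is positive and increases the integrand pointwise, so $g,h,k$ are increasing in $v$; the same holds for their $v$– and $r$–derivatives since those derivatives are themselves integrals of manifestly $v$–increasing integrands. Increasing $r$ (with $u,v$ fixed) increases $1+L/r^2$'s... rather, \emph{decreases} $L/r^2$, enlarging the admissible range and increasing each factor $\sqrt{1+w^2+L/r^2}$ appearing downstairs-or-upstairs — here one must check the sign for each of $g$ (numerator $\sqrt{\cdots}$, good), $h$ ($w^2/\sqrt{\cdots}$), and $k$ ($1/\sqrt{\cdots}$); the prefactor $\pi/r^2$ works against this, so the honest computation is to differentiate the post-substitution integral in $r$ and observe the combination is a sum of nonnegative terms. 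Finally, monotonicity in $u=\mu$: increasing $u$ shrinks the positivity region and, through the factor $e^u$ inside the bracket, decreases the integrand, giving the non-increasing statements for $g,h,k$ and their $r$– and $v$–derivatives; the claim that $\partial_u g$ etc.\ are \emph{non-decreasing} in $u$ is the convexity-type statement $\partial_u^2 g\ge 0$, which follows because $[1+v-e^u s]_+^k$ is a convex function of $u$ for each fixed $s>0$ (its second $u$–derivative, computed where the bracket is positive, is $e^u s\,k[\cdots]_+^{k-1}\big(e^u s(k-1)[\cdots]_+^{-1}+1\big)$, manifestly $\ge0$ after the $w$–integration pulls out the borderline term), and convexity is preserved under the nonnegative integration. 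The main obstacle is precisely this last point: keeping careful track of signs when differentiating twice in $\mu$, handling the contribution of the moving integration endpoint and the case $0\le k<1$ where $[\cdots]_+^{k-1}$ is singular — there one argues via the substitution that removes the singularity before differentiating, or by a limiting/regularization argument, exactly as in the treatment of the uncharged system in \cite{r93, aft15}.
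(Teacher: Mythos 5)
Your overall strategy --- define $g,h,k$ by the explicit double integral in $(w,L)$, then prove regularity and monotonicity via a change of variables and differentiation under the integral --- is the same as the paper's, but two of your steps do not go through as written. First, your substitution $\tau=e^{u}\sqrt{1+w^{2}+L/r^{2}}$ leaves the cutoff as $[1+v-\tau]_{+}^{k}$, which still depends on the parameter $v$; differentiating in $v$ then produces $k[1+v-\tau]_{+}^{k-1}$ together with a moving endpoint at $\tau=1+v$ where the remaining factors do not vanish, so your ``boundary terms vanish because the bracket is $C^1$ across the boundary'' argument requires $k\ge 1$, while the lemma allows every $k\ge 0$. You acknowledge this and defer to ``a substitution absorbing the singular part'', but that substitution is the actual content of the step: the paper shifts once more to $E=\tau-v$ (the conserved particle energy), so the cutoff becomes the parameter-\emph{independent} function $\phi(E)=[1-E]_{+}^{k}$, which is never differentiated. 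The only moving endpoint is then the lower one, where the factor $\left((E+I)^{2}-t^{2}\right)^{\beta}$ vanishes with $\beta=\ell+\tfrac12>0$ (resp.\ $\ell+\tfrac32$), and the boundary contribution dies for all $k\ge 0$. Without carrying out this (or an equivalent) step your proof only covers $k\ge1$.

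Second, your justification of $\partial_u^2 g\ge 0$ is incorrect: $[1+v-se^{u}]_{+}^{k}$ is not a convex function of $u$ on the set where the bracket is positive (for $k=1$ it equals $1+v-se^{u}$ there, which is strictly concave), and your displayed second derivative has a sign error --- the correct expression is $kse^{u}[\cdots]_{+}^{k-1}\left((k-1)se^{u}[\cdots]_{+}^{-1}-1\right)$, which changes sign on the domain of integration. So convexity in $u$ cannot be obtained pointwise and must instead be read off after integration. In the paper's normal form $g(r,u,I)=c_\ell r^{2\ell}e^{-(2\ell+4)u}\xi_{2,\ell+1/2}\left(e^{u}\sqrt{1+L_0/r^2},I\right)$, combined with the recursions $\partial_t\xi_{\alpha,\beta}=-2t\beta\,\xi_{\alpha,\beta-1}$ and $\partial_I\xi_{\alpha,\beta}=\alpha\xi_{\alpha-1,\beta}+2\beta\xi_{\alpha+1,\beta-1}$, every derivative is an explicit signed combination of the manifestly nonnegative functions $\xi_{\alpha,\beta}$, and the monotonicity claims become bookkeeping. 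This also disposes of the $r$-monotonicity you leave unfinished: the worry about the $\pi/r^{2}$ prefactor disappears because all $r$-dependence sits in $r^{2\ell}$ and in $t=e^{u}\sqrt{1+L_0/r^{2}}$, which is decreasing in $r$ while $\xi_{\alpha,\beta}$ is decreasing in $t$.
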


\begin{proof}
A straight forward calculation yields that the functions $g$, $h$, and $k$ can be realized by the expressions
\begin{align}
g(r,u,I) &= c_\ell r^{2\ell} \int_{\sqrt{1+L_0/r^2}}^{e^{-u}(1+I)} \left(1-\varepsilon e^{u}+I\right)^k \varepsilon^2 \left(\varepsilon^2-\left(1+\frac{L_0}{r^2}\right)\right)^{\ell+\frac 1 2} \mathrm d\varepsilon, \label{expr_fct_g} \\
h(r,u,I) &= \frac{c_\ell}{2\ell+3} r^{2\ell} \int_{\sqrt{1+L_0/r^2}}^{e^{-u}(1+I)} \left(1-\varepsilon e^{u}+I\right)^k \left(\varepsilon^2-\left(1+\frac{L_0}{r^2}\right)\right)^{\ell+\frac 3 2} \mathrm d\varepsilon, \label{expr_fct_h} \\
k(r,u,I) &= c_\ell r^{2\ell} \int_{\sqrt{1+L_0/r^2}}^{e^{-u}(1+I)} \left(1-\varepsilon e^{u}+I\right)^k \varepsilon \left(\varepsilon^2-\left(1+\frac{L_0}{r^2}\right)\right)^{\ell+\frac 1 2} \mathrm d\varepsilon, \label{expr_fct_k}
\end{align}
with the integration variable
\begin{equation}
\varepsilon = \sqrt{1+w^2+\frac{L}{r^2}}
\end{equation}
and the constant
\begin{equation}
c_\ell = \int_0^1 \frac{s^\ell}{\sqrt{1-s}}\mathrm ds.
\end{equation}
Here we understand $g(r,u,I) = h(r,u,I) = k(r,u,I) = 0$ if $e^{-u}(1+I) \leq \sqrt{1+L_0/r^2}$. The details can be found in \cite{lic}. We check that all partial derivatives exist and are continuous by proving that  the {\em left derivative} or the {\em right derivative} exists with respect to each argument and that it is continuous. To this end we first perform a change of variables in the integrals in (\ref{expr_fct_g}) -- (\ref{expr_fct_k}) given by $E=\varepsilon e^{\mu(r)} - I_q(r)$. Let $\alpha\in\mathbb R$, $\beta > 0$, $t > 0$, and consider the function
\begin{equation}
\xi_{\alpha,\beta}(t,I) = \int_{E=t-I}^\infty \phi(E) (E+I)^\alpha\left((E+I)^2-t^2\right)^\beta\mathrm dE,
\end{equation}
with the shorthand $\phi(E) = [1-E]_+^k$. With this notation one obtains
\begin{align}
g(r,u,I) &=  c_\ell r^{2\ell} e^{-(2\ell+4)u} \xi_{2,\ell+\frac 1 2}\left(e^u\sqrt{1+\frac{L_0}{r^2}},I\right), \label{rho_xi} \\
h(r,u,I) &= \frac{c_\ell}{2\ell+3} r^{2\ell}e^{-(2\ell+4)u}  \xi_{0,\ell+\frac 3 2}\left(e^u\sqrt{1+\frac{L_0}{r^2}},I\right), \label{p_xi} \\
k(r,u,I) &= c_\ell r^{2\ell}e^{-(2\ell+3)u} \xi_{1,\ell+\frac 1 2}\left(e^u\sqrt{1+\frac{L_0}{r^2}},I\right). \label{rhoq_xi}
\end{align}
We analyze $\xi_{\alpha,\beta}$ and prove first differentiability with respect to the first argument, by showing that the left derivative with respect to the first argument exists and that it is continuous. Let $t>0$. We consider for $0 < \Delta t < t$
\begin{align*}
&\frac{1}{\Delta t}\left(\xi_{\alpha,\beta}(t-\Delta t,I)-\xi_{\alpha,\beta}(t,I)\right) \\
& = \frac{1}{\Delta t}\int_{E=t-I-\Delta t}^{t-I} \phi\left(E\right) (E+I)^\alpha \left((E+I)^2-(t-\Delta t)^2\right)^{\beta} \mathrm dE \\
&\;\quad + \int_{E=t-I}^\infty \phi\left(E\right) (E+I)^\alpha \frac{1}{\Delta t}\\
&\qquad\qquad\qquad\times\left[\left((E+I)^2-(t-\Delta t)^2\right)^{\beta} - \left((E+I)^2 - t^2\right)^{\beta}\right] \mathrm dE.
\end{align*}
We observe that in the first integral $E\leq t-I$ due to the integration limit. This implies immediately  $(E+I)^\alpha\leq t^\alpha$ and $0 \leq \left((E+I)^2-(t-\Delta t)^2\right)^{\beta} \leq (\Delta t(2t-\Delta t))^\beta$. So for the first integral we obtain
\begin{multline}
0 \leq \frac{1}{\Delta t}\int_{E=t-I-\Delta t}^{t-I} \phi\left(E\right) (E+I)^\alpha \left((E+I)^2-(t-\Delta t)^2\right)^{\beta} \mathrm dE \\\leq t^\alpha (\Delta t (2t-\Delta t))^\beta\, \sup_{E\in (t-I-\Delta t, t-I)} \phi(E).
\end{multline}
By the fact that $\phi$ is bounded and that the integral is non-negative we see that the first term goes to zero as $\Delta t \to 0$ since $\beta >0$. So, applying the dominated convergence theorem to the remaining term, we obtain
\begin{equation} \label{left_der_xi_1}
\lim_{\Delta t \to 0} \frac{1}{\Delta t}\left(\xi_{\alpha,\beta}(t-\Delta t,I)-\xi_{\alpha,\beta}(t,I)\right) =-2t\beta  \xi_{\alpha,\beta-1}(t,I).
\end{equation}
The right hand side in (\ref{left_der_xi_1}), i.e.~the left derivative of $\xi_{\alpha,\beta}$ with respect to the first argument, clearly is continuous. This implies that $\xi_{\alpha,\beta}$ is differentiable with respect to the first argument and that
\begin{equation}
\frac{\partial \xi_{\alpha,\beta}}{\partial t}(t,I) =-2t\beta  \xi_{\alpha,\beta-1}(t,I).
\end{equation}
Next we consider for $0 < \Delta I < I$
\begin{align*}
&\frac{1}{\Delta I} \left(\xi_{\alpha,\beta}(t,I+\Delta I)-\xi_{\alpha,\beta}(t,I)\right) \\
& = \frac{1}{\Delta I} \int_{E=t-I-\Delta I}^{t-I} \phi\left(E\right) (E+I+\Delta I)^{\alpha} \left((E+I+\Delta I)^2-t^2\right)^{\beta} \mathrm dE \\
&\;\quad + \int_{E=t-I}^\infty \phi\left(E\right) \frac{1}{\Delta I} \Big[(E+I+\Delta I)^{\alpha} \left((E+I+\Delta I)^2-t^2\right)^{\beta} \\
&\qquad\qquad\qquad -(E+I)^{\alpha} \left((E+I)^2-t^2\right)^{\beta}\Big]\mathrm dE.
\end{align*}
Again, the first integral goes to zero, since $\phi$ is bounded, $(E+I+\Delta I)^{\alpha} \leq (t+\Delta I)^\alpha$, and $\left((E+I+\Delta I)^2-t^2\right)^{\beta} \leq (\Delta I(2t+\Delta I))^\beta$. The remaining part, by the same reasoning as above, gives
\begin{equation}
\frac{\partial \xi_{\alpha,\beta}}{\partial I}(t,I) =\alpha  \xi_{\alpha-1,\beta}(t,I) + 2\beta  \xi_{\alpha+1,\beta-1}(t,I).
\end{equation}
For the differentiability with respect to $r$, we see by inspection of the formulas (\ref{rho_xi}) -- (\ref{rhoq_xi}) that only the point $r=0$ needs to be discussed. However, since we assume $\ell \geq 0$ the point $r=0$ is regular. 
\end{proof}

\begin{rem}
If one assumes $L_0>0$, as done in Sections \ref{sec_qr_small} and \ref{sect_thin_shells} of this article, then Lemma \ref{lem_mat_1} holds also for $\ell > - \frac 1 2$. In fact the only place where $\ell \in(-\frac 1 2,0)$ is problematic is monotonicity and differentiability with respect to $r$, close to $r=0$. However, if $L_0>0$ we automatically have a vacuum region in some neighborhood of zero.
\end{rem}

\begin{rem} \label{rem_est_p_rho}
By inspection of the formulas (\ref{expr_fct_g}) and (\ref{expr_fct_h}) one realizes that for all $r\in[0,\infty)$ there holds
\begin{equation}
p(r) \leq \frac{1}{2\ell+3}\varrho(r),
\end{equation}
a fact that will be used later.
\end{rem}

In the analysis of this article, not the spherically symmetric static EVM-system (\ref{eieq1})--(\ref{maxeq}) is considered but a reduced version which is obtained by the method of characteristics using the ansatz (\ref{art_ans_f}) for the matter distribution function $f$. Since in this article we are working with this reduced system we call its solution a solution of the EVM-system. The way in that this term is used in this article is made precise in the following definition.

\begin{defi} \label{def_sol_evm}
The collection $\left(\mu, \lambda, q\right)_{\mu_c}$ is said to be a regular, static, asymptotically flat solution with central value $\mu_c$ of the spherically symmetric Einstein-Vlasov-Maxwell system if the equations
\begin{align}
e^{-2\lambda(r)} &= 1-\frac{8\pi}{r}\int_0^r s^2 g(s,\mu(s),I_q[\mu,\lambda,q](s)) \mathrm ds - \frac{1}{r}\int_0^r \frac{q^2(s)}{s^2} \mathrm ds, \label{red_evm_1}\\
\mu'(r) &= e^{2\lambda(r)}\Bigg(4\pi rh(r,\mu(r),I_q[\mu,\lambda,q](r)) +\frac{4\pi}{r^2}\int_0^r s^2 g(s,\mu(s),I_q[\mu,\lambda,q](s)) \mathrm ds \label{red_evm_2} \\
&\qquad\qquad-\frac{q^2(r)}{2r^3}+\frac{1}{2r^2}\int_0^r \frac{q^2(s)}{s^2}\mathrm ds \Bigg), \nonumber \\
q' &= 4\pi r^2 q_0 e^{\lambda(r)}k(r,\mu(r),I_q[\mu,\lambda,q](r)), \label{red_evm_3}
\end{align}
and the boundary conditions (\ref{bound_inner}) and (\ref{bound_outer}) are fulfilled, and all functions are bounded and $C^1$ everywhere.
\end{defi}

In this work, at various places it is important to keep track of vacuum and matter regions. To this end we define the characteristic function
\begin{equation} \label{def_char_fun}
\gamma(r) := \ln(1 + I_q(r)) - \mu(r)  - \frac 12 \ln\left(1+\frac{L_0}{r^2}\right).
\end{equation}
All matter quantities are zero at the radius $r$ if and only if $\gamma(r) \leq 0$. This can be seen from the formulas (\ref{expr_fct_g}) -- (\ref{expr_fct_k}), in particular the integral limits. Namely, if $\gamma(r) \leq 0$ we have $e^{-\mu(r)}(1+I_q(r)) \leq \sqrt{1+L_0/r^2}$. \par 
As a first consequence we observe that if $L_0>0$ there will be a vacuum region $[0,R_1(\mu_c)]$ around the center as can be seen by inspecting (\ref{def_char_fun}). This radius is given by
\begin{equation} \label{eq_def_r1}
R_1(\mu_c) = \sqrt{\frac{L_0}{e^{-2\mu_c}-1}}.
\end{equation}
Observe that $R_1\to 0$ if $\mu_c\to-\infty$. Moreover the Tolman-Oppenheimer-Volkov equation (TOV-equation),
\begin{equation} \label{tov_eq}
p'(r) = -\mu'(r)(\varrho(r)+p(r))-\frac{2}{r}(p(r)-p_T(r))+\frac{q(r)q'(r)}{4\pi r^4},
\end{equation}
holds true. It is a consequence of the fact that the energy momentum tensor is divergence free. For a derivation see e.g.~\cite{lic}. \par
Furthermore, in the analysis presented in this article, inequalities linking radius, charge, and the mass of a spherically symmetric, static solution of the EVM-system will be of great use. These inequalities can be seen as generalizations of the well-known Buchdahl inequality.  To this end we define convenient mass parameters. The Hawking mass $m(r)$ is defined by
\begin{equation} \label{def_h_mass}
m(r) := 4\pi \int_0^r s^2\varrho(s) \mathrm ds.
\end{equation}
Next we define the mass parameter $m_\lambda(r)$ by
\begin{equation} \label{def_m_lambda}
m_\lambda(r) := m(r) + \frac{1}{2}\int_0^r \frac{q(s)^2}{s^2}\mathrm ds.
\end{equation}
Finally the gravitational mass $m_g(r)$ is defined by
\begin{equation} \label{def_g_mass}
m_g(r) := m_\lambda(r) + \frac{q(r)^2}{2r}.
\end{equation}
Now the aforementioned inequalities can be stated. If $q_0=0$, i.e.~we consider an uncharged solution of the Einstein-Vlasov system (\ref{eieq1})--(\ref{maxeq}) fulfilling regularity condition (\ref{bound_inner}) at the center, which exists on the interval $[0,R)$, where $R=\infty$ is of course possible, we have
\begin{equation} \label{a_b_ineq_background}
\sup_{r\in(0,R)} \frac{2m(r)}{r} \leq \frac 8 9.
\end{equation}
This result is proved in \cite{a08}. If $q_0 > 0$, a generalized form holds, namely
\begin{equation} \label{a_b_ineq_charged}
\sqrt{\frac{m_g(r)}{r}} \leq \frac{1}{3} + \sqrt{\frac{1}{9} + \frac{q^2(r)}{3 r^2}}
\end{equation}
for all $r\in(0,\infty)$, which has been derived in \cite{a09}. \par
To conclude this section we state the trivial yet useful identity
\begin{equation} \label{trivial}
e^{-2\lambda(r)} = 1 - \frac{2m_\lambda(r)}{r},
\end{equation}
which follows directly from the Einstein equations (\ref{eieq1})--(\ref{eieq2}).

\section{Local-in-$r$ Existence} \label{sec_loc}

Local existence will be proved by a contraction argument. To this end, for a constant $\delta >0$, we define a set $\mathcal C \subset \left(C^0([0,\delta];\mathbb R)\right)^2 \times C^1([0,\delta];\mathbb R)$ of function-triples and an operator $\mathcal T:\mathcal C \to \mathcal C$. Then we prove that this operator $\mathcal T$ is a contraction on $\mathcal C$ with respect to the norm $\|\cdot\|_{\mathcal C}$ given by
\begin{equation} \label{def_norm}
\|(u,v,w)\|_{\mathcal C} = \sup_{r\in[0,\delta]} |u(r)| + \sup_{r\in[0,\delta]} |v(r)| + \sup_{r\in[0,\delta]} |w(r)| + \sup_{r\in[0,\delta]} |w'(r)|.
\end{equation}
Banach's fixed point theorem then implies the existence of a fixed point $(u_\mathrm{f},v_\mathrm{f},w_\mathrm{f})$ of $\mathcal T$ in $\mathcal C$. The operator $\mathcal T$ will be defined in such a way that this fixed point gives rise to a solution $\left(\mu, \lambda, q\right)_{\mu_c}$ of the EVM-system in the sense of Definition \ref{def_sol_evm} that exists on the interval $[0,\delta]$. To be precise, we will have
\begin{equation} \label{obt_sol}
(\mu,\lambda,q)(r) = \left(u_{\mathrm f}(r), -\frac 12 \ln\left(v_{\mathrm f}(r)\right), \frac{r^2}{q_0}w'_{\mathrm f}(r)\sqrt{v_{\mathrm f}(r)}e^{-u_{\mathrm f}(r)}\right).
\end{equation}
Here we write $(\mu,\lambda,q)(r)$ for $(\mu(r),\lambda(r),q(r))$, a notation that we will adopt for the rest of this section. The central value $\mu_c$ and the limit $\delta$ of the interval are incorporated in the definitions of $\mathcal C$ and $\mathcal T$. \par
Theorem \ref{theo_loc_ex} below is phrased in a more general way. It yields the existence of a solution not only on a small interval $[0,\delta]$ but, if certain conditions are satisfied, it yields also the existence of a solution on the interval $[\mathring r, \mathring r+\delta]$ for $\mathring r>0$. The second case is relevant when proving the continuation criterion in Proposition \ref{prop_cont_crit} below.

\begin{theorem} \label{theo_loc_ex} (Local Existence)\\
Let $\mathring r \geq 0$ and $0 > \mathring \mu > -\infty$, $\mathring \lambda \geq 0$, $0 \leq \mathring I \leq \infty$ such that 
\begin{equation} \label{cond_ring_quan}
e^{-2\mathring \lambda} > \frac{1}{20},
\end{equation}
and if $\mathring r = 0$ let $\mathring \lambda = \mathring I = 0$. Let furthermore $k$, $\ell$, $L_0$ be chosen such that they satisfy (\ref{param_constraints_p}), and let $q_0 \geq 0$. There exists $\delta\in (0, 1]$ such that a solution $(\mu,\lambda,q)_{\mu_{c}}$ of the Einstein-Vlasov-Maxwell system in the sense of Definition \ref{def_sol_evm} exists on the interval $[\mathring r,\mathring r + \delta]$ with $\mu(\mathring r) = \mathring \mu$, $\lambda(\mathring r) = \mathring \lambda$, and $q(\mathring r)$ is given in equation (\ref{qring}).
\end{theorem}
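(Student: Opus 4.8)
The plan is to set up a Banach fixed point argument on the space of function triples $(u,v,w)$ with the norm $\|\cdot\|_{\mathcal C}$ defined in (\ref{def_norm}), where $u$ will eventually become $\mu$, $v$ will become $e^{-2\lambda}$, and $w'$ will encode $q$ via the substitution (\ref{obt_sol}). First I would define the closed set $\mathcal C \subset (C^0([\mathring r,\mathring r+\delta]))^2 \times C^1([\mathring r,\mathring r+\delta])$ by imposing that $u(\mathring r) = \mathring\mu$, $v(\mathring r) = e^{-2\mathring\lambda}$, $w(\mathring r)$ and $w'(\mathring r)$ are the prescribed initial data (with $w'(\mathring r)$ chosen so that $q(\mathring r)$ matches (\ref{qring})), together with quantitative bounds of the form $\tfrac{1}{20} < v(r) \le 1$ (say $v$ stays in a compact subinterval of $(1/20,1]$ near $e^{-2\mathring\lambda}$, which is possible since by hypothesis $e^{-2\mathring\lambda} > 1/20$ strictly and we only demand this on a short interval), $|u(r) - \mathring\mu|$ small and $u(r) < 0$, and $|w(r)|, |w'(r)|$ bounded by explicit constants. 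The operator $\mathcal T$ is then read off from the reduced system (\ref{red_evm_1})--(\ref{red_evm_3}): given $(u,v,w) \in \mathcal C$, one forms $\lambda = -\tfrac12\ln v$, $q = \tfrac{r^2}{q_0} w' \sqrt{v}\, e^{-u}$ (or the appropriate chargeless version when $q_0 = 0$), computes $I_q[u,\lambda,q]$ via (\ref{def_op_iq}), plugs into the right-hand sides of the integral/ODE equations using the functions $g,h,k$ from Lemma \ref{lem_mat_1}, and integrates to produce the new triple $\tilde u, \tilde v, \tilde w$.

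The key steps are: (1) Verify $\mathcal T$ maps $\mathcal C$ into itself. This uses the continuity and boundedness of $g,h,k$ from Lemma \ref{lem_mat_1} (so the integrands are bounded on the compact range of arguments allowed by $\mathcal C$), the fact that $\int_0^r s^{-2} q(s)^2 \,\mathrm ds$ and the matter integrals are $O(\delta)$ perturbations of their values at $\mathring r$, and crucially that $v$ stays away from $0$ so $e^{2\lambda} = 1/v$ stays bounded; shrinking $\delta$ makes all the increments small enough to preserve the defining inequalities of $\mathcal C$. At $\mathring r = 0$ one must separately check regularity: the hypothesis $\mathring\lambda = \mathring I = 0$ and $\ell \ge 0$ guarantee the $r^{-1}$ and $r^{-2}$ factors in (\ref{red_evm_1})--(\ref{red_evm_3}) are harmless because the $s^2 g$ and $q^2/s^2$ terms vanish fast enough at the origin (here $q(0) = 0$ and $q(s) = O(s^3)$). (2) Show $\mathcal T$ is a contraction: estimate $\|\mathcal T(u_1,v_1,w_1) - \mathcal T(u_2,v_2,w_2)\|_{\mathcal C}$ by the difference of the integrands, using that $g,h,k$ are $C^1$ hence locally Lipschitz in all arguments (Lemma \ref{lem_mat_1}), that $I_q$ depends Lipschitz-continuously on $(\mu,\lambda,q)$ on $\mathcal C$, and that $\lambda = -\tfrac12\ln v$ and $q$ depend Lipschitz-continuously on $(v,u,w')$ since $v$ is bounded below. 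Each term picks up a factor of $\delta$ (or $\delta^{1/2}$ from the $w'$ part, which after integration and the extra $r^2$ factor is still $O(\delta)$), so for $\delta$ small the Lipschitz constant of $\mathcal T$ is $< 1$. (3) Apply Banach's fixed point theorem to get $(u_{\mathrm f}, v_{\mathrm f}, w_{\mathrm f})$, then verify via (\ref{obt_sol}) that the resulting $(\mu,\lambda,q)$ solves (\ref{red_evm_1})--(\ref{red_evm_3}), satisfies the inner boundary conditions, and is $C^1$ — the last point because the right-hand sides, once a fixed point is known, are continuous functions of $r$, so $\mu', q'$ are continuous and $\lambda$ inherits $C^1$ regularity from $v = e^{-2\lambda}$ being given by an integral of a continuous function.

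I expect the main obstacle to be the bookkeeping around the charge variable $q$ and its relation to $w'$. Unlike the uncharged Einstein--Vlasov case, where one has a single scalar equation, here $q$ enters $I_q$ (hence $g,h,k$) through an integral that itself involves $e^{\mu+\lambda}$ and $q/s^2$, so the fixed point map couples all three components nonlinearly, and one must be careful that the map stays well-defined and contractive despite $q$ being reconstructed from a derivative $w'$ — this is precisely why the norm (\ref{def_norm}) includes $\sup|w'|$. A secondary delicate point is the $r = 0$ case: one needs $q(r)/r^2$ to be integrable near $0$, which forces $q(r) = O(r^{2+\varepsilon})$; this follows from $\varrho_q$ being bounded so that $q(r) = \int_0^r 4\pi s^2 \varrho_q(s)\,\mathrm ds = O(r^3)$, but it must be built into the definition of $\mathcal C$ (e.g. by controlling $w'(0)$ and the modulus of continuity of $w'$ near $0$) rather than merely assumed. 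Finally, one should double-check that condition (\ref{cond_ring_quan}), $e^{-2\mathring\lambda} > 1/20$, is exactly what is needed to keep $v$ bounded below by a fixed positive constant on $\mathcal C$ after the $O(\delta)$ perturbation — the constant $1/20$ presumably leaves enough room that $v \ge 1/40$ throughout, say, which controls all the $e^{2\lambda}$ factors uniformly.
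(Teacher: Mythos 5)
Your proposal follows essentially the same route as the paper: a Banach fixed point argument on the triple $(u,v,w)=(\mu, e^{-2\lambda}, I_q)$ in the norm (\ref{def_norm}), with $\mathcal C$ cut out by the initial data, a compact range keeping $v$ bounded away from zero (the paper uses $[1/50,2]$, made possible by (\ref{cond_ring_quan})), an extra quadratic condition near $r=0$ playing the role of your modulus-of-continuity control on $w'$, and recovery of $(\mu,\lambda,q)$ via (\ref{obt_sol}). The self-map and contraction estimates you outline are exactly the ``lengthy but straightforward'' computations the paper itself leaves to the reader, so the proposal is correct and matches the paper's proof.
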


\begin{rem}
Showing local existence by a contraction argument as it will be done in the proof of Theorem \ref{theo_loc_ex} below is an established technique which has in a similar context already been applied e.g.~in \cite{r93}. The particularity of the present situation is, however, that we consider an operator defined on a function triple satisfying additional equations instead of a single function and that one of these functions needs to be continuously differentiable instead of just continuous, i.e.~the derivative and not only the function itself has to be controlled.
\end{rem} 

\begin{proof}
Define the set $\mathcal C$ to be
\begin{multline} \label{def_c}
\mathcal C = \Big\{ (u,v,w) \in \left(C^0([\mathring r, \mathring r + \delta];\mathbb R)\right)^2 \times C^1([\mathring r, \mathring r + \delta];\mathbb R)\;: \\ \mathrm{conditions\,(\ref{cond_1})-(\ref{add_cond_v})\,hold,\,for\,all\,}r\in[\mathring r,\mathring r + \delta]\Big\}
\end{multline}
with the conditions
\begin{align}
&(u,v,w)(\mathring r) = \left(\mathring \mu, e^{-2\mathring \lambda}, \mathring I\right), \label{cond_1} \\
&0\leq w'(\mathring r) \leq q_0 \frac{e^{\mathring \mu-1}}{\sqrt{2}}, \label{cond_2} \\
& (u,v,w)(r) \in G, \quad G = \left[\mathring \mu-1,\mathring \mu+1\right] \times \left[\frac{1}{50},2 \right] \times \left[\mathring I, \mathring I+1\right], \label{def_g} \\
&\frac{8\pi}{r}\int_{\mathring r}^r s^2 g\left(s,u(s),w(s)\right) \mathrm ds + \frac{1}{q_0^2 r} \int_{\mathring r}^r s^2 (w'(s))^2 v(s) e^{-2u(s)} \mathrm ds \leq \frac{1}{50}. \label{cond_4} 
\end{align}
In addition, if $\mathring r = 0$ we also have the condition
\begin{equation} \label{add_cond_v}
1-v(r) \leq Cr^2, \qquad \mathrm{for\;all}\;r\in[0,\delta]
\end{equation} 
for a constant $C$ which depends on $\mathring \mu$ and $q_0$ but not on $\delta$. For the rest of this section we adopt $C$ as notation for a constant with these properties whose value may change from line to line. The operator $\mathcal T:\mathcal C \to \left(C^0([\mathring r, \mathring r + \delta];\mathbb R)\right)^2 \times C^1([\mathring r, \mathring r + \delta];\mathbb R)$ is defined to be
\begin{equation}
\mathcal T(u,v,w) = \left(\mathcal T_1(u,v,w), \mathcal T_2(u,v,w), \mathcal T_3(u,v,w)\right),
\end{equation}
with the components
\begin{align}
\mathcal T_1[u,v,w](r) &= \mathring \mu + \int_{\mathring r}^r \frac{1}{v(s)} \Bigg(4\pi s h\left(s,u(s), w(s)\right)  \\
&\qquad\qquad\quad+ \frac{1-v(s)}{2s}+\frac{(w'(s))^2s}{2q_0^2}v(s) e^{-2u(s)} \Bigg) \mathrm ds, \nonumber \\
\mathcal T_2[u,v,w](r) &= e^{-2\mathring \lambda}-\frac{8\pi}{r}\int_{\mathring r}^r s^2 g(s,u(s), w(s)) \mathrm ds \\
&\qquad\quad- \frac{1}{q_0^2 r} \int_{\mathring r}^r s^2 (w'(s))^2v(s) e^{-2u(s)} \mathrm ds, \nonumber \\
\mathcal T_3[u,v,w](r) &=\mathring I + q_0 \int_{\mathring r}^r \frac{e^{u(s)}}{s^2\sqrt{v(s)}} \left(\mathring q + 4\pi q_0 \int_{\mathring r}^s \frac{\sigma^2}{\sqrt{ v(\sigma)}} k(\sigma,u(\sigma),w(\sigma)) \mathrm d\sigma\right) \mathrm ds.
\end{align}
where
\begin{equation} \label{def_ring_q}
\mathring q = \frac{w'(\mathring r)}{q_0} \mathring r^2 e^{-(\mathring \lambda + \mathring \mu)}.
\end{equation}
Note that if $(\mu,\lambda,q)_{\mu_c}$ is a solution of the EVM-system in the sense of Definition \ref{def_sol_evm} then $\mathcal T[\mu,e^{-2\lambda},I_q](r) = (\mu, e^{-2\lambda}, I_q)(r)$ where $I_q$ is the integral defined in (\ref{def_op_iq}). \par
By a lengthy but straightforward argument one establishes
\begin{equation}
\forall (u,v,w)\in\mathcal C:\quad \mathcal T(u,v,w) \in \mathcal C
\end{equation}
and that there is a constant $C>0$, independent of $\delta$ such that for all $(u_1,v_1,w_1)$, $(u_2, v_2, w_2)\in \mathcal C$ we have
\begin{equation}
\left\| \mathcal T(u_1, v_1, w_1) - \mathcal T(u_2,v_2,w_2) \right\|_{\mathcal C} \leq C \delta \left\|\left(u_1 - u_2, v_1 - v_2, w_1 - w_2\right)\right\|_{\mathcal C}.
\end{equation}

By Banach's fix point theorem there exists a fix point $(u_\mathrm{f},v_\mathrm{f},w_\mathrm{f})$ of $\mathcal T$ in $\mathcal C$. Since $\mathcal T$ is an integral operator the fixed point $(u_f, v_f, w_f)$ will not only be continuous functions, but even $C^1$ functions. This implies that a solution of the EVM-system in the sense of Definition \ref{def_sol_evm} can be obtained via the relation (\ref{obt_sol}). Now we see as well that the obtained solution fulfills
\begin{equation}\label{qring}
q(\mathring r) = \frac{\mathring r ^2}{q_0} w'(\mathring r) e^{-(\mathring \mu+\mathring \lambda)}.
\end{equation}
\end{proof}

\begin{prop} (Continuation Criterion) \label{prop_cont_crit} \\
Let $0 \leq C \leq \frac{19}{40}$ be a constant and let $(\mu,\lambda,q)_{\mu_c}$ be a static solution of the EVM-system that exists at least on the interval $[0,R_c)$, $R_c>0$. Then there exists $\delta$ depending on the choice of $C$ and the parameters of the solution such that, if for all $r\in[0,R_c)$ we have $\frac{m_\lambda(r)}{r} \leq C$, then the solution can be extended at least to the interval $[0,R_c+\delta]$. 
\end{prop}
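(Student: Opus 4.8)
The guiding observation is that the constant $\tfrac{19}{40}$ is tailored precisely to the threshold in (\ref{cond_ring_quan}): combining the hypothesis $m_\lambda(r)/r\le C$ with the identity (\ref{trivial}) gives
$e^{-2\lambda(r)}=1-2m_\lambda(r)/r\ge 1-2C\ge\tfrac1{20}$
for every $r\in[0,R_c)$, so that $0\le\lambda\le\tfrac12\ln 20$ on $[0,R_c)$. The plan is then to show that \emph{every} quantity attached to the solution stays bounded as $r\uparrow R_c$, deduce that $(\mu,\lambda,q)$ extends to a $C^1$ solution on the closed interval $[0,R_c]$, and finally restart the construction of Theorem~\ref{theo_loc_ex} at $\mathring r=R_c$ and concatenate; the length $\delta$ of the new interval will then be the one produced there.

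For the a priori bounds, since the solution is already $C^1$ on $[0,R_c)$ everything is trivially bounded on $[0,R_c/2]$, so it suffices to work on $[R_c/2,R_c)$, where $r$ is bounded away from $0$ and from $\infty$ and hence the inverse powers of $r$ in the equations are harmless. First, $m(r)\le m_\lambda(r)\le C R_c$, so $\int_0^{R_c}s^2\varrho\,\mathrm ds<\infty$ and, by (\ref{def_m_lambda}), $\int_0^{R_c}q(s)^2/s^2\,\mathrm ds=2\big(m_\lambda(R_c^-)-m(R_c^-)\big)<\infty$. Next, since $\sqrt{1+w^2+L/r^2}\ge 1$ forces $\varrho_q\le q_0e^{\lambda}\varrho$, the Maxwell equation (\ref{maxeq}) yields $q'(r)\le q_0e^{\lambda(r)}m'(r)\le q_0\sqrt{20}\,m'(r)$, hence $q(r)\le q_0\sqrt{20}\,m(r)\le q_0\sqrt{20}\,CR_c$. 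Granting, for the moment, a two-sided bound $\mu_{\min}\le\mu\le\mu_{\max}$ on $[0,R_c)$, the electromagnetic potential $I_q(r)=q_0\int_0^re^{(\mu+\lambda)(s)}q(s)/s^2\,\mathrm ds$ is then bounded on $[0,R_c)$, so the arguments of $g,h,k$ stay in a compact set and, by the continuity from Lemma~\ref{lem_mat_1}, $\varrho,p,\varrho_q$ are bounded. Feeding this back into the formulas of Definition~\ref{def_sol_evm} bounds $\mu',q'$ and $m_\lambda'$, so $\mu,q,m_\lambda$ are Lipschitz on $[0,R_c)$; and since $1-2m_\lambda/r\ge\tfrac1{20}$ is bounded away from zero, $\lambda=-\tfrac12\ln(1-2m_\lambda/r)$ is Lipschitz too. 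Consequently $\mu,\lambda,q$ extend continuously to $r=R_c$, and then so do $\mu',q'$ because the right-hand sides of (\ref{red_evm_2}) and (\ref{red_evm_3}) are continuous in the now-bounded data; this produces a $C^1$ solution on $[0,R_c]$.

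The two-sided bound on $\mu$ is the step I expect to be the real obstacle. Using equation (\ref{red_evm_2}) directly together with the explicit form (\ref{expr_fct_h}) of $h$ makes $\mu'$ depend (polynomially) on $I_q$, which itself depends on $e^{\mu}$, so one only gets a \emph{nonlinear} Gr\"onwall inequality that a priori may blow up before $R_c$. The remedy is to bypass the matter functions entirely: rewriting the second Einstein equation (\ref{eieq2}) as $\mu'(r)=\tfrac1{2r}\big(e^{2\lambda}(1+8\pi r^2p-4\pi q^2/r^2)-1\big)$ and using $p\le\varrho$ from Remark~\ref{rem_est_p_rho} together with $e^{2\lambda}\le 20$, one obtains on $[R_c/2,R_c)$ the \emph{linear} bound $\mu'(r)\le\tfrac{10}{r}+80\pi r\varrho(r)$, whose integral is controlled by $\mu(R_c/2)$, by $\tfrac{10}{r}\le 20/R_c$, and by $4\pi\int_0^{R_c}s^2\varrho\le CR_c$. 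Discarding instead the non-negative pressure term and using the already-established bound on $q$ to control $e^{2\lambda}(1-4\pi q^2/r^2)$ from below gives, in the same way, a lower bound for $\mu'$ and hence for $\mu$. This is the second, decisive, use of the hypothesis, through $m\le m_\lambda\le CR_c$.

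It remains to restart and glue. One applies Theorem~\ref{theo_loc_ex} with $\mathring r=R_c$, $\mathring\mu=\mu(R_c)$, $\mathring\lambda=\lambda(R_c)$, $\mathring I=I_q(R_c)$: the required inequality (\ref{cond_ring_quan}) holds because $e^{-2\mathring\lambda}=1-2m_\lambda(R_c)/R_c\ge 1-2C\ge\tfrac1{20}$ (for $C<\tfrac{19}{40}$ this is strict, and for $C=\tfrac{19}{40}$ one checks that the construction in the proof of Theorem~\ref{theo_loc_ex} only uses the non-strict bound, since the second component of the box in (\ref{def_g}) lives in $[\tfrac1{50},2]$ with an admissible decrease of $\tfrac1{50}$), while the remaining hypotheses $\mathring\lambda\ge0$ and $0\le\mathring I\le\infty$ are immediate; the sign condition $\mathring\mu<0$ and the compatibility of $\mathring q$ — equivalently, via (\ref{qring}), a bound $q(R_c)\le\tfrac1{\sqrt2}R_c^2e^{-\mathring\lambda-1}$ controlling the starting value $w'(R_c)$ in (\ref{cond_2}) — are inherited from the solution under consideration and the bounds on $q$, $\mu$ just proved. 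Theorem~\ref{theo_loc_ex} then yields a solution on $[R_c,R_c+\delta]$ which matches $(\mu,\lambda,q)$ in value and in $r$-derivative at $R_c$ (both one-sided derivatives are given by the same formulas evaluated at the common data), so the concatenation is a $C^1$ solution on $[0,R_c+\delta]$ satisfying (\ref{red_evm_1})--(\ref{red_evm_3}) and the inner conditions (\ref{bound_inner}). Since $\delta$ in Theorem~\ref{theo_loc_ex} depends only on $\mathring\mu,\mathring\lambda,\mathring I$, which in turn are controlled by $C$ and by the parameters $k,\ell,L_0,q_0,\mu_c$, the asserted dependence of $\delta$ follows.
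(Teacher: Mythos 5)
Your proposal is correct and follows the same skeleton as the paper's proof: use the hypothesis together with (\ref{trivial}) to get $e^{-2\lambda}\geq 1/20$, establish a priori bounds on $\mu$, $q$ and $I_q$ up to $R_c$, conclude via Lemma~\ref{lem_mat_1} that the matter quantities stay controlled, and then restart with Theorem~\ref{theo_loc_ex}. Two sub-steps are done by a genuinely different (and arguably more self-contained) route. First, you bound $q$ by integrating $\varrho_q\leq q_0e^{\lambda}\varrho$ through the Maxwell equation (\ref{maxeq}), giving $q(r)\leq q_0\sqrt{20}\,m(r)$; the paper instead invokes the Andr\'easson inequality (\ref{a_b_ineq_charged}) together with $m_g(r)>q(r)^2/(2r)$ to get $q(r)/r\leq 2\sqrt2$. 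Your version avoids an appeal to an external sharp inequality (and is in fact the argument the paper itself uses later, in (\ref{est_qr_m})), at the cost of carrying a factor $q_0$; both suffice here. Second, you bound $\mu'$ directly from the raw Einstein equation (\ref{eieq2}) using only the integrated matter quantities, whereas the paper decomposes the bracket in (\ref{red_evm_2}) into $n(r)$ and $f_q(r)$ and estimates the pressure integral via Remark~\ref{rem_est_p_rho}; these are equivalent in substance, and your observation that only integral (not pointwise) control of $\varrho$ and $p$ is needed is exactly the point the paper is making. Finally, you extend the solution continuously to the closed interval and restart exactly at $R_c$, while the paper restarts at $\mathring r<R_c$ with a $\delta$ uniform over $[0,R_c)$; both are standard and both leave the same loose ends (the strictness of $e^{-2\mathring\lambda}>1/20$ at $C=\tfrac{19}{40}$, the sign condition $\mathring\mu<0$, and the matching of $q(\mathring r)$ through (\ref{cond_2}) and (\ref{qring})), which you at least flag explicitly where the paper passes over them in silence.
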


\begin{rem}
The condition $ C \leq \frac{19}{40}$ in the statement of Proposition \ref{prop_cont_crit} has been made for convenience and in view of its application in Lemma \ref{lem_q0s_qrl12} below. Any constant smaller than $\frac 1 2$ is possible. 
\end{rem}

\begin{proof}
We wish to deduce the existence of $\delta$ with the asserted properties from Theorem \ref{theo_loc_ex}.  To this end we have to show that the solution meets the assumptions of this theorem for all $\mathring r\in[0,R_c)$. In particular, if we take $\mathring r\in[0,R_c)$ and set $\mathring \mu=\mu(\mathring r)$, $\mathring \lambda = \lambda(\mathring r)$, and $\mathring I = I_q[\mu,\lambda,q](\mathring r)$, we need to show that the conditions (\ref{cond_ring_quan}) are satisfied. We have $e^{-2\mathring \lambda} > \frac{1}{20}$ by the choice of $C$ and since there generally holds (\ref{trivial}). \par
Now we show that $\mu(r)$ and $I_q[\mu,\lambda,q](r)$ stay bounded for all $r\in[0,R_c)$ if $\frac{m_\lambda(r)}{r} \leq C$ for all $r\in[0,R_c)$. We define the functions
\begin{equation}
n(r) := 4\pi r p(r) + \frac{m(r)}{r^2} - \frac{q(r)^2}{2r^3} + \frac{1}{2r^2} \int_0^r \frac{q(s)^2}{s^2}\mathrm ds
\end{equation}
and
\begin{equation}
f_q(r) := \frac{1}{2r^2}\int_0^r \frac{q(s)^2}{s^2}\mathrm ds - \frac{q(r)^2}{2r^3} = -\frac{\mathrm d}{\mathrm dr}\left(\frac{1}{2r}\int_0^r \frac{q(s)^2}{s^2}\mathrm ds\right).
\end{equation}
Observe that $n(r)$ is the bracket in the $\mu$-equation (\ref{red_evm_2}) and that $f_q(r)$ is exactly the terms in $n(r)$ that involve $q(r)$. We have
\begin{equation} \label{eq_est_mu_by_int}
\mu(r) \leq \mu_c + C \left( \left| 4\pi\int_0^r s p(s) \mathrm ds\right| + \left|\int_0^r\frac{m(s)}{s^2}\mathrm ds \right| + \left| \int_0^r f_q(s)\mathrm ds \right|\right)
\end{equation}
Since $-f_q$ is the $r$-derivative of the second term in $\frac{m_\lambda(r)}{r}$ and we have by assumption $\frac{m_\lambda(r)}{r} \leq \frac 1 2$ for all $r\in[0,R_c)$ we conclude that  
\begin{equation} \label{eq_fq}
0 \leq \left| \int_0^r f_q(s) \mathrm ds \right| < \frac 1 2.
\end{equation}
The functions $\mu$ and $I_q$ can diverge only as $r$ tends towards $R_c$. In our analysis we can thus assume that $r\in\left[\frac{R_c}{2},R_c\right)$. So, since by assumption $\frac{m(r)}{r}<\frac{1}{2}$ we have that $\frac{m(r)}{r^2}$ is bounded. Next we deal with the term $4\pi r p(r)$ in (\ref{eq_est_mu_by_int}). We have already observed that the only negative term in the equation (\ref{red_evm_2}) for $\mu'$ is bounded for all $r\in[0,R_c)$ (cf.~the estimate (\ref{eq_fq})). So, since, as already said, $\mu$ can only diverge as $r$ tends towards $R_c$ we deduce that 
\begin{equation}
4 \pi \int_0^{\frac{R_c}{2}} s p(s) \mathrm ds < \infty.
\end{equation}
Using Remark \ref{rem_est_p_rho} we have
\begin{align}
4 \pi \int_{\frac{R_c}{2}}^r s p(s) \mathrm ds & \leq \frac{4\pi}{2\ell + 3} \int_{\frac{R_c}{2}}^r s\varrho(s) \mathrm ds \leq \frac{2}{(2\ell+3)R_c} m(r) \\
&\leq \frac{2}{2\ell+3} \frac{m(r)}{r} \leq \frac{1}{2\ell + 3}. \nonumber
\end{align}
From (\ref{eq_est_mu_by_int}) we conclude that $\mu(r)$ stays bounded on $[0,R_c)$. By inspection of the definition of $m_g$, (\ref{def_g_mass}), it is clear that $m_g(r) > \frac{q(r)^2}{2r}$. Then the inequality (\ref{a_b_ineq_charged}) implies 
\begin{equation}\label{simp_est_qr}
\frac{q(r)}{r}\frac{1}{\sqrt{2}} \leq \frac 1 3 + \sqrt{\frac 1 9 + \frac{q(r)^2}{3r^2}}\quad\Leftrightarrow\quad \frac{q(r)}{r} \leq 2\sqrt{2}.
\end{equation}
It follows that $I_q(r)$ is bounded by inspecting its formula in equation (\ref{def_cons_quan}). By Lemma \ref{lem_mat_1} the matter quantities are given as continuous functions of $r$, $\mu$, and $I_q$, and thus are bounded on $[0,R_c)$, as well. This implies immediately that $n(r)$ is bounded. So we can apply Theorem \ref{theo_loc_ex} to conclude the assertion of the proposition.
\end{proof}

\section{Solutions with small particle charge} \label{sec_glo}

To prove global-in-$r$ existence of static solutions of the EVM-system equipped with a central value $\mu_c < 0$, we use a perturbation argument. To this end we consider a {\em background solution} $\left(\mu_0,\lambda_0,q_0\right)_{\mu_c}$, i.e.~a solution to the problem with the same parameter choices $k$, $\ell$, $E_0$, $L_0$, (cf.~the definition (\ref{art_ans_f}) of the ansatz function for the matter distribution function), and $\mu_c$, but $q_0=0$. All quantities belonging to this background solutions will be indexed with a ``$0$''. It is useful to summarize some known results on the background solution in a lemma. In this section we treat the massless case, too. Therefore the particle mass $m_p$ is introduced.

\begin{lemma} \label{char_back}
Let $m_p\in\{0,1\}$, $E_0=1$, and $k$, $\ell$, $L_0$ as in (\ref{param_constraints}). For each $\mu_c < 0$ there exists a static solution $\Psi^{\mu_c}_0=(\mu_0,e^{2\lambda_0},0)_{\mu_c}$ of the uncharged Einstein-Vlasov system that extends to the whole $r$-axis $[0,\infty)$.
\begin{enumerate}
\item If $m_p=1$ and the parameter $k$ satisfies additionally $k< \ell + 3/2$ there exists for each $\mu_c < 0$ a radius $0<r_{\mathrm{vac}}<\infty$ such that $[r_{\mathrm{vac}},\infty)$ is a vacuum region, i.e.~all matter quantities are zero and the metric is given by the Schwarzschild metric.
\item If $m_p\in\{0,1\}$, for each choice of model parameters $k$, $\ell$, $L_0$ where $L_0>0$ there exists $\mu_c^{\mathrm{shell}} < 0$ such that for all $\mu_c \leq \mu_c^{\mathrm{shell}}$ there exist $0<R_1(\mu_c)<R_2(\mu_c) <\infty$ such that $[0,R_1(\mu_c)]$ and $[R_2(\mu_c), \infty)$ are vacuum regions. If a solution of the Einstein-Vlasov-Maxwell system admits these radii, we call it a shell solution.
\end{enumerate}
\end{lemma}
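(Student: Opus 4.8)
The existence of the uncharged background solution $\Psi_0^{\mu_c}$ on all of $[0,\infty)$ is essentially a citation: for $q_0=0$ the reduced system \eqref{red_evm_1}--\eqref{red_evm_3} decouples the charge equation trivially ($q\equiv 0$), and what remains is the static Einstein--Vlasov system in the Schwarzschild gauge, for which global existence and asymptotic flatness with polytropic ansatz functions is known from \cite{r93,rr92} in the massive case and from \cite{aft16} in the massless case. So the plan for the opening assertion is: observe that $q_0=0$ makes $I_q\equiv 0$, quote the local existence Theorem \ref{theo_loc_ex} (with $\mathring r=0$, $\mathring\lambda=\mathring I=0$) to start the solution, and then invoke the continuation criterion Proposition \ref{prop_cont_crit} together with the \textit{uncharged} Buchdahl bound \eqref{a_b_ineq_background} — since $m_\lambda = m$ when $q\equiv 0$, the hypothesis $m_\lambda(r)/r \le 4/9 < 19/40$ holds for all $r$, so the solution never leaves $[0,\infty)$. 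Asymptotic flatness follows after the time rescaling mentioned in the remark after \eqref{bound_outer}.

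For part (1), the point is that a massive polytrope with $k < \ell + 3/2$ has matter of bounded support. The standard argument (as in \cite{r93,rr92,aft15}) is to examine the characteristic function $\gamma$ from \eqref{def_char_fun}: with $q\equiv 0$ and $m_p=1$ one has $\gamma_0(r) = -\mu_0(r) - \tfrac12\ln(1+L_0/r^2)$, matter is present precisely where $\gamma_0(r)>0$, i.e.\ where $e^{-\mu_0(r)} > \sqrt{1+L_0/r^2}$, equivalently $\mu_0(r) < -\tfrac12\ln(1+L_0/r^2)$. Since $\mu_0$ is increasing (its derivative in \eqref{red_evm_2} is nonnegative) and bounded above by $\mu_\infty=0$, while the right-hand side $-\tfrac12\ln(1+L_0/r^2)\to 0^-$ from below, one must show $\mu_0$ reaches the vacuum threshold at a finite radius. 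The condition $k < \ell+3/2$ is exactly what forces $\varrho_0$ to decay fast enough (via the bound $p_0 \le \varrho_0/(2\ell+3)$ from Remark \ref{rem_est_p_rho} fed into the TOV equation, or via a direct ODE analysis of $\gamma_0$) so that $\gamma_0$ hits $0$ in finite $r$; past that radius all matter quantities vanish identically, $q$ stays $0$, and \eqref{eieq1}--\eqref{eieq2} reduce to the vacuum equations whose unique regular solution is Schwarzschild with mass $M_0 = m_0(r_{\mathrm{vac}})$.

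For part (2), the mechanism is the inner vacuum region created by $L_0>0$: from \eqref{eq_def_r1}, $R_1(\mu_c) = \sqrt{L_0/(e^{-2\mu_c}-1)}$ is a genuine vacuum shell around the center, and $R_1(\mu_c)\to 0$ as $\mu_c\to-\infty$. The strategy is to show that the \emph{thickness} of the matter shell stays controlled while $R_1$ shrinks, so that for $\mu_c$ sufficiently negative the outer vacuum radius $R_2(\mu_c)$ is finite; the shell then lives in $[R_1(\mu_c),R_2(\mu_c)]$. One tracks $\gamma_0$ starting from its zero at $r=R_1$: just inside matter, $\gamma_0$ becomes positive, $\mu_0$ starts to grow, and by the same decay estimate as in part (1) (now not needing $k<\ell+3/2$ because the $L_0$-term $-\tfrac12\ln(1+L_0/r^2)$ by itself eventually pulls $\gamma_0$ back down once $r$ grows, provided $\mu_0$ has not grown too much — which is guaranteed for $\mu_c$ very negative because the total matter, hence the total rise of $\mu_0$, is small) one gets $\gamma_0(R_2)=0$ at some finite $R_2$. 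Beyond $R_2$ the solution is again vacuum/Schwarzschild. The existence of the threshold $\mu_c^{\mathrm{shell}}$ comes from this smallness-for-very-negative-$\mu_c$ argument; this is the part requiring the most care, since one must quantify how the mass and the growth of $\mu_0$ across the shell depend on $\mu_c$ and show they are small enough that the competition in $\gamma_0$ resolves in finite radius. This is where I expect the main technical work to lie — everything else is either a citation or a direct consequence of the monotonicity and support properties of $\gamma_0$.
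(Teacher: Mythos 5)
Your plan matches the paper's treatment: the paper proves this lemma purely by citation --- existence from \cite{r93} (independent of the particle mass), compact support in the massive case from \cite{rr13}, and the massless case from \cite{aft16} --- which is exactly the load-bearing part of your proposal. Your additional sketches (global existence via Proposition \ref{prop_cont_crit} together with the bound (\ref{a_b_ineq_background}), and the finite-extension arguments via the sign of $\gamma_0$) are plausible outlines of what those references establish, but the paper does not carry them out itself.
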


\begin{proof}
Existence is shown in \cite{r93} which holds independently of the particle mass. Compact support in the massive case is shown for example in \cite{rr13}, and for the massless case in \cite{aft16}.
\end{proof}

\begin{rem}
The existence of a vacuum region of the background solution implies that there is an interval on the radial axis such that $\gamma_0(r) < 0$ on this interval. This fact is important for the setup of the following theorem.
\end{rem}

\begin{theorem} (Existence for small particle charge) \label{theo_only_ex}\\
Let $m_p \in\{0,1\}$, $\mu_c < 0$, and $\Psi_0^{\mu_c}$ be the background solution corresponding to the central value $\mu_c<0$ and parameters $E_0=1$, $k$, $\ell$, $L_0$ satisfying (\ref{param_constraints}) and additionally $k < \ell + 3/2$. If $m_p=0$ assume in addition $\mu_c\leq\mu_c^{\mathrm{shell}}$. Define 
\begin{equation}
R_{\mathrm{vac}} := \left\{\begin{array}{ll}R_2(\mu_c)&\quad\mathrm{if}\;m_p=0\\r_{\mathrm{vac}}&\quad\mathrm{if}\;m_p=1\end{array}\right.
\end{equation}
and let $\Delta > 0$ be sufficiently small such that $\gamma_0(R_{\mathrm{vac}}+\Delta) < 0$. 
Then there exists a constant $C(R_{\mathrm{vac}},\Delta)>0$, depending on $R_{\mathrm{vac}}$ and $\Delta$, such that for all $q_0\leq C(R_{\mathrm{vac}},\Delta)$ there exists a spherically symmetric, asymptotically flat, static solution $\Psi^{\mu_c}=(\mu,e^{2\lambda},\frac{q}{r^2})_{\mu_c}$ of the EVM-system with particle charge parameter $q_0$ whose matter quantities are supported on $[0,R_{\mathrm{vac}}+\Delta]$. The constant is given by $C(R_{\mathrm{vac}},\Delta) = \frac{d}{C}$ where $d$ is defined in (\ref{eq_def_d_muc}) below and $C$ is the constant in equation (\ref{const_is_given}) below.
\end{theorem}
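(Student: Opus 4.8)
The plan is to set up a perturbation/continuity argument that tracks the charged solution against the fixed background solution $\Psi_0^{\mu_c}$ on the compact interval $[0, R_{\mathrm{vac}} + \Delta]$, and show the difference is controlled linearly by $q_0$ so that for $q_0$ small enough the charged solution still exists up to $R_{\mathrm{vac}}+\Delta$ and its matter vanishes there. First I would invoke Theorem~\ref{theo_loc_ex} to get local existence of $\Psi^{\mu_c} = (\mu, e^{2\lambda}, q/r^2)_{\mu_c}$ near $r=0$, and Proposition~\ref{prop_cont_crit} as the continuation mechanism: the solution persists as long as $m_\lambda(r)/r$ stays below (say) $\tfrac{19}{40}$. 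Since the background solution has $2m_0(r)/r \le 8/9$ but, more importantly, on $[0,R_{\mathrm{vac}}+\Delta]$ it is a fixed smooth function staying strictly below some $c_0 < 1/2$ (shrink $\Delta$ if needed, or just use that on a compact vacuum-containing interval the quotient is bounded away from $1/2$ by finiteness of $m_0$ and positivity of $r$), it suffices to show $m_\lambda(r)/r$ for the charged solution stays close to $m_0(r)/r$.

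The core is a Grönwall-type estimate on the triple $(\mu - \mu_0, e^{-2\lambda} - e^{-2\lambda_0}, I_q)$, or equivalently on the quantities appearing in the reduced system (\ref{red_evm_1})--(\ref{red_evm_3}). The key observation is that the charge $q(r)$ itself is $O(q_0)$: from (\ref{red_evm_3}), $q' = 4\pi r^2 q_0 e^\lambda k(\dots)$, so as long as $\mu$, $\lambda$, $I_q$, and hence $k$, stay in a bounded region, $q(r) \le C q_0$ on the interval, and then $I_q(r) = q_0 \int_0^r e^{\mu+\lambda} q/s^2\,ds \le C q_0^2$ — actually the charge \emph{contributions} to the $\lambda$- and $\mu$-equations (the terms $q^2/r^4$, $q^2/(2r^3)$, $\tfrac1{2r^2}\int q^2/s^2$) are all $O(q_0^2)$, while $I_q = O(q_0^2)$ feeds into $g$, $h$, $k$ through their third argument, on which by Lemma~\ref{lem_mat_1} they depend in a $C^1$ (Lipschitz on bounded sets) manner. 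Therefore, subtracting the reduced equations for $\Psi^{\mu_c}$ from those for $\Psi_0^{\mu_c}$ and using the Lipschitz dependence of $g,h,k$ on $(\mu, I_q)$ provided by Lemma~\ref{lem_mat_1}, one gets, for $\epsilon(r) := |\mu(r)-\mu_0(r)| + |e^{-2\lambda(r)} - e^{-2\lambda_0(r)}| + I_q(r)$,
\begin{equation}
\epsilon(r) \le C q_0^2 + C\int_0^r \epsilon(s)\,ds,
\end{equation}
so that $\epsilon(r) \le C q_0^2 e^{C(R_{\mathrm{vac}}+\Delta)} =: d^{-1} C q_0$ — more precisely $\epsilon(r) \le C q_0^2$ uniformly, which is $\le$ any prescribed threshold once $q_0 \le C(R_{\mathrm{vac}},\Delta)$. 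This is all bootstrap: one runs it on the maximal interval of existence within $[0, R_{\mathrm{vac}}+\Delta]$ where the \emph{a priori} bounds (needed to make $g,h,k$ Lipschitz and to keep $m_\lambda/r$ small) hold, shows the estimate improves those bounds strictly, and concludes via Proposition~\ref{prop_cont_crit} that the interval is in fact all of $[0,R_{\mathrm{vac}}+\Delta]$.

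Finally, to see the matter is supported in $[0, R_{\mathrm{vac}}+\Delta]$, I would examine the characteristic function $\gamma(r)$ of (\ref{def_char_fun}): matter vanishes at $r$ iff $\gamma(r) \le 0$. We have $\gamma(r) = \ln(1+I_q(r)) - \mu(r) - \tfrac12\ln(1 + L_0/r^2)$ versus $\gamma_0(r) = -\mu_0(r) - \tfrac12\ln(1+L_0/r^2)$, so $|\gamma(r) - \gamma_0(r)| \le I_q(r) + |\mu(r)-\mu_0(r)| \le C q_0^2$. Since by hypothesis $\gamma_0(R_{\mathrm{vac}}+\Delta) < 0$ — and in fact, by continuity and the vacuum region structure from Lemma~\ref{char_back}, $\gamma_0 < 0$ on a whole neighborhood, say $\gamma_0 \le -\eta < 0$ on $[R_{\mathrm{vac}}', R_{\mathrm{vac}}+\Delta]$ for some $\eta > 0$ — choosing $q_0$ small enough that $C q_0^2 < \eta$ forces $\gamma(r) < 0$ there, hence all matter quantities vanish on $[R_{\mathrm{vac}}+\Delta - \text{(something)}, R_{\mathrm{vac}}+\Delta]$ and in particular at $R_{\mathrm{vac}}+\Delta$; beyond that radius the solution is Schwarzschild with finite ADM mass and asymptotic flatness follows by the usual rescaling of $t$ noted in the Remark after (\ref{bound_outer}). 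The constant $C(R_{\mathrm{vac}},\Delta)$ is then read off as $d/C$ with $d$ and $C$ the quantities referenced in the statement. The main obstacle I anticipate is the bootstrap bookkeeping: one must be careful that the \emph{a priori} region $G$-type bounds on $(\mu, \lambda, I_q)$ used to get Lipschitz constants for $g,h,k$ are consistent with, and strictly improved by, the Grönwall conclusion, and that $m_\lambda/r$ genuinely stays below the Proposition~\ref{prop_cont_crit} threshold — this requires knowing the background $m_0/r$ is strictly below $1/2$ on the \emph{closed} interval $[0, R_{\mathrm{vac}}+\Delta]$, which is where the vacuum/compact-support structure and the choice of $\Delta$ enter essentially.
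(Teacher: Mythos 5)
Your proposal follows essentially the same route as the paper: a bootstrap/Gr\"onwall perturbation estimate off the uncharged background controlled linearly by $q_0$, combined with the continuation criterion of Proposition \ref{prop_cont_crit} (via $2m_0/r\leq 8/9$ keeping $m_\lambda/r$ below threshold) and the sign of the characteristic function $\gamma$ at $R_{\mathrm{vac}}+\Delta$ to obtain compact support. The only cosmetic deviations are that the paper only needs $I_q=O(q_0)$ rather than your sharper $O(q_0^2)$, and the exterior continuation is by the Reissner--Nordstr\"om metric (the total charge is nonzero) rather than Schwarzschild.
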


\begin{proof} 
Let
\begin{equation} \label{eq_def_d_muc}
d := \min \left\{ \frac{1}{2}, |\gamma_0(R_{\mathrm{vac}}+\Delta)| \right\}
\end{equation}
where $\gamma_0$ is the characteristic function, defined in (\ref{def_char_fun}), of the background solution. Let $q_0 \leq C(R_{\mathrm{vac}},\Delta)$ where the constant $C(R_{\mathrm{vac}},\Delta) > 0$ will be determined later. By Theorem \ref{theo_loc_ex} there exists $R_c>0$ such that there exists a solution $\Psi^{\mu_c}=(\mu,e^{2\lambda},\frac{q}{r^2})_{\mu_c}$ of the EVM-system on the interval $[0,R_c)$. Let henceforth $R_c$ be the {\em maximal} interval of existence, i.e. the maximal interval such that the solution exists on $[0,R_c)$. We define
\begin{equation} \label{def_r_delta}
r_\Delta := \sup \left \{ r\in[0,R_c) \,: \|\Psi^{\mu_c}(r)-\Psi_0^{\mu_c}(r)\|_1 < d \right\},
\end{equation}
where the norm is given by
\begin{equation}
\left\|(x_1,\dots,x_n)\right\|_1 = \sum_{i=1}^n \left|x_i\right|.
\end{equation}
Note that for all $r\leq r_\Delta$ we have in particular, using (\ref{trivial}),
\begin{equation} \label{m0_minus_ml}
\left|\frac{1}{1-\frac{2m_0(r)}{r}} - \frac{1}{1-\frac{2m_\lambda(r)}{r}}\right| \leq d
\end{equation}
which by virtue of (\ref{a_b_ineq_background}) implies 
\begin{equation}
\frac{m_\lambda(r)}{r} \leq \frac{1}{2}\left(1-\frac{1}{9 + d}\right) \leq \frac 12 \frac{17}{19} = \frac{17}{38}.
\end{equation}
Proposition \ref{prop_cont_crit} implies then that $r_\Delta < R_c$. \par
In the next step of this proof we show that $r_\Delta \geq R_\mathrm{vac} + \Delta$. This is done by a contradiction argument. Assume $r_\Delta < R_{\mathrm{vac}}+\Delta$. Using (\ref{m0_minus_ml}) we obtain for $r\leq r_\Delta$
\begin{equation} \label{eq_est_iq}
I_q(r) = q_0 \int_0^r e^{(\mu+\lambda)(s)} \frac{q(s)}{s^2}\mathrm ds \leq q_0 \sqrt{9+d}\, e^{\mu_{\infty,0}+d}\,d \, r=:q_0 r C_I.
\end{equation}
A straight forward argument using Lemma \ref{lem_mat_1} yields
\begin{equation}  \label{est_q}
q(r) = 4\pi q_0 \int_{0}^r s^2 e^{\lambda(s)}  k(s,\mu(s),I_q(s))  \mathrm ds \leq q_0 C r^{3+2\ell}
\end{equation}
for all $r\in[0,r_\Delta]$. Furthermore using (\ref{trivial}) we obtain through a  straight forward argument for all $r\in[0,r_\Delta]$ the estimates
\begin{align} 
\left|e^{2\lambda_0(r)} - e^{2\lambda(r)} \right| &= \left|\frac{1}{1-\frac{2m_0(r)}{r}}-\frac{1}{1-\frac{2m(r)}{r}-\frac{1}{r}\int_0^r \frac{q^2(s)}{s^2} \mathrm ds}\right| \label{estimate2}  \\
&\leq q_0^2  C r^{4+4\ell} + C \int_0^r s \left|\varrho(s)-\varrho_0(s)\right|\mathrm ds \nonumber 
\end{align}
and
\begin{align}
|\mu(r) - \mu_0(r)| &\leq \int_0^r |\mu'(s) - \mu_0'(s)| \mathrm ds \label{est_mu} \\
&\leq q_0^2  C r^{3+4\ell} + C \int_0^r s (|\varrho(r)-\varrho_0(r)| + |p(r)-p_0(r)|) \mathrm ds. \nonumber
\end{align}

Using these preliminary estimates (\ref{eq_est_iq}) and (\ref{est_mu}) we now derive an upper bound for the difference of the matter quantities $\varrho$ and $p$, and $\varrho_0$ and $p_0$, respectively. First we define for $r\in[0,R_{\mathrm{vac}}+\Delta]$ the function
\begin{multline} \label{cgh}
C_{gh}(r) := \sup\{| \partial_u g(r,u,I)| + | \partial_I g(r,u,I)| + | \partial_u h(r,u,I)| + | \partial_I h(r,u,I)|\,:\,\\u\in[\mu_c-d,\mu_{\infty,0}+d], I\in[0,q_0(R_{\mathrm{vac}}+\Delta)C_I]\}.
\end{multline}
Lemma \ref{lem_mat_1} implies that 
\begin{equation}
\sup_{r\in[0,R_{\mathrm{vac}}+\Delta]}C_{gh}(r) < \infty. 
\end{equation}
Furthermore, the lemma implies that the supremum in (\ref{cgh}) will be attained if and only if we set $u = \mu_c-d$ and $I = q_0(R_{\mathrm{vac}}+\Delta)C_I$. By virtue of the mean value theorem we have
\begin{equation}
|\varrho(r)-\varrho_0(r)| + |p(r)-p_0(r)| \leq C_{gh}(r) (|\mu(r)-\mu_0(r)| + I_q(r)).
\end{equation}
for all $r\in [0,r_\Delta]$. Inserting (\ref{est_mu}) and (\ref{eq_est_iq}) we obtain
\begin{align}
&|\varrho(r)-\varrho_0(r)| + |p(r)-p_0(r)| \\
&\qquad \leq C_{gh}(r) \left(q_0^2 C r^{3+4\ell} + C \int_0^r s (|\varrho(r)-\varrho_0(r)| + |p(r)-p_0(r)|) \mathrm ds + q_0 r C_I \right) \nonumber \\
&\qquad \leq q_0C r + C \int_0^r s (|\varrho(r)-\varrho_0(r)| + |p(r)-p_0(r)|) \mathrm ds \nonumber
\end{align}
for all $r\in[0,r_\Delta]$. Since $q_0$ is chosen to be smaller than 1 we can assume $q_0^2\leq q_0$ and since $\ell > -1/2$ we can absorb $r^{2+4\ell}$ by the constant $C$. \par
By a Gr\"onwall argument, estimating $r\leq R_{\mathrm{vac}} + \Delta$ if necessary, we then obtain
\begin{equation} \label{bound_matter}
|p(s)-p_0(s)|+|\varrho(s)-\varrho_0(s)| \leq q_0 C.
\end{equation}
Together with (\ref{estimate2}) and (\ref{est_q}) this gives
\begin{equation} \label{bound_phi}
\|\Psi^{\mu_c}(r_\Delta)-\Psi_0^{\mu_c}(r_\Delta)\|_1 \leq q_0 C,
\end{equation}
where we recall that $C$ denotes a constant depending on $R_{\mathrm{vac}}+\Delta$, but not on $r_\Delta$. Now the desired contradiction is obtained if the bound $C(R_{\mathrm{vac}}, \Delta)$ of $q_0$ is smaller than $d/C$. We conclude $r_\Delta \geq R_{\mathrm{vac}}+\Delta$. \par 
Finally, in the third step of this proof we check that $|\gamma(r)-\gamma_0(r)| < d$ for all $r\in[0,R_{\mathrm{vac}}+\Delta]$ if $q_0$ is chosen sufficiently small. This is an immediate consequence of (\ref{eq_est_iq}),
\begin{align} 
|\gamma(r) - \gamma_0(r)| &\leq |\mu_0(r) - \mu(r)| + \ln\left(I_q(r)+1\right) \nonumber \\
& \leq |\mu_0(r) - \mu(r)| + I_q(r) \leq  q_0 C + q_0C_I (R_{\mathrm{vac}}+\Delta) \nonumber \\
& \leq q_0 C. \label{const_is_given}
\end{align}
Take now $d/C$ as the bound $C(R_{\mathrm{vac}}, \Delta)$ for $q_0$. This implies that at $r = R_{\mathrm{vac}} + \Delta$, and we have 
\begin{align}
\gamma(R_{\mathrm{vac}} + \Delta) &\leq \gamma_0(R_{\mathrm{vac}} + \Delta) + |\gamma(R_{\mathrm{vac}} + \Delta) - \gamma_0(R_{\mathrm{vac}} + \Delta)| \label{eq_bound_gammas} \\
&\leq \gamma_0(R_{\mathrm{vac}} + \Delta) + d = 0.\nonumber
\end{align}
Thus $\varrho(R_{\mathrm{vac}}+\Delta) = p(R_{\mathrm{vac}}+\Delta)=0$. Thus the metric can be continued with the Reissner-Nordstr\"{o}m metric and the matter quantities with constant zero. The solution is at least $C^1$, depending on the choice of $\ell$ the regularity can be higher. The regularity of the matter quantities follows by Lemma \ref{lem_mat_1} and the regularity of the metric functions and $q$ is apparent from the differential equations (\ref{red_evm_1}) -- (\ref{red_evm_3}).
\end{proof}

\section{Bounds on the charge density} \label{sec_qr_small}

In the remainder of this article  we prove that the {\em thin shell limit} for solutions of the EVM-system exists. This means, that for an arbitrary, fixed particle charge parameter $q_0 \geq 0$ there exists a solution for every central value $\mu_c < 0$ below a certain upper bound called $\mu_c^{\mathrm{shell}}$. The matter quantities of these solutions are supported on a shell and as $\mu_c\to-\infty$ this shell becomes infinitesimally thin. The central observation for this discussion is that the charge to mass ratio, $q(r)/m(r)$, goes to zero, as $\mu_c \to -\infty$, regardless of the choice of the particle charge parameter $q_0$. \par
In the remainder of this article we assume $L_0 > 0$, so in particular we have $R_1>0$ (cf.~(\ref{eq_def_r1})). The strategy in this section to obtain the result $q(r) \leq C_q r^2$ (Proposition \ref{lem_qr_r2}) is a bootstrap argument. We start with the bootstrap assumption $q(r)/r \leq 1/2$ on an interval $[0,r_*]$ and improve it later. At the same time the existence of the solution on the interval in question has to be guaranteed.

\begin{lemma} \label{lem_q0s_qrl12}
Let $q_0>0$ and $\mu_c<0$ be fixed. Further let $R_c>0$ be the maximal radius of existence, i.e.~$[0,R_c)$ is the maximal interval the solution exists on. Define
\begin{equation}
r_* := \sup\left\{r\in[0,R_c)\,:\,\frac{q(s)}{s}\leq \frac 1 2\,\mathrm{for}\,\mathrm{all}\,s\leq r\right\}.
\end{equation}
Then there exists $\delta > 0$ such that $r_*\leq R_c - \delta$.
\end{lemma}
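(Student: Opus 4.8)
The plan is to use the defining property of $r_*$ — that $q(s)/s\le\tfrac12$ on $[0,r_*)$ — to keep the mass ratio $m_\lambda(r)/r$ strictly below $\tfrac{19}{40}$ on that interval; the alternative $r_*=R_c$ would then let the continuation criterion, Proposition~\ref{prop_cont_crit}, extend the solution past $R_c$, contradicting maximality of the interval of existence. Once $r_*<R_c$ is established, any $\delta\in(0,R_c-r_*)$ works. Throughout we may assume $R_c<\infty$, since for $R_c=\infty$ the asserted bound holds trivially.

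First I would record that $q(s)/s\le\tfrac12$ for all $s\in[0,r_*)$. Since $L_0>0$, the interval $[0,R_1]$ is a vacuum region and $q$ vanishes identically near the origin, so $s\mapsto q(s)/s$ is continuous on $[0,R_c)$; moreover, for any $s_0<r_*$ the supremum defining $r_*$ yields some $r>s_0$ lying in the set $\{r:q(s)/s\le\tfrac12\ \text{for all}\ s\le r\}$, whence $q(s_0)/s_0\le\tfrac12$. In particular, if $r_*=R_c$ then $q(s)/s\le\tfrac12$ on all of $[0,R_c)$.

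Next, using $m_g(r)=m_\lambda(r)+q(r)^2/(2r)$ together with the charged Buchdahl-type inequality (\ref{a_b_ineq_charged}), I would bound, for $r\in[0,r_*)$,
\[
\frac{m_\lambda(r)}{r}=\frac{m_g(r)}{r}-\frac{q(r)^2}{2r^2}\le\left(\frac13+\sqrt{\frac19+\frac{q(r)^2}{3r^2}}\right)^{2}-\frac{q(r)^2}{2r^2}=:\phi\left(\frac{q(r)^2}{r^2}\right).
\]
A short computation gives $\phi'(t)=\dfrac{1/9}{\sqrt{1/9+t/3}}-\dfrac16$, which is strictly positive for $t\in[0,\tfrac14]$ (it vanishes only at $t=1$), so $\phi$ is increasing there; together with $q(r)/r\le\tfrac12$ this gives $m_\lambda(r)/r\le\phi(\tfrac14)=\dfrac{13+8\sqrt7}{72}<\dfrac{19}{40}$. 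Hence $m_\lambda(r)/r\le\tfrac{19}{40}$ on all of $[0,r_*)$.

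Finally, suppose $r_*=R_c$. Then by the two previous steps $m_\lambda(r)/r\le\tfrac{19}{40}$ for every $r\in[0,R_c)$, so Proposition~\ref{prop_cont_crit} applied with $C=\tfrac{19}{40}$ extends the solution to $[0,R_c+\delta_0]$ for some $\delta_0>0$, contradicting that $[0,R_c)$ is the maximal interval of existence. Thus $r_*<R_c$, and $\delta:=\tfrac12(R_c-r_*)>0$ satisfies $r_*\le R_c-\delta$. The only point requiring care — and the reason the threshold $\tfrac{19}{40}$ is built into Proposition~\ref{prop_cont_crit} — is the sharp numerical inequality $\dfrac{13+8\sqrt7}{72}\approx 0.4745<0.475=\dfrac{19}{40}$ (equivalently $320\sqrt7<848$); the rest is the elementary monotonicity of $\phi$ and the closedness observation for the set defining $r_*$.
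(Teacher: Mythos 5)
Your proof is correct and follows essentially the same route as the paper: the bootstrap bound $q(r)/r\le\tfrac12$ is combined with the charged Buchdahl-type inequality (\ref{a_b_ineq_charged}) to force $m_\lambda(r)/r\le\tfrac{19}{40}$ (the paper phrases this as $e^{-2\lambda(r)}\ge\tfrac{1}{20}$ via (\ref{trivial}), checking the same monotonicity in $a=q(r)/r$ that you verify for $\phi$), after which Proposition \ref{prop_cont_crit} gives the conclusion. Your contradiction framing versus the paper's direct appeal to the continuation criterion is an immaterial difference.
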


\begin{proof}
We have $q(R_1)/R_1 = 0$. So by continuity $r_*>R_1$. We have
\begin{equation}
e^{-2\lambda(r)} = 1 + \frac{q(r)^2}{r^2}-\frac{2m_g(r)}{r}.
\end{equation}
We introduce the variable $a=q(r)/r$. The inequality (\ref{a_b_ineq_charged}) implies then
\begin{equation}
e^{-2\lambda(r)} \geq 1 + a^2 - 2 \left( \frac 1 3 + \sqrt{\frac{1}{9} + \frac{a^2}{3}}\right)^2.
\end{equation}
The right hand side is a function that is decreasing in $a$ for $a\in[0,\frac 1 2]$. So, since $a=q(r)/r\leq 1/2$ on the interval $[0,r_*]$, we can calculate the lower bound
\begin{equation} \label{eq_l_b_e2l_nnf}
e^{-2\lambda(r)} \geq \frac{1}{20}. 
\end{equation}
This implies $\frac{m_\lambda(r)}{r} \leq \frac{19}{40}$ for all $r\in[0,r_*]$. Then Proposition \ref{prop_cont_crit} implies the assertion.
\end{proof}

\begin{lemma} \label{lem_gamma_bound}
We have for all $r\in[R_1,r_*]$
\begin{equation}
 \gamma(r) \leq \frac{q_0\sqrt{5}}{\sqrt{L_0}} r + \frac{7}{2} \ln\left(\frac{r}{R_1}\right).
\end{equation}
\end{lemma}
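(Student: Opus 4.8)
The plan is to bound each of the three terms in the definition (\ref{def_char_fun}) of $\gamma(r) = \ln(1+I_q(r)) - \mu(r) - \tfrac12\ln(1+L_0/r^2)$ separately on $[R_1,r_*]$, using the a priori control $q(s)/s \le 1/2$ on $[0,r_*]$ together with the metric bound $e^{-2\lambda(r)} \ge 1/20$ from (\ref{eq_l_b_e2l_nnf}) in Lemma \ref{lem_q0s_qrl12}. First I would estimate $I_q(r)$. From its definition (\ref{def_cons_quan}), $I_q(r) = q_0\int_0^r e^{(\mu+\lambda)(s)} q(s)/s^2\,\mathrm ds$; since $q(s)/s \le 1/2$ we have $q(s)/s^2 \le 1/(2s)$ — but this is not integrable at $0$, so instead I use that $q(s)=0$ on $[0,R_1]$ and on $[R_1,r_*]$ bound $q(s)/s^2 \le q(s)/(s R_1) \le 1/(2R_1)$, or more crudely estimate using $q_0$ directly. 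Actually the cleanest route: on $[R_1,r_*]$ write $q(s)/s^2 \le (1/2)\cdot(1/s) \le (1/2)(1/R_1)$ only near $R_1$; better, since we want a bound linear in $r$, use $\int_{R_1}^r q(s)/s^2\,\mathrm ds \le \int_{R_1}^r \tfrac1{2s}\,\mathrm ds = \tfrac12\ln(r/R_1)$, which already contributes a logarithmic term. For the exponential weight, $e^{\lambda(s)} \le \sqrt{20}$ by (\ref{eq_l_b_e2l_nnf}), and $e^{\mu(s)} \le 1$ since $\mu < 0$ (as $\mu_c<0$ and $\mu$ stays negative in the relevant range, or one simply uses $e^{\mu}\le e^{\mu_\infty}$ appropriately). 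Combining, $I_q(r) \le q_0\sqrt{20}\cdot\tfrac12\ln(r/R_1) = q_0\sqrt5\,\ln(r/R_1)$, and then $\ln(1+I_q(r)) \le I_q(r) \le q_0\sqrt5\,\ln(r/R_1)$.

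Next I would bound $-\mu(r)$. Since $\mu' > 0$ (the bracket in (\ref{red_evm_2}) is nonnegative once the charge terms are controlled — specifically $-q^2/(2r^3) + \tfrac1{2r^2}\int_0^r q^2/s^2\,\mathrm ds = f_q$-type term, but combined with the positive pressure and mass terms $\mu'\ge 0$ holds here because $f_q \ge -q(r)^2/(2r^3) \ge -1/(8r)$ is dominated; alternatively use that on a shell the region $[0,R_1]$ is vacuum-Schwarzschild-like so $\mu$ is controlled there), $\mu$ is increasing, hence $-\mu(r) \le -\mu(r_*)$... no — rather $-\mu(r) \le -\mu_c$ is the wrong sign. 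Let me reconsider: $\mu$ increasing gives $\mu(r) \ge \mu(R_1) \ge \mu_c$, so $-\mu(r) \le -\mu_c$, which is a constant, not giving a term linear in $r$. The point must instead be that $-\mu(r)$ is bounded above by $-\mu(R_1)$ plus the integral of $-\mu'$ over $[R_1,r]$, but $-\mu' \le 0$, so $-\mu(r) \le -\mu(R_1)$. To turn this into the stated linear-in-$r$ bound one uses $-\mu(R_1) = -\mu_c + \tfrac12\ln(1+L_0/R_1^2)$... hmm. Actually the key identity is $\gamma(R_1) = 0$ by definition of $R_1$ in (\ref{eq_def_r1}): there $\ln(1+I_q(R_1)) - \mu(R_1) - \tfrac12\ln(1+L_0/R_1^2) = 0$ since $I_q(R_1)=0$ and $R_1$ is exactly where $-\mu_c = \tfrac12\ln(1+L_0/R_1^2)$. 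So I would write $\gamma(r) = \gamma(r) - \gamma(R_1) = [\ln(1+I_q(r)) - \ln(1+I_q(R_1))] - [\mu(r)-\mu(R_1)] - \tfrac12[\ln(1+L_0/r^2) - \ln(1+L_0/R_1^2)]$. The first bracket is $\le I_q(r) \le q_0\sqrt5\,\ln(r/R_1)$ as above; the third bracket is nonnegative (since $1+L_0/r^2$ decreases in $r$) so $-\tfrac12[\cdots] \ge 0$, contributing $\le \tfrac12\ln(1+L_0/R_1^2) - \tfrac12\ln(1+L_0/r^2) \le \tfrac12\ln(r^2/R_1^2) = \ln(r/R_1)$ using $1+L_0/r^2 \ge L_0/r^2$ and $1+L_0/R_1^2 \le C/R_1^2$... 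I need to be a little careful but morally this gives a $\ln(r/R_1)$ term with coefficient around $1$. For the middle bracket, $-[\mu(r)-\mu(R_1)] = -\int_{R_1}^r \mu'(s)\,\mathrm ds \le 0$ if $\mu'\ge0$, so it contributes nothing — or if $\mu'$ can be slightly negative due to the $-q^2/(2r^3)$ term, one bounds $|\mu'(s)| \le$ (something)$/s$ times constants and gets another logarithmic contribution.

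Collecting: $\gamma(r) \le q_0\sqrt5\,\ln(r/R_1) + (\text{const})\ln(r/R_1)$, and the claimed bound has $\tfrac{q_0\sqrt5}{\sqrt{L_0}}r$ in place of $q_0\sqrt5\ln(r/R_1)$ — so actually the cleanest estimate for $I_q$ should NOT extract a logarithm but keep things linear: on $[R_1,r]$, $q(s)/s^2 \le \tfrac1{2s} \le \tfrac1{2R_1}$, hence $\int_{R_1}^r q(s)/s^2\,\mathrm ds \le (r-R_1)/(2R_1) \le r/(2R_1)$, and with $e^{(\mu+\lambda)(s)} \le \sqrt{20}$ we get $I_q(r) \le q_0\sqrt{20}\, r/(2R_1) = q_0\sqrt5\, r/R_1$. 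Since $R_1 = \sqrt{L_0/(e^{-2\mu_c}-1)} \ge \sqrt{L_0}$ only if $e^{-2\mu_c}-1 \le 1$; in general $1/R_1 = \sqrt{(e^{-2\mu_c}-1)/L_0}$, and for the bound to read $q_0\sqrt5/\sqrt{L_0}\cdot r$ one must be using $1/R_1 \le 1/\sqrt{L_0}$ — wait that requires $e^{-2\mu_c}-1\le1$, i.e. large $\mu_c$, not the regime of interest. Let me instead bound $q(s)/s^2$ without reference to $R_1$: we cannot, since $q$ vanishes only on $[0,R_1]$. The honest bound is $q(s)/s^2 = (q(s)/s)(1/s) \le (1/2)(1/s)$, and $\int_{R_1}^r (1/s)\,\mathrm ds$; but to get $r/\sqrt{L_0}$ note $1/s \le 1/R_1 = \sqrt{e^{-2\mu_c}-1}/\sqrt{L_0} \le 1/\sqrt{L_0}$ fails... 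Hmm, I will instead bound $q(s)/s^2 \le 1/(2s) \le \sqrt{1+L_0/s^2}/(2\sqrt{L_0})$ using $1/s \le \sqrt{1+L_0/s^2}/\sqrt{L_0}$, which IS valid, and then... that introduces $\sqrt{1+L_0/s^2}$ under the integral, not obviously linear. \textbf{The main obstacle is precisely this bookkeeping:} getting the $I_q$-contribution into the form $\tfrac{q_0\sqrt5}{\sqrt{L_0}}r$ requires the right elementary inequality relating $q(s)/s^2$, the factor $e^{(\mu+\lambda)(s)}$, and $s$; I expect the intended route is $e^{(\mu+\lambda)(s)} \le e^{\mu_\infty}e^{\lambda(s)} \le \sqrt{20}$ (or even $\le\sqrt{20}$ via (\ref{eq_l_b_e2l_nnf}) and $e^{\mu}\le1$), together with $q(s)/s^2 \le \tfrac1{2s} \le \tfrac1{2\sqrt{L_0}}\sqrt{1+L_0/s^2} \le \tfrac1{2\sqrt{L_0}}\cdot(\text{bounded})$, integrated to length $r$. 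Meanwhile the two logarithmic terms — one from $-\mu(r)+\mu(R_1)$ after bounding $\mu'$ pointwise by a multiple of $1/r$ using (\ref{eq_l_b_e2l_nnf}), Remark \ref{rem_est_p_rho}, and (\ref{simp_est_qr}); the other from the $\tfrac12\ln$ term via $\gamma(R_1)=0$ — combine into at most $\tfrac72\ln(r/R_1)$, which is where the constant $7/2$ comes from (matching the exponents $\ell+3/2$-type powers that appear when one integrates the matter terms). I would organize the write-up as: (i) recall $\gamma(R_1)=0$ and $e^{-2\lambda}\ge1/20$ on $[R_1,r_*]$; (ii) bound $I_q(r)\le \tfrac{q_0\sqrt5}{\sqrt{L_0}}r$; (iii) bound $\mu(R_1)-\mu(r) \le 3\ln(r/R_1)$ via the TOV/$\mu'$-equation and the crude estimates on $p,\varrho,q$; (iv) bound $\tfrac12\ln(1+L_0/R_1^2)-\tfrac12\ln(1+L_0/r^2)\le \tfrac12\ln(r/R_1)$; (v) add, using $\ln(1+x)\le x$, to obtain the claim.
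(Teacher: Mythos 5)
Your overall architecture is sound and is in substance the same as the paper's: the paper computes $\gamma'$ from (\ref{eq_formula_gammap}), bounds it pointwise, and integrates from $R_1$ (where $\gamma(R_1)=0$) to $r$, which is your term-by-term bound on $\gamma(r)-\gamma(R_1)$ in disguise, with the same inputs ($q(s)/s\leq\frac12$ on $[0,r_*]$ and $e^{2\lambda}\leq 20$ from (\ref{eq_l_b_e2l_nnf})). However, the step you yourself flag as ``the main obstacle'' --- getting the $I_q$-contribution into the linear form $\frac{q_0\sqrt5}{\sqrt{L_0}}\,r$ rather than a logarithm --- is a genuine gap, and none of the routes you sketch closes it. The missing ingredient is that wherever $\gamma(s)\geq 0$, the definition (\ref{def_char_fun}) of $\gamma$ is \emph{equivalent} to
\[
\frac{e^{\mu(s)}}{1+I_q(s)} \;\leq\; \frac{s}{\sqrt{s^2+L_0}},
\]
and it is exactly this factor $s/\sqrt{s^2+L_0}$ that converts your non-integrable bound $q(s)/s^2\leq\frac{1}{2s}$ into
\[
\frac{\mathrm d}{\mathrm ds}\ln\bigl(1+I_q(s)\bigr)=\frac{q_0\,e^{(\mu+\lambda)(s)}}{1+I_q(s)}\,\frac{q(s)}{s^2}
\;\leq\; \frac{q_0\,e^{\lambda(s)}}{\sqrt{s^2+L_0}}\,\frac{q(s)}{s}\;\leq\;\frac{q_0\sqrt{20}}{2\sqrt{L_0}}=\frac{q_0\sqrt5}{\sqrt{L_0}},
\]
which integrates to the linear term; this is precisely (\ref{est_ddr_1piq}) in the paper. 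Your substitute attempts ($e^{\mu}\leq 1$ combined with $\sqrt{1+L_0/s^2}\leq$ ``bounded'') fail because $\sqrt{1+L_0/s^2}$ is of order $\sqrt{L_0}/R_1$ near $s=R_1$, and $R_1$ is small in the regime where the lemma is used; and your alternative logarithmic bound $I_q(r)\leq q_0\sqrt5\,\ln(r/R_1)$ proves a different inequality from the one stated.

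Your constant bookkeeping for the logarithmic part also does not reproduce $\frac72$. The correct split is: the only positive contribution of the $\mu'$-bracket in (\ref{red_evm_2}) to $\gamma'$ is $e^{2\lambda(r)}q(r)^2/(2r^3)\leq 20\cdot\frac{r^2/4}{2r^3}=\frac{5}{2r}$, contributing $\frac52\ln(r/R_1)$; the term $\frac{1}{r}\frac{L_0}{r^2+L_0}\leq\frac1r$ (equivalently your bracket (iv), which is bounded by $\ln(r/R_1)$, not $\frac12\ln(r/R_1)$, since $(1+L_0/R_1^2)/(1+L_0/r^2)$ is of order $r^2/R_1^2$ when $r,R_1\ll\sqrt{L_0}$) contributes the remaining $1\cdot\ln(r/R_1)$; together $\frac52+1=\frac72$. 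Your proposed split of $3$ and $\frac12$ is not supported by any estimate you derive.
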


\begin{proof}
Let $r\in[R_1,r_*]$. Recall the definition (\ref{def_char_fun}) of the function $\gamma(r)$,
\begin{equation*}
\gamma(r) = \ln\left(1+I_q(r)\right) - \mu(r) - \frac 1 2 \ln\left(1+\frac{L_0}{r^2}\right).
\end{equation*}
We calculate
\begin{equation} \label{eq_formula_gammap}
\gamma'(r) = \frac{\mathrm d}{\mathrm dr} \ln(1+I_q(r)) - e^{2\lambda(r)}\left(4\pi rp(r)+\frac{m_\lambda(r)}{r^2}-\frac{q(r)^2}{2r^3}\right) + \frac{1}{r}\frac{L_0}{r^2+L_0},
\end{equation}
where we used the formula (\ref{red_evm_2}) for $\mu'(r)$. From the definition of $\gamma(r)$ (cf.~equation (\ref{def_char_fun})) one sees that $\gamma(r)\geq0$ is equivalent to
\begin{equation}
e^{\mu(r)} \leq \frac{1}{\sqrt{r^2+L_0}}r(1+I_q(r)).
\end{equation}
So we have by $\frac{q(r)}{r}\leq \frac 1 2$ and the bound (\ref{eq_l_b_e2l_nnf})
\begin{align} \label{est_ddr_1piq}
\frac{\mathrm d}{\mathrm dr}\ln\left(1+I_q(r)\right) &= \frac{q_0}{1+I_q(r)} e^{(\mu+\lambda)(r)} \frac{q(r)}{r^2}\\
&\leq \frac{q_0}{\sqrt{r^2+L_0}}e^{\lambda(r)} \frac{q(r)}{r} \leq \frac{q_0\sqrt{5}}{\sqrt{L_0}}. \nonumber
\end{align}
In the second summand (\ref{eq_formula_gammap}) we can drop everything but the only positive term to obtain an estimate from above, i.e.
\begin{align}
&- e^{2\lambda(r)}\left(4\pi rp(r)+\frac{m(r)}{r^2}-\frac{q(r)^2}{2r^3}+\frac{1}{2r^2}\int_0^r\frac{q(s)^2}{s^2}\mathrm ds\right)\leq
e^{2\lambda(r)}\frac{q(r)^2}{2r^3} \leq  \frac{5}{2r}.
\end{align}
We conclude that for all $r\in[R_1,r_*]$
\begin{equation} \label{eq_a_bound_gp}
\gamma'(r) \leq  \frac{q_0\sqrt{5}}{\sqrt{L_0}} + \frac{5}{2r} + \frac{L_0}{(r^2+L_0)r} \leq  \frac{q_0\sqrt{5}}{\sqrt{L_0}} + \frac{7}{2r}.
\end{equation}
Thus we have for $r\in[R_1,r_*]$
\begin{equation}
\gamma(r) =\int_{R_1}^r \gamma'(s) \mathrm ds \leq \frac{q_0\sqrt{5}}{\sqrt{L_0}} r + \frac{7}{2} \ln\left(\frac{r}{R_1}\right),
\end{equation}
and the claim is shown.
\end{proof}

In the following part it is convenient to have the shorthand
\begin{equation}
\kappa:=\ell+k+\frac 3 2\geq 1.
\end{equation}

\begin{lemma} \label{lem_est_mat_c}
Let $\mu_c < -\ln(2)/2$, arbitrary and $D_{\mu_c} :=\{r\in[R_1,\min\{2R_1,r_*\}]\,:\,\gamma(r)\geq 0\}$. Define $z(r):=\varrho(r) - p(r) - 2p_T(r)$. There exist positive constants $C_\varrho^{\mathrm l}$, $C_p^{\mathrm l}$, $C_z^{\mathrm l}$, $C_{\varrho_q}^{\mathrm l}$, $C_\varrho^{\mathrm u}$, $C_p^{\mathrm u}$, $C_z^{\mathrm u}$, $C_{\varrho_q}^{\mathrm u}$ which are independent of the choice of $\mu_c$ such that for all $r\in D_{\mu_c}$ we have
\begin{align*}
C_\varrho^{\mathrm l} \frac{\gamma(r)^{\kappa}}{r^4} \leq& \varrho(r) \leq C_\varrho^{\mathrm u} \frac{\gamma(r)^{\kappa}}{r^4},\qquad C_p^{\mathrm l} \frac{\gamma(r)^{\kappa+1}}{r^4} \leq p(r) \leq C_p^{\mathrm u} \frac{\gamma(r)^{\kappa+1}}{r^4},\\
C_z^{\mathrm l} \frac{\gamma(r)^{\kappa}}{r^2} \leq& z(r) \leq C_z^{\mathrm u} \frac{\gamma(r)^{\kappa}}{r^2},\qquad C_{\varrho_q}^{\mathrm l} q_0 e^{\lambda(r)} \frac{\gamma(r)^{\kappa}}{r^3} \leq \varrho_q \leq C_{\varrho_q}^{\mathrm u} q_0 e^{\lambda(r)} \frac{\gamma(r)^{\kappa}}{r^3}.
\end{align*}
\end{lemma}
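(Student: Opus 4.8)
The strategy is to rewrite every matter quantity, after one explicit substitution in the integrals of Lemma \ref{lem_mat_1}, as a fixed power of $r$ times $e^{k\mu(r)}$ times a fixed power of $1+L_0/r^2$ times an elementary integral whose whole $\gamma$-dependence is a power of $e^{\gamma(r)}-1$; then each remaining ingredient turns out to be comparable, uniformly in $\mu_c$, to a power of $r$ or of $\gamma(r)$ on $D_{\mu_c}$. Throughout, write $X\asymp Y$ to mean $c_1Y\le X\le c_2Y$ with positive constants independent of $\mu_c$.

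\emph{Step 1 (a common normal form).} From (\ref{def_char_fun}) one has $e^{-\mu(r)}(1+I_q(r))=e^{\gamma(r)}\sqrt{1+L_0/r^2}$, so the upper limit of the integrals in (\ref{expr_fct_g})--(\ref{expr_fct_k}) equals $e^{\gamma(r)}$ times the lower limit. For $z$ one first records, from the integral representations of $\varrho,p,p_T$, that $z(r)=\varrho-p-2p_T=\frac{\pi}{r^2}\int_0^\infty\!\int_{-\infty}^\infty f(r,w,L)\,(1+w^2+L/r^2)^{-1/2}\,\mathrm dw\,\mathrm dL$, which, by the reduction producing $g,h,k$ (substitution $w\mapsto\varepsilon$, then integrating out $[L-L_0]_+^\ell$), equals $c'r^{2\ell}\int_{\sqrt{1+L_0/r^2}}^{e^{-\mu}(1+I_q)}(1-\varepsilon e^{\mu}+I_q)^k(\varepsilon^2-(1+L_0/r^2))^{\ell+1/2}\,\mathrm d\varepsilon$ for a constant $c'$, i.e.\ the integrand of $g$ with the factor $\varepsilon^2$ dropped. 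The substitution $\varepsilon=\sqrt{1+L_0/r^2}\,s$ — which sends $1-\varepsilon e^{\mu}+I_q$ to $e^{\mu}\sqrt{1+L_0/r^2}(e^{\gamma}-s)$ and $\varepsilon^2-(1+L_0/r^2)$ to $(1+L_0/r^2)(s^2-1)$ — brings all four quantities to the shape
\begin{equation*}
(\text{quantity})=c\,r^{2\ell}\,e^{k\mu(r)}\Bigl(1+\tfrac{L_0}{r^2}\Bigr)^{a}J(\gamma(r)),\qquad J(\gamma)=\int_1^{e^{\gamma}}(e^{\gamma}-s)^k(s^2-1)^{\beta}P(s)\,\mathrm ds,
\end{equation*}
with $(a,\beta,P)=(\tfrac k2+\ell+2,\ \ell+\tfrac12,\ s^2)$ for $\varrho$, $(\tfrac k2+\ell+2,\ \ell+\tfrac32,\ 1)$ for $p$, $(\tfrac k2+\ell+\tfrac32,\ \ell+\tfrac12,\ s)$ for $k(r,\cdot,\cdot)$ (recall $\varrho_q=q_0e^{\lambda}k(r,\mu,I_q)$), and $(\tfrac k2+\ell+1,\ \ell+\tfrac12,\ 1)$ for $z$. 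A last substitution $s=1+(e^{\gamma}-1)\sigma$, $\sigma\in[0,1]$, factors $J(\gamma)=(e^{\gamma}-1)^{k+\beta+1}\tilde J(\gamma)$ with $\tilde J(\gamma)=\int_0^1(1-\sigma)^k\sigma^{\beta}(2+(e^{\gamma}-1)\sigma)^{\beta}P(1+(e^{\gamma}-1)\sigma)\,\mathrm d\sigma$ continuous on $[0,\infty)$ and strictly positive ($\tilde J(0)=2^{\beta}\int_0^1(1-\sigma)^k\sigma^{\beta}\,\mathrm d\sigma>0$). Here $k+\beta+1=\kappa$ for $\varrho$, $z$ and $k(r,\cdot,\cdot)$, while $k+\beta+1=\kappa+1$ for $p$.

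\emph{Step 2 ($\gamma$ is uniformly bounded on $D_{\mu_c}$).} Since $D_{\mu_c}\subseteq[R_1,\min\{2R_1,r_*\}]$, Lemma \ref{lem_gamma_bound} gives $0\le\gamma(r)\le\frac{q_0\sqrt5}{\sqrt{L_0}}\cdot2R_1+\frac72\ln2$ on $D_{\mu_c}$, and $\mu_c<-\tfrac12\ln2$ forces $e^{-2\mu_c}-1>1$, hence $R_1<\sqrt{L_0}$ by (\ref{eq_def_r1}). So $\gamma$ is bounded on $D_{\mu_c}$ by a constant $\Gamma_*$ depending only on $q_0,L_0$. Consequently $\tilde J(\gamma(r))\asymp1$, and, since $x\le e^{x}-1\le\frac{e^{\Gamma_*}-1}{\Gamma_*}x$ for $x\in[0,\Gamma_*]$, also $(e^{\gamma(r)}-1)^{p}\asymp\gamma(r)^{p}$ for every fixed $p>0$.

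\emph{Step 3 (turning $e^{\mu(r)}$ and $1+L_0/r^2$ into powers of $r$).} The interval $[0,R_1]$ is a vacuum region, so $\varrho=p=q\equiv0$ there and hence $\mu\equiv\mu_c$ on $[0,R_1]$ by (\ref{red_evm_2}). On $[R_1,2R_1]\supseteq D_{\mu_c}$ one has $e^{-2\lambda}\ge\tfrac1{20}$ and $m_\lambda(r)/r\le\tfrac{19}{40}$ (established in the proof of Lemma \ref{lem_q0s_qrl12}); feeding these, $q(r)/r\le\tfrac12$ and Remark \ref{rem_est_p_rho} into (\ref{red_evm_2}) bounds $|\mu(r)-\mu_c|$ by a $\mu_c$-independent constant, since the integration runs over an interval of length $\le R_1$ while every term of $\mu'$ is $O(1/s)$ on $[R_1,2R_1]$ (for instance $4\pi\int_{R_1}^r s\,p\le\frac1{2\ell+3}\,\frac{m(r)}{R_1}\le\frac{19}{20(2\ell+3)}$). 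Moreover $r\in[R_1,2R_1]$, $R_1^2=L_0/(e^{-2\mu_c}-1)$ and $e^{-2\mu_c}-1\ge\tfrac12e^{-2\mu_c}$ give $r\asymp\sqrt{L_0}\,e^{\mu_c}$ and $\tfrac14\le L_0/r^2$, whence $L_0/r^2\le1+L_0/r^2\le5L_0/r^2$. Therefore, uniformly in $\mu_c$ on $D_{\mu_c}$, $e^{k\mu(r)}\asymp e^{k\mu_c}\asymp r^{k}$ and $(1+L_0/r^2)^{a}\asymp(L_0/r^2)^{a}\asymp r^{-2a}$.

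\emph{Step 4 (assembling, and the main difficulty).} Inserting Steps 2 and 3 into the normal form of Step 1, the $r$-powers combine to $r^{2\ell}\cdot r^{k}\cdot r^{-2a}$: this equals $r^{-4}$ when $a=\tfrac k2+\ell+2$ (the cases $\varrho$ and $p$), $r^{-3}$ when $a=\tfrac k2+\ell+\tfrac32$ (the case $k(r,\cdot,\cdot)$, yielding the $\varrho_q$-bound after multiplying by $q_0e^{\lambda}$), and $r^{-2}$ when $a=\tfrac k2+\ell+1$ (the case $z$); meanwhile the $\gamma$-power is $\kappa$ for $\varrho$, $z$, $\varrho_q$ and $\kappa+1$ for $p$. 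This is exactly the claimed two-sided estimate, and every constant produced is assembled from $c_\ell,k,\ell,L_0,q_0,\Gamma_*$ and the numerical constants of Steps 2--3, hence is independent of $\mu_c$; nonnegativity of $z$, needed for its lower bound, is visible from the integral representation in Step 1. I expect the only real work to be Step 3 — making the passage from $e^{\mu(r)}$ and $1+L_0/r^2$ to powers of $r$ genuinely uniform in $\mu_c$ — which is where the hypotheses $\mu_c<-\tfrac12\ln2$, $L_0>0$, the vacuum interval $[0,R_1]$, and Lemmas \ref{lem_q0s_qrl12} and \ref{lem_gamma_bound} are all used; Step 1 is a routine (if lengthy) substitution and Step 4 is pure exponent bookkeeping.
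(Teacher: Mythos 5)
Your proof is correct and follows essentially the same route as the paper: the identical change of variables (your two substitutions compose to the paper's single $\alpha$-substitution) reduces each matter quantity to explicit powers of $r$, $1+L_0/r^2$ and $e^{\gamma(r)}-1$ times a uniformly bounded integral, with Lemma \ref{lem_gamma_bound}, $R_1\leq\sqrt{L_0}$, and the bound on $\mu$ over $[R_1,2R_1]$ supplying the $\mu_c$-independence exactly as in the paper. The only (cosmetic) difference is that the paper keeps the factor $\left((1+I_q)(1-\alpha)e^{-\gamma}\right)^k$ inside the integral and controls it by proving $I_q\leq C$ directly, whereas you rewrite it as $e^{k\mu}(1+L_0/r^2)^{k/2}(1-\alpha)^k$ and cancel $e^{k\mu}\asymp r^k$ against the extra half-power of $1+L_0/r^2$; both rest on the same two-sided control of $\mu(r)-\mu_c$.
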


\begin{rem}
The important point of Lemma \ref{lem_est_mat_c} is that the constants $C_\varrho^{\mathrm l}$, $C_p^{\mathrm l}$, $C_z^{\mathrm l}$, $C_{\varrho_q}^{\mathrm l}$, $C_\varrho^{\mathrm u}$, $C_p^{\mathrm u}$, $C_z^{\mathrm u}$, $C_{\varrho_q}^{\mathrm u}$ are independent of the choice of $\mu_c$ which determines $R_1$ and thereby the interval $D_{\mu_c}$ on the $r$-axis. This means that powers of $r$ cannot be absorbed by these constants.
\end{rem}

\begin{proof}
By definition of the characteristic function (\ref{def_char_fun}) we have the identity $e^{-\mu(r)}(1 + I_q(r))=e^{\gamma(r)}\sqrt{1+\frac{L_0}{r^2}}$. We insert this identity in the upper limits in the integrals in the formulas (\ref{expr_fct_g}) -- (\ref{expr_fct_k}) of the matter quantities. Furthermore we perform a change of variables in the integrals in these formulas from $\varepsilon$ to $\alpha$, given by
\begin{equation}
\alpha = \frac{\varepsilon - \sqrt{1+\frac{L_0}{r^2}}}{\left(e^{\gamma(r)} -1\right)\sqrt{1+\frac{L_0}{r^2}}}.
\end{equation}
This yields the identities
\begin{align*}
\varepsilon &= \alpha\left(e^{\gamma(r)}-1\right)\sqrt{1+\frac{L_0}{r^2}} + \sqrt{1+\frac{L_0}{r^2}}, \\
\mathrm d\varepsilon &= \left(e^{\gamma(r)}-1\right)\sqrt{1+\frac{L_0}{r^2}} \mathrm d\alpha, \\
1-\varepsilon e^{\mu(r)}+I_q(r) &= (1+I_q(r))(1-\alpha) e^{-\gamma(r)} \left(e^{\gamma(r)} - 1\right), \\
\varepsilon^2 -\left(1+\frac{L_0}{r^2}\right) &= \left(\alpha\left(e^{\gamma(r)}-1\right)+2\right) \alpha \left(e^{\gamma(r)} -1\right) \left(1+\frac{L_0}{r^2}\right).
\end{align*}
One obtains
\begin{align}
\varrho(r) &= c_\ell r^{2\ell}\left(1+\frac{L_0}{r^2}\right)^{\ell + 2} \left(e^{\gamma(r)}-1\right)^{\ell+k+\frac 3 2} \mathcal I_{\ell+\frac 3 2,2}, \\
p(r) &= \frac{c_\ell}{2\ell+3} r^{2\ell} \left(1+\frac{L_0}{r^2}\right)^{\ell+2}\left(e^{\gamma(r)}-1\right)^{\ell+k+\frac 5 2} \mathcal I_{\ell+\frac 5 2,0}\\
\varrho_q(r) &= q_0 c_\ell e^{\lambda(r)} r^{2\ell} \left(1+\frac{L_0}{r^2}\right)^{\ell+\frac 3 2} \left(e^{\gamma(r)}-1\right)^{\ell + k +\frac 3 2} \mathcal I_{\ell+\frac 3 2,1},
\end{align}
with the shorthand
\begin{equation}
\mathcal I_{m,n} = \int_0^1 \left((1+I_q)(1-\alpha)e^{-\gamma}\right)^k \alpha^m \left(\alpha(e^\gamma-1)+1\right)^n(\alpha(e^\gamma-1)+2)^{\ell+\frac 1 2} \mathrm d\alpha.
\end{equation}
A straight forward calculation yields that $z(r)$ is given by
\begin{equation}
z(r) = \frac{2\pi}{\ell+1}r^{2\ell} \left(1+\frac{L_0}{r^2}\right)^{\ell+1}\left(e^{\gamma(r)}-1\right)^{\ell+k+\frac 3 2} \mathcal I_{\ell+\frac 1 2,0}.
\end{equation}
See \cite{lic} for details. Consider the integral $\mathcal I_{m,n}$. There are constants $C_1$ and $C_2$ independent of $\mu_c$ such that $0<C_1\leq \mathcal I_{m,n} \leq C_2 <\infty$ if $I_q$ and $\gamma$ are bounded. The functions $I_q$ and $\gamma$ are indeed bounded for $r\in D_{\mu_c}$ as we shall now see. By Lemma \ref{lem_gamma_bound} we have $\gamma(r) \leq 2q_0\sqrt{5}+\frac 7 2 \ln(2)$ since $R_1\leq \sqrt{L_0}$. For $I_q$ we first collect a few facts. Since $q(r)/r\leq 1/2$ we have the bound (\ref{eq_l_b_e2l_nnf}), i.e.~the equivalent bounds
\begin{equation} \label{equivalent_bounds}
e^{\lambda(r)} \leq \sqrt{20},\quad e^{2\lambda(r)}\leq 20,\quad \frac{m_\lambda(r)}{r}\leq \frac{19}{40}.
\end{equation}
For $\mu(r)$ we consider for $r\in[R_1,2R_1]$
\begin{equation}
\mu(r) = \mu_c + \int_{R_1}^r \mu'(s) \mathrm ds \leq \mu_c +\int_{R_1}^{2R_1}e^{2\lambda(s)}\left(4\pi s p(s) + \frac{m_\lambda(s)}{s}-\frac{q(s)^2}{2s^3}\right)\mathrm ds.
\end{equation}
Using $\mu_c \leq -\ln(2)$ by assumption, $p(r)\leq \frac{1}{2\ell+3}\varrho(r)$ by Remark \ref{rem_est_p_rho}, and $\frac{m_{\lambda}(r)}{r}<\frac 1 2$ we obtain 
\begin{align}
\mu(r) &\leq -\ln(2) + 20\left(\frac{4\pi}{R_1}\int_{R_1}^r s^2p(s) \mathrm ds + \frac5 8 \int_{R_1}^{r} \frac{1}{s}\mathrm ds \right) \nonumber \\
&=-\ln(2)+20\left(1+\frac{5}{8}\ln(2)\right) = 20-\frac 3 8 \ln(2), \nonumber
\end{align}
Now one can write down a bound for $I_q$ which is independent of the choice of $\mu_c$. We have
\begin{align}
I_q(r) &= q_0\int_{R_1}^r e^{(\mu+\lambda)(s)}\frac{q(s)}{s^2}\mathrm ds \\
&\leq C \int_{R_1}^r \frac{1}{s}\mathrm ds \leq C, \nonumber
\end{align}
with $C$ independent of $\mu_c$. \par
Next we consider the terms $\left(1+L_0/r^2\right)^\alpha$, where $\alpha$ is $\ell+2$, $\ell + 3/2$ or $\ell + 1$, respectively. We have
\begin{equation}
\left(1 + \frac{L_0}{r^2}\right)^\alpha = \frac{1}{r^{2\alpha}} \left(r + L_0\right)^\alpha.
\end{equation}
Now note that on $D_{\mu_c}$ we have $r \leq 2R_1 \leq 2\sqrt{L_0}$, and that we assume $L_0>0$. This means that there are constants $C$, $C'$ such that 
\begin{equation}
C \frac{1}{r^{2 \alpha}} \leq \left(1 + \frac{L_0}{r^2}\right)^\alpha \leq C' \frac{1}{r^{2\alpha}}.
\end{equation}
For the terms $\left(e^{\gamma(r)}-1\right)^\alpha$ with $\alpha = \ell + k + \frac 32$ or $\alpha = \ell + k + \frac 52$, respectively, we have, using that $0 \leq \gamma(r) \leq C$ for all $r\in D_{\mu_c}$, the estimates
\begin{equation}
C \gamma(r) \leq \left(e^{\gamma(r)}-1\right) \leq C' \gamma(r), \quad r\in D_{\mu_c},
\end{equation}
for two constants $C, C' > 0$, independent of the choice of $\mu_c$. One can see this for example by considering the Taylor expansion of the exponential function. The claim now follows in all four cases.
\end{proof}

\begin{prop} \label{lem_qr_r2}
There exists a constant $C_q>0$, independent of $\mu_c$, such that for all $r \in [0,\min\{r_*,2R_1\}]$ we have 
\begin{equation}
q(r)\leq C_q r^2.
\end{equation}
\end{prop}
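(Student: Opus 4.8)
The plan is to exploit that on the matter support the charge density $\varrho_q$ is smaller than the energy density $\varrho$ by a factor of order $r$, and then to trade one of the resulting powers of $r$ against the Hawking mass, which Lemma~\ref{lem_q0s_qrl12} controls linearly. For $r\le R_1$ there is nothing to prove, since $[0,R_1]$ is a vacuum region and hence $q(r)=0$; thus it suffices to treat $r\in[R_1,\min\{r_*,2R_1\}]$. Throughout this interval the solution exists and all quantities are defined, since Lemma~\ref{lem_q0s_qrl12} gives $r_*\le R_c-\delta$.

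The key pointwise estimate I would establish is that at every radius $r$,
\begin{equation*}
\varrho_q(r)\le\frac{q_0\,e^{\lambda(r)}\,r}{\sqrt{L_0}}\,\varrho(r).
\end{equation*}
This is immediate from the integral representations of $\varrho$ and $\varrho_q$ in Section~\ref{sect_preliminaries}: on the support of $f$ one has $L\ge L_0$, hence $\sqrt{1+w^2+L/r^2}\ge\sqrt{L_0}/r$, and comparing the two $(w,L)$-integrals yields the bound. (The same inequality, with $1/\sqrt{L_0}$ replaced by $C_{\varrho_q}^{\mathrm u}/C_{\varrho}^{\mathrm l}$, also drops out of Lemma~\ref{lem_est_mat_c}, since on $D_{\mu_c}$ one has $\varrho_q\le C_{\varrho_q}^{\mathrm u}q_0e^{\lambda}\gamma^{\kappa}/r^3$ and $\varrho\ge C_{\varrho}^{\mathrm l}\gamma^{\kappa}/r^4$, while off $D_{\mu_c}$ both matter quantities vanish.) Feeding this into the integrated Maxwell equation~(\ref{maxeq}), using the bound $e^{\lambda}\le\sqrt{20}$ on $[0,r_*]$ coming from~(\ref{eq_l_b_e2l_nnf}), and estimating $s\le r$ in the integrand, I obtain
\begin{align*}
q(r)&=4\pi\int_0^r s^2\varrho_q(s)\,\mathrm ds\le\frac{4\pi q_0}{\sqrt{L_0}}\int_0^r e^{\lambda(s)}s^3\varrho(s)\,\mathrm ds\\
&\le\frac{4\pi q_0\sqrt{20}}{\sqrt{L_0}}\,r\int_0^r s^2\varrho(s)\,\mathrm ds=\frac{q_0\sqrt{20}}{\sqrt{L_0}}\,r\,m(r).
\end{align*}
Since $m(r)\le m_\lambda(r)$ and $m_\lambda(r)/r\le\tfrac{19}{40}$ on $[0,r_*]$ (established in the proof of Lemma~\ref{lem_q0s_qrl12}), this gives $q(r)\le\tfrac{19\sqrt{20}}{40}\,(q_0/\sqrt{L_0})\,r^2$, which is the claim with $C_q:=\tfrac{19\sqrt{20}}{40}\,q_0/\sqrt{L_0}$.

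The argument is short once Lemmas~\ref{lem_q0s_qrl12} and~\ref{lem_est_mat_c} are available; the only delicate point is that $C_q$ must not depend on $\mu_c$. This holds because $L_0$ (respectively the constants $C_{\varrho_q}^{\mathrm u},C_{\varrho}^{\mathrm l}$ of Lemma~\ref{lem_est_mat_c}) is fixed, and because the auxiliary bounds $e^{\lambda}\le\sqrt{20}$ and $m_\lambda(r)/r\le\tfrac{19}{40}$ are valid on the whole interval $[0,\min\{r_*,2R_1\}]\subseteq[0,r_*]$ with $\mu_c$-independent constants, precisely as supplied by Lemma~\ref{lem_q0s_qrl12}. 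No bootstrap appears inside this proof; the bootstrap structure of the section is used only afterwards, when the improved estimate $q(r)\le C_q r^2$ is inserted back to sharpen the standing hypothesis $q(r)/r\le\tfrac12$ once $2R_1$ has been driven small.
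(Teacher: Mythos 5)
Your proposal is correct and follows essentially the same route as the paper: a pointwise bound of the form $\varrho_q(r)\le C\,q_0\,e^{\lambda(r)}\,r\,\varrho(r)$, integration of the Maxwell equation, and trading the extra power of $r$ for $m(r)$ together with the bounds $e^{\lambda}\le\sqrt{20}$ and $m_\lambda(r)/r\le\tfrac{19}{40}$ supplied by the bootstrap hypothesis. The only (harmless) difference is that you derive the pointwise density comparison directly from the $(w,L)$-integral representations with the explicit constant $1/\sqrt{L_0}$ — which is in fact the estimate the paper itself records later as~(\ref{eq_est_rho_rhoq}) — whereas the paper's proof quotes the constants $C_{\varrho_q}^{\mathrm u}/C_{\varrho}^{\mathrm l}$ from Lemma~\ref{lem_est_mat_c}.
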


\begin{proof}
Let $r \in [0,\min\{r_*,2R_1\}]$. We use the estimates for the matter quantities that we just proved. In particular we have \begin{equation} \label{eq_est_varrhoq_varrho}
\varrho_q(r) \leq q_0 \frac{C_{\varrho_q}^{\mathrm u}}{C_\varrho^{\mathrm l}} e^{\lambda(r)} r \varrho(r).
\end{equation}
As discussed above, $e^{\lambda(r)}$ is bounded, since $\frac{q(r)}{r} \leq \frac 1 2$ (cf.~equation  (\ref{eq_l_b_e2l_nnf})). We have
\begin{equation}
\frac{q(r)}{r} = \frac{4\pi}{r} \int_{R_1}^r s^2\varrho_q(s)\mathrm ds \leq \frac{q_0C}{r} \int_{R_1}^r s^3\varrho(s)\mathrm ds \leq q_0 C m(r) \leq q_0 C r,
\end{equation}
where in the last step $\frac{m(r)}{r}<\frac{1}{2}$ has been used again. Now, if we take the constant in the last term as $C_q$ the claim is established.
\end{proof}

\begin{cor} \label{cor_cor51}
If $R_1$ is sufficiently small then $2R_1 < r_*$.
\end{cor}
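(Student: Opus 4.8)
The plan is to close the bootstrap started in Lemma \ref{lem_q0s_qrl12} by a short contradiction argument, exploiting that the constant $C_q$ furnished by Proposition \ref{lem_qr_r2} is independent of $\mu_c$, hence of $R_1 = R_1(\mu_c)$.

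First I would assume, for contradiction, that $2R_1 \geq r_*$. In that case $\min\{r_*, 2R_1\} = r_*$, so Proposition \ref{lem_qr_r2} applies on the whole interval $[0,r_*]$ and yields $q(r) \leq C_q r^2$ for all $r \in [0, r_*]$. Dividing by $r$ and using $r \leq r_* \leq 2R_1$ gives
\[
\frac{q(r)}{r} \leq C_q r \leq 2 C_q R_1, \qquad r \in [0, r_*].
\]
Since $C_q$ does not depend on $\mu_c$ and $R_1(\mu_c) \to 0$ as $\mu_c \to -\infty$ (cf.~(\ref{eq_def_r1})), for $R_1$ small enough we have $2 C_q R_1 < \tfrac 1 2$, so that $q(r)/r < \tfrac 1 2$ \emph{strictly} for every $r \in [0, r_*]$, in particular at $r = r_*$.

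Now I would invoke Lemma \ref{lem_q0s_qrl12}, which gives $r_* \leq R_c - \delta < R_c$; hence the solution, and in particular the continuous function $q$, is defined on a neighbourhood of $r_*$. Since $q(r_*)/r_* < \tfrac 1 2$, continuity provides $\varepsilon > 0$ with $r_* + \varepsilon < R_c$ and $q(s)/s \leq \tfrac 1 2$ for all $s \in [0, r_*+\varepsilon]$, contradicting the definition of $r_*$ as the supremum of such radii. Therefore $2R_1 < r_*$.

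The argument is essentially immediate once Proposition \ref{lem_qr_r2} is in hand; the only points needing care are that the contradiction requires $r_*$ to lie strictly inside the maximal interval of existence (this is exactly what Lemma \ref{lem_q0s_qrl12} provides) and that the bound $C_q$ be genuinely $\mu_c$-independent, so that shrinking $R_1$ really does force $q(r)/r$ below $\tfrac 1 2$ on all of $[0, r_*]$.
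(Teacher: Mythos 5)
Your proof is correct and follows essentially the same route as the paper: assume $r_* \leq 2R_1$, apply Proposition \ref{lem_qr_r2} on $[0,r_*]$ to get $q(r)/r \leq C_q r < \tfrac12$ for $R_1$ small, and derive a contradiction with the definition of $r_*$. You merely spell out more carefully the continuity/extension step that the paper leaves implicit (and you correctly track the factor $2$ in $C_q r \leq 2C_q R_1$, which the paper's one-line computation glosses over).
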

\begin{proof}
Choose $R_1<\frac{1}{2C_q}$ and assume that the corollary does not hold, i.e. $r_* \leq 2R_1$. Then for all $r\in[R_1,r_*]$ we have $\frac{q(r)}{r} \leq C_qr \leq C_qR_1 < \frac 1 2$, contradiction.
\end{proof}

\begin{cor} \label{cor_gamma_p}
We have for all $r\in[0,2R_1]$
\begin{equation}
\gamma'(r) \leq C r + \frac 1 r.
\end{equation}
\end{cor}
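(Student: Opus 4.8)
The plan is to split $[0,2R_1]$ into the inner vacuum region $[0,R_1]$ and the region $[R_1,2R_1]$, and to treat these separately. On $[0,R_1]$ all matter quantities vanish, so by the Maxwell equation (\ref{maxeq}) we have $q\equiv 0$ and hence $I_q\equiv 0$ there, and by (\ref{red_evm_2}) also $\mu'\equiv 0$, i.e.\ $\mu\equiv\mu_c$; the definition (\ref{def_char_fun}) then gives directly $\gamma'(r)=\frac{L_0}{r(r^2+L_0)}\le\frac1r\le Cr+\frac1r$, so the claim holds on $[0,R_1]$.

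On $[R_1,2R_1]$ the idea is to reuse the computation of $\gamma'$ from the proof of Lemma \ref{lem_gamma_bound}, but now with the crude bound $q(r)/r\le1/2$ replaced by the sharper estimate $q(r)\le C_qr^2$ of Proposition \ref{lem_qr_r2}. First, by Corollary \ref{cor_cor51} we may assume $R_1$ small enough that $[R_1,2R_1]\subset[R_1,r_*]$, so that the estimates of this section apply, in particular $e^{2\lambda(r)}\le20$ from (\ref{eq_l_b_e2l_nnf}). In the formula (\ref{eq_formula_gammap}) for $\gamma'(r)$ there are three contributions: the geometric term is $\frac1r\frac{L_0}{r^2+L_0}\le\frac1r$; the $\mu'$-term is bounded above by dropping the nonnegative $4\pi rp(r)$ and $m_\lambda(r)/r^2$, which leaves $e^{2\lambda(r)}q(r)^2/(2r^3)\le 20\,C_q^2\,r^4/(2r^3)=Cr$; and the charge term $\frac{\mathrm d}{\mathrm dr}\ln(1+I_q(r))$ is bounded by repeating the estimate (\ref{est_ddr_1piq}) with $q(r)/r\le C_qr$ in place of $q(r)/r\le1/2$, which yields $\frac{\mathrm d}{\mathrm dr}\ln(1+I_q(r))\le\frac{q_0\sqrt{20}}{\sqrt{L_0}}C_q\,r=Cr$. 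Summing these gives $\gamma'(r)\le Cr+\frac1r$ on $[R_1,2R_1]$, which together with the vacuum region completes the proof.

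The one point requiring care is the reuse of (\ref{est_ddr_1piq}): that estimate rests on the pointwise inequality $e^{\mu(r)}/(1+I_q(r))\le r/\sqrt{r^2+L_0}$, i.e.\ $\gamma(r)\ge0$. At a point $r\in[R_1,2R_1]$ with $\gamma(r)<0$ -- which can only lie beyond some $r_0\in[R_1,2R_1]$ with $\gamma(r_0)=0$, all of $(r_0,r]$ then being a vacuum region -- one instead argues as follows: on $(r_0,r]$ the charge $q$ is constant and $\le C_q r^2$, one has $1+I_q(r)\ge1+I_q(r_0)=e^{\mu(r_0)}\sqrt{1+L_0/r_0^2}$ from $\gamma(r_0)=0$, and $\mu(r)-\mu(r_0)\le\int_{r_0}^r e^{2\lambda(s)}m_\lambda(s)/s^2\,\mathrm ds\le C\ln(r/r_0)\le C\ln2$ because $r\le2R_1\le2r_0$; hence $e^{\mu(r)}/(1+I_q(r))\le C\,r/\sqrt{r^2+L_0}$ is recovered with a constant independent of $\mu_c$, and $\frac{\mathrm d}{\mathrm dr}\ln(1+I_q(r))\le Cr$ follows as before. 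Apart from this bookkeeping the proof is a short computation; I expect this last point -- keeping the possible interior vacuum sub-region under control with a $\mu_c$-independent constant, exactly as in Lemma \ref{lem_est_mat_c} -- to be the main (and only minor) obstacle.
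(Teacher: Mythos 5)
Your proposal is correct and follows essentially the same route as the paper, whose entire proof is the instruction to rerun the computation of Lemma \ref{lem_gamma_bound} with Proposition \ref{lem_qr_r2} in place of the crude bound $q(r)/r\le 1/2$; your three-term estimate of (\ref{eq_formula_gammap}) is exactly that computation. The extra bookkeeping you supply --- the explicit treatment of the inner vacuum region $[0,R_1]$, the use of Corollary \ref{cor_cor51} to ensure $[R_1,2R_1]\subset[R_1,r_*]$, and the $\mu_c$-independent recovery of (\ref{est_ddr_1piq}) at points where $\gamma(r)<0$ --- addresses details the paper leaves implicit, and is sound.
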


\begin{proof}
Carry out again the proof of Lemma \ref{lem_gamma_bound} but now use Proposition \ref{lem_qr_r2} instead of $\frac{q(r)}{r}\leq \frac 1 2$.
\end{proof}

\section{Limit of thin shells} \label{sect_thin_shells}

\begin{theorem} (Thin Shell Limit)\\
Let the model parameters $k$, $\ell$, $L_0$, $c_0$ be chosen as in (\ref{param_constraints_p}), with $L_0>0$, and let $q_0>0$ be arbitrary but fixed. Then there exists $\mu_c^{\mathrm{shell}}<0$ such that for all $\mu_c \leq \mu_c^{\mathrm{shell}}$ there exists a static solution of the Einstein-Vlasov-Maxwell system with particle charge $q_0$. The matter quantities of this solution are supported on $[R_1,R_2]$, where $R_2/R_1\to 1$, as $\mu_c\to-\infty$.
\end{theorem}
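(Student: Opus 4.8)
The plan is to combine the machinery of Sections \ref{sec_loc}--\ref{sec_qr_small} with the thin-shell analysis of \cite{a072} for the uncharged Einstein--Vlasov system; the point of Section \ref{sec_qr_small} is precisely that on the radial interval where the shell forms the charge contributions to the reduced system (\ref{red_evm_1})--(\ref{red_evm_3}) are negligibly small, \emph{uniformly in} $\mu_c$, so that one is perturbing the known uncharged construction by terms that vanish as $\mu_c\to-\infty$.

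First I would fix $q_0>0$ and the parameters $k,\ell,L_0$ with $L_0>0$, and take $\mu_c$ so negative that $R_1=R_1(\mu_c)=\sqrt{L_0/(e^{-2\mu_c}-1)}$ is small, in particular $R_1<\tfrac{1}{2C_q}$ with $C_q$ the constant of Proposition \ref{lem_qr_r2}. By Theorem \ref{theo_loc_ex} a solution exists near $r=0$; by Lemma \ref{lem_q0s_qrl12} and Corollary \ref{cor_cor51} it extends past $2R_1$ with $q(r)/r\le\tfrac12$, $e^{-2\lambda(r)}\ge\tfrac1{20}$ and $m_\lambda(r)/r\le\tfrac{19}{40}$ on $[0,2R_1]$, and by Proposition \ref{lem_qr_r2} one has $q(r)\le C_q r^2$ there. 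Hence $q(r)/r\le 2C_qR_1$, $q(r)^2/r^2\le 4C_q^2R_1^2$, and, since $e^{\mu+\lambda}$ is bounded on $[0,2R_1]$ uniformly in $\mu_c$ (as in the proof of Lemma \ref{lem_est_mat_c}) and $q\equiv0$ on $[0,R_1]$, also $I_q(r)=q_0\int_{R_1}^r e^{(\mu+\lambda)(s)}\tfrac{q(s)}{s^2}\,\mathrm ds\le CR_1$. Thus every charge-related term in (\ref{red_evm_1})--(\ref{red_evm_3}), and the term $\tfrac{\mathrm d}{\mathrm dr}\ln(1+I_q)$ in (\ref{eq_formula_gammap}), is $O(R_1)$ uniformly in $\mu_c$; on $[0,2R_1]$ the reduced EVM system is therefore the reduced uncharged Einstein--Vlasov system plus perturbations tending to $0$ as $\mu_c\to-\infty$, while the two-sided matter estimates of Lemma \ref{lem_est_mat_c} hold with $\mu_c$-independent constants.

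On $[R_1,2R_1]$ I would then run the argument of \cite{a072}. One starts from $\gamma(R_1)=0$ with $\gamma'(R_1)=\tfrac{L_0}{R_1(R_1^2+L_0)}>0$ and tracks the characteristic function $\gamma$ together with the ratio $m_\lambda(r)/r$, using (\ref{eq_formula_gammap}), Corollary \ref{cor_gamma_p} and Lemma \ref{lem_est_mat_c}: $\gamma$ rises to a maximum, the mass ratio builds up past the threshold at which the gravitational terms dominate the centrifugal term $\tfrac{L_0}{r(r^2+L_0)}$, and $\gamma$ returns to $0$ at a radius $R_2\in(R_1,2R_1)$ with $R_2/R_1\to1$ as $\mu_c\to-\infty$; the $O(R_1)$ charge terms enter only as lower-order corrections and do not alter this picture. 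Beyond $R_2$ the metric continues as a vacuum Reissner--Nordstr\"{o}m solution with charge $q(R_2)\le C_qR_2^2$ and mass $m_g(R_2)$; as in \cite{a072} one checks that the matter quantities stay zero, so the support is exactly $[R_1,R_2]$. This extends to $r=\infty$, is asymptotically flat after a rescaling of $t$, and contains no horizon since $2m_g(r)/r<1$ by (\ref{a_b_ineq_charged}); the resulting collection is $C^1$ everywhere by Lemma \ref{lem_mat_1} and (\ref{red_evm_1})--(\ref{red_evm_3}), hence a static solution in the sense of Definition \ref{def_sol_evm}. Taking $\mu_c^{\mathrm{shell}}$ to be any value below which all the above smallness requirements hold yields the theorem.

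I expect the main obstacle to be the adaptation of \cite{a072}: one must show, with constants uniform in $\mu_c$, both that $\gamma$ does not grow enough for $m_\lambda(r)/r$ to reach $\tfrac12$ (so the solution survives and (\ref{trivial}) stays nondegenerate) and that the matter nevertheless accumulates fast enough to drive $\gamma$ back to zero within a factor $1+o(1)$ of $R_1$ — that is, the delicate balance between $\gamma$ and the matter quantities it generates via Lemma \ref{lem_est_mat_c} — all the while carrying the charge error terms along and verifying they remain subdominant. A secondary point requiring care is excluding a reappearance of matter beyond $R_2$, which is where the explicit Reissner--Nordstr\"{o}m form of the exterior is used.
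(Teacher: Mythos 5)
Your overall strategy is the same as the paper's: use Lemma \ref{lem_q0s_qrl12}, Corollary \ref{cor_cor51} and Proposition \ref{lem_qr_r2} to guarantee existence on $[0,2R_1]$ with $q(r)\le C_q r^2$ and all charge terms $O(R_1)$ uniformly in $\mu_c$, and then adapt the thin-shell mechanism of \cite{a072} with the two-sided matter estimates of Lemma \ref{lem_est_mat_c}. However, the proposal stops exactly where the proof begins. The entire quantitative content of the theorem is the ``delicate balance'' you defer to the adaptation of \cite{a072}, and that adaptation is not a routine perturbation: the paper needs a chain of sharp, mutually calibrated estimates with explicit $\mu_c$-independent exponents. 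Concretely, it first proves the upper bound $\gamma(r)\le C_\gamma r^{2/(\kappa+1)}$ on $[R_1,2R_1]$ by a contradiction argument in which each positive term of $\gamma'$ is dominated by a negative one (Lemma \ref{lem_c_gamma}); then the lower bound $\gamma'(r)\ge \tfrac1{2r}$ on a window $[R_1,R_1+\delta]$ with $\delta=\alpha R_1^{1+\beta}$, $\beta=2/(\kappa+1)$; then, with a second carefully chosen window $\sigma_*=R_1^{1+b}$, $\Delta=cR_1^{1+d}$, it shows via the continuation criterion that $\gamma$ cannot stay above $\gamma_*$ past $R_1+\sigma_*+\Delta$ (Lemma \ref{lem_gamma_star}), and via the TOV identity that $2m_\lambda(r_2)/r_2$ is driven up to essentially $8/9$ (Lemma \ref{lem_low_bd_ml}). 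Finally, the closing step is not a soft continuity argument but the introduction of the auxiliary variable $x(r)=m_\lambda(r)/(r\gamma(r))$, which satisfies a Riccati-type differential inequality $rx'\ge\tfrac{19}{20}x^2-x$ whose explicit solution blows up at a radius $R_2$ with $R_2/R_1\to1$, forcing $\gamma(R_2)=0$. None of these steps appears in your proposal, and the choice of exponents $b,c,d,\beta$ (which must be compatible with $\kappa=\ell+k+\tfrac32$ for the windows to nest and for the mass to accumulate fast enough) is precisely the part that could fail if done naively.

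A smaller remark: your claim that the charge terms enter ``only as lower-order corrections and do not alter this picture'' is the correct intuition, but in the paper it has to be verified term by term inside each of the above estimates --- e.g.\ the term $\tfrac{\mathrm d}{\mathrm dr}\ln(1+I_q)$ must be absorbed into $e^{2\lambda}m(r)/(2r^2)$ in Lemma \ref{lem_c_gamma}, and $q^2/(2r^2\gamma)$ must be dominated by $x(r)/5$ in the Riccati step --- since several of the competing uncharged terms are themselves of size $1/r\sim 1/R_1$ and the margins are numerical constants, not orders of magnitude. So while your outline is faithful to the paper's route, as a proof it has a genuine gap: the core analytic argument is asserted rather than carried out.
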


The proof of this theorem consists in a succession of lemmas. By inspecting the definition (\ref{eq_def_r1}) of $R_1(\mu_c)$ we see that $R_1(\mu_c)\to 0$ if and only if $\mu_c\to-\infty$. This means that the condition $\mu_c < 0$, $|\mu_c|$ large can be expressed saying that $R_1$ is small. We will adopt this terminology henceforth for convenience.

\begin{lemma} \label{lem_c_gamma}
Let
\begin{equation}
C_\gamma = (4\pi C_{p}^{\mathrm l})^{-\frac{1}{\kappa+1}}.
\end{equation}
Then, if $R_1$ is sufficiently small, for all $r\in[R_1,2R_1]$ there holds the bound 
\begin{equation}
\gamma(r) \leq C_\gamma r^{\frac{2}{\kappa+1}}.
\end{equation}
\end{lemma}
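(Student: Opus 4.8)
The plan is a barrier (first-touching-point) argument against the function $\psi(r):=C_\gamma r^{2/(\kappa+1)}$, whose exponent and whose constant $C_\gamma$ are chosen precisely so that the radial pressure, bounded below by Lemma~\ref{lem_est_mat_c}, just beats the growth of $\gamma$ at any hypothetical touching point. First I would take $R_1$ small enough that $2R_1<r_*$ by Corollary~\ref{cor_cor51} (so that, using $r_*<R_c$ from Lemma~\ref{lem_q0s_qrl12}, the solution exists on $[R_1,2R_1]$ and $D_{\mu_c}=\{r\in[R_1,2R_1]:\gamma(r)\ge 0\}$) and that $\mu_c<-\tfrac{1}{2}\ln 2$, so that Lemma~\ref{lem_est_mat_c}, Proposition~\ref{lem_qr_r2}, and the bounds $1\le e^{2\lambda(r)}\le 20$ (the latter from \eqref{eq_l_b_e2l_nnf}, the former from $m_\lambda\ge 0$ and \eqref{trivial}) are all available on this interval; these thresholds depend only on $\mu_c$-independent data. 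Since $\gamma(R_1)=0<\psi(R_1)$, if the asserted bound failed there would be a least $r_1\in(R_1,2R_1]$ with $\gamma(r_1)=\psi(r_1)>0$ and $\gamma<\psi$ on $[R_1,r_1)$; then $r_1\in D_{\mu_c}$, and comparing one-sided derivatives of the two $C^1$ functions at $r_1$ gives $\gamma'(r_1)\ge\psi'(r_1)$.

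The heart of the argument is to show $\gamma'(r_1)\le Cr_1$ with $C$ independent of $\mu_c$. Starting from \eqref{eq_formula_gammap} and writing $m_\lambda/r^2=m/r^2+\tfrac{1}{2r^2}\int_0^r q(s)^2 s^{-2}\,\mathrm ds$, I would estimate termwise. Using $\gamma(r_1)\ge 0$, hence $e^{\mu(r_1)}/(1+I_q(r_1))\le r_1/\sqrt{r_1^2+L_0}$, together with $q(r_1)\le C_q r_1^2$ (Proposition~\ref{lem_qr_r2}) and $e^{\lambda(r_1)}\le\sqrt{20}$, the first term satisfies $\tfrac{\mathrm d}{\mathrm dr}\ln(1+I_q)\big|_{r_1}\le Cr_1$. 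In the bracket multiplied by $-e^{2\lambda}$, the contributions $-e^{2\lambda}m/r^2$ and $-e^{2\lambda}\tfrac{1}{2r^2}\int_0^r q^2 s^{-2}$ are $\le 0$; the term $e^{2\lambda}q(r_1)^2/(2r_1^3)\le 10 C_q^2 r_1\le Cr_1$ again by $q\le C_q r^2$; and $-e^{2\lambda}\,4\pi r_1 p(r_1)\le -4\pi r_1 p(r_1)$ since $e^{2\lambda}\ge 1$ and $p\ge 0$. The last term of \eqref{eq_formula_gammap} is $\le 1/r_1$. Altogether $\gamma'(r_1)\le -4\pi r_1 p(r_1)+Cr_1+1/r_1$. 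Now the definition of $C_\gamma$ pays off: since $C_\gamma^{\kappa+1}=(4\pi C_p^{\mathrm l})^{-1}$ and $\gamma(r_1)=\psi(r_1)$, Lemma~\ref{lem_est_mat_c} gives $p(r_1)\ge C_p^{\mathrm l}\gamma(r_1)^{\kappa+1}/r_1^4=C_p^{\mathrm l}C_\gamma^{\kappa+1}r_1^{-2}=(4\pi r_1^2)^{-1}$, hence $4\pi r_1 p(r_1)\ge 1/r_1$ exactly cancels the residual term and leaves $\gamma'(r_1)\le Cr_1$.

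Finally $\psi'(r_1)=\tfrac{2C_\gamma}{\kappa+1}r_1^{2/(\kappa+1)-1}$, so $\gamma'(r_1)\ge\psi'(r_1)$ together with $\gamma'(r_1)\le Cr_1$ forces $r_1^{2\kappa/(\kappa+1)}\ge \tfrac{2C_\gamma}{C(\kappa+1)}>0$; since $r_1\le 2R_1$ and the exponent $2\kappa/(\kappa+1)$ is positive, this is false once $R_1$ is below a threshold depending only on the $\mu_c$-independent constants $C$, $C_\gamma$, $\kappa$. This contradiction rules out a touching point, so $\gamma\le\psi$ on $[R_1,2R_1]$, which is the claim. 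The main obstacle is the termwise estimate of \eqref{eq_formula_gammap}: one must check that every charge-related contribution is genuinely $O(r)$ — which is exactly what Proposition~\ref{lem_qr_r2} provides — so that the single surviving singular term $1/r$ is left to be absorbed with no slack by the engineered pressure lower bound, the coefficient $1$ in front of $1/r$ being precisely what the definition of $C_\gamma$ is designed to produce.
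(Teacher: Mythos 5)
Your proposal is correct and follows essentially the same route as the paper: a first-touching-point argument against the barrier $C_\gamma r^{2/(\kappa+1)}$, with $C_\gamma$ chosen exactly so that the pressure lower bound of Lemma~\ref{lem_est_mat_c} produces the $1/r_1$ needed to absorb the singular term in $\gamma'$. The only (harmless) difference is in the bookkeeping of the remaining charge terms and in how the contradiction is closed: the paper pairs $e^{2\lambda}q^2/(2r^3)$ and $\tfrac{\mathrm d}{\mathrm dr}\ln(1+I_q)$ against the negative term $-e^{2\lambda}m/r^2$ (via $q\leq m$) to conclude $\gamma'(r_1)<0$, whereas you bound them by $O(r_1)$ via Proposition~\ref{lem_qr_r2} and contradict the lower bound $\gamma'(r_1)\geq\psi'(r_1)\sim r_1^{(1-\kappa)/(\kappa+1)}$ instead.
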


\begin{proof}
The proof works by contradiction. Assume that the assertion does not hold. Since we know that (by definition) $\gamma(R_1)=0$ we have $\gamma(R_1) < C_\gamma R_1^{\frac{2}{\kappa + 1}}$. This implies by continuity of $\gamma$ and $\gamma'$ that there exists $r_1\in (R_1,2R_1)$ such that 
\begin{equation}
\gamma(r_1) = C_\gamma r_1^{\frac{2}{\kappa+1}}
\end{equation}
and
\begin{equation}
\gamma'(r_1) > \frac{\mathrm d}{\mathrm dr} C_\gamma r^{\frac{2}{\kappa+1}} = \frac{2 C_\gamma}{\kappa +1} r^{\frac{1-\kappa}{\kappa+1}} > 0.
\end{equation}
Consider $\gamma'(r)$ given in (\ref{eq_formula_gammap}) that we recall here, 
\begin{multline*}
\gamma'(r) = \frac{\mathrm d}{\mathrm dr} \ln(1+I_q(r))\\
- e^{2\lambda(r)}\left(4\pi rp(r)+\frac{m(r)}{r^2}-\frac{q(r)^2}{2r^3}+\frac{1}{2r^2}\int_0^r\frac{q(s)^2}{s^2}\mathrm ds\right) + \frac{1}{r}\frac{L_0}{r^2+L_0}.
\end{multline*}
We now consider this expression for $r=r_1$ and estimate the positive terms by a negative one, one by one. First, since
\begin{equation}
\gamma(r_1)^{\kappa+1} = C_\gamma^{\kappa+1} r_1^2 = \frac{r_1^2}{4\pi C_p^{\mathrm{l}}},
\end{equation}
we have
\begin{equation} \label{lem_61_est1}
e^{2\lambda(r)} 4\pi r_1 p(r_1) \geq e^{2\lambda(r)} 4\pi C_p^{\mathrm l} \frac{\gamma(r_1)^{\kappa+1}}{r_1^3} = \frac{e^{2\lambda(r_1)}}{r_1} \geq \frac{1}{r_1} \frac{L_0}{r_1^2+L_0}.
\end{equation}
Next we claim that for $r\leq r_1$ we have $\frac{q(r)^2}{2r^3}\leq \frac{m(r)}{2r^2}$. To see this, we observe by inspection of the formulas (\ref{expr_fct_g}) and (\ref{expr_fct_k}) that 
\begin{equation} \label{eq_est_rho_rhoq}
\sqrt{1+\frac{L_0}{r^2}} \varrho_q(r) \leq q_0 e^{\lambda(r)}\varrho(r).
\end{equation}
This yields
\begin{equation} \label{est_qr_m}
\frac{q(r)}{r} = \frac{4\pi}{r}\int_{R_1}^r s^2\varrho_q(s)\mathrm ds \leq \frac{q_04\pi}{r}\int_{R_1}^r \frac{e^{\lambda(s)}}{\sqrt{1+\frac{L_0}{s^2}}} s^2 \varrho(s) \mathrm ds \leq \frac{q_0\sqrt{20}}{\sqrt{1+\frac{L_0}{r^2}}} \frac{m(r)}{r},
\end{equation}
where we use (\ref{equivalent_bounds}) to estimate $e^\lambda$. So if $R_1$ is small we have $q(r)\leq m(r)$. Then we have by Lemma \ref{lem_qr_r2}
\begin{equation} \label{lem_61_est2}
\frac{q^2(r)}{2r^3} \leq C_q \frac{m(r)}{2r} = C_q r \frac{m(r)}{r^2} \leq \frac{m(r)}{2r^2}
\end{equation}
The last inequality holds if $R_1$ is sufficiently small. For the next estimate, first recall the estimate (\ref{est_ddr_1piq}), which reads
\begin{equation*}
\frac{\mathrm d}{\mathrm dr}\ln\left(1+I_q(r)\right) \leq \frac{q_0}{\sqrt{r^2+L_0}} e^{\lambda(r)} \frac{q(r)}{r}.
\end{equation*}
Using $e^{\lambda(r)} \leq e^{2\lambda(r)}$, $q(r) \leq m(r)$ and $q_0/\sqrt{r^2+L_0}\leq \frac{1}{2r}$ for $R_1$ small we obtain
\begin{equation} \label{lem_61_est3}
\frac{\mathrm d}{\mathrm dr}\ln(1+I_q(r)) \leq e^{2\lambda(r)} \frac{m(r)}{2r^2}
\end{equation}
if $R_1$ is small enough. Combined, (\ref{lem_61_est1}), (\ref{lem_61_est2}), and (\ref{lem_61_est3}) imply $\gamma'(r_1)<0$ which is the desired contradiction.
\end{proof}

\begin{lemma}
Let
\begin{equation} \label{ans_delta}
\delta = \alpha R_1^{1+\beta},\qquad\mathrm{where}\quad \alpha=\left(64\pi C^{\mathrm u}\right)^{-\frac{1}{\kappa+1}}\quad\mathrm{and}\quad\beta=\frac{2}{\kappa+1},
\end{equation}
with $C^{\mathrm u} = \max\{C_\varrho^{\mathrm u}, 2 C_p^{\mathrm u}, C_{\varrho_q}^{\mathrm u}\}$. Then, if $R_1$ is chosen sufficiently small such that $\delta < R_1$, there holds the bound
\begin{equation}
\gamma'(r) \geq \frac{1}{2r}
\end{equation}
for all $r\in [R_1,R_1+\delta]$.
\end{lemma}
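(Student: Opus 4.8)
The plan is to run a continuity (bootstrap) argument on the thin interval $[R_1,R_1+\delta]$. The decisive structural facts are: at $r=R_1$ every matter quantity and $q$ itself vanish (we sit at the edge of the inner vacuum region), so $\gamma'(R_1)=\frac1{R_1}\frac{L_0}{R_1^2+L_0}$, which is $>\frac1{2R_1}$ once $R_1$ is small; and the interval is so thin, $\delta/R_1=\alpha R_1^{\beta}\to0$, that $\gamma$ stays of size $O(R_1^{\beta})$ on it, keeping all the terms that compete with the dominant $\frac1r$ coming from the angular-momentum term under control. Throughout I use ``$R_1$ sufficiently small'' freely; since $R_1\to0$ corresponds to $\mu_c\to-\infty$, this forces in particular $\mu_c<-\ln(2)/2$ (so Lemma \ref{lem_est_mat_c} applies) and, by Corollary \ref{cor_cor51}, $2R_1<r_*$, so that Lemma \ref{lem_c_gamma}, Proposition \ref{lem_qr_r2} and Corollary \ref{cor_gamma_p} are all available on $[R_1,2R_1]\supseteq[R_1,R_1+\delta]$.

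Let $r_3$ be the supremum of those $r\in[R_1,R_1+\delta]$ for which $\gamma'(s)\ge\frac1{2s}$ holds throughout $[R_1,r]$; by the computation of $\gamma'(R_1)$ above, $r_3>R_1$. On $[R_1,r_3]$ the function $\gamma$ is increasing, hence $\gamma\ge\gamma(R_1)=0$, so $[R_1,r_3]\subseteq D_{\mu_c}$ and the matter estimates of Lemma \ref{lem_est_mat_c} apply there. Integrating Corollary \ref{cor_gamma_p}, $\gamma'(s)\le Cs+\frac1s$, from $R_1$ to $r\le R_1+\delta$ gives a contribution $O(R_1\delta)=O(R_1^{2+\beta})$ from the first term and $\ln(r/R_1)\le\delta/R_1=\alpha R_1^{\beta}$ from the second, so $\gamma(r)\le(1+o(1))\alpha R_1^{\beta}$ on the bootstrap interval. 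Feeding this into Lemma \ref{lem_est_mat_c}, trading the surplus power of $R_1$ for a power of $r$ via $r\ge R_1$, and using $\beta(\kappa+1)=2$, one obtains $4\pi r\,p(r)\le 4\pi C_p^{\mathrm u}(1+o(1))\alpha^{\kappa+1}/r$ and $m(r)/r^2\le 4\pi C_\varrho^{\mathrm u}(1+o(1))\alpha^{\kappa+1}/r$; the choice $\alpha^{\kappa+1}=(64\pi C^{\mathrm u})^{-1}$ with $C^{\mathrm u}\ge\max\{C_\varrho^{\mathrm u},2C_p^{\mathrm u}\}$ is precisely what turns these into $\le\frac{1+o(1)}{32r}$ and $\le\frac{1+o(1)}{16r}$. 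The remaining charge term $\frac1{2r^2}\int_0^r q(s)^2/s^2\,\mathrm ds$ is cheaper: Proposition \ref{lem_qr_r2} ($q(s)\le C_q s^2$) makes it $O(r)=o(1/r)$, and the same bound gives $m_\lambda(r)/r\le\tfrac14$, hence $e^{2\lambda(r)}\le2$ on $[R_1,r_3]$ by (\ref{trivial}).

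Inserting everything into the identity (\ref{eq_formula_gammap}) for $\gamma'$, discarding the non-negative terms $\frac{\mathrm d}{\mathrm dr}\ln(1+I_q(r))$ and $e^{2\lambda}q^2/(2r^3)$, and bounding $\frac{L_0}{r^2+L_0}\ge 1-\frac{r^2}{L_0}=1-o(1)$, yields $\gamma'(r)\ge\frac1r(1-o(1))-2\cdot\frac1r\left(\frac1{32}+\frac1{16}+o(1)\right)=\frac1r\left(\frac{13}{16}-o(1)\right)>\frac1{2r}$ for $R_1$ small, strictly, for every $r\in[R_1,r_3]$. Strictness, together with continuity of $\gamma'$, excludes $r_3<R_1+\delta$, so $r_3=R_1+\delta$ and the assertion follows. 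The work is all bookkeeping, but two points need care: the bound on $\gamma$ over the sliver must be $(1+o(1))\alpha R_1^{\beta}$ rather than, say, $2\alpha R_1^{\beta}$, since the spurious factor $2^{\kappa+1}$ would wreck the estimate for large $\kappa$; and the bootstrap must be organised so that $\gamma\ge0$ — which Lemma \ref{lem_est_mat_c} requires — is guaranteed exactly on the interval over which the bound is being established.
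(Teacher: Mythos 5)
Your proof is correct and follows essentially the same route as the paper's: bound $\gamma$ on the thin interval via Corollary \ref{cor_gamma_p}, feed this into Lemma \ref{lem_est_mat_c} so that the choice of $\alpha$ makes $4\pi r^2 p(r)$ and $m(r)/r$ small multiples of $1$, control the charge terms and $e^{2\lambda}$ via Proposition \ref{lem_qr_r2}, and compare everything against the angular-momentum term $\frac 1r\frac{L_0}{r^2+L_0}$. The only deviations are cosmetic --- the paper omits the bootstrap wrapper and bounds $\gamma$ through $\max\gamma'\le 2/R_1$ rather than by integrating, so your closing remark about avoiding the spurious $2^{\kappa+1}$ is a fair (and for fixed $\kappa$ easily repaired) refinement of the printed argument rather than a different method.
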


\begin{proof} We have
\begin{multline}
\gamma'(r) = \frac{q_0 e^{(\lambda+\mu)(r)}}{1+I_q(r)} \frac{q(r)}{r^2} \\
- e^{2\lambda(r)}\left(4\pi rp(r)+\frac{m(r)}{r^2}-\frac{q(r)^2}{2r^3}+\frac{1}{2r^2}\int_0^r\frac{q(s)^2}{s^2}\mathrm ds\right) + \frac{1}{r}\frac{L_0}{r^2+L_0}.
\end{multline}
In order to estimate $\gamma'(r)$ from below, we can drop the first term because it is positive. For $r\in[R_1,R_1+\delta]$ the last term can be estimated by
\begin{equation}
\frac 1 r \frac{L_0}{r^2 + L_0} \geq \frac 1 r \frac{L_0}{R_1^2 \left(1+\alpha R_1^\beta\right)^2 + L_0} \geq \frac{9}{10}
\end{equation}
for $R_1$ small enough. Now consider the middle term. We factor out $\frac 1 r$ and consider the remaining parenthesis,
\begin{equation} \label{rem_paran}
-e^{2\lambda(r)} \left(4\pi r^2p(r) + \frac{m(r)}{r}-\frac{q(r)^2}{2r^2}+\frac{1}{2r}\int_0^r\frac{q(s)^2}{s^2}\mathrm ds\right).
\end{equation}
The aim is now to find a lower bound for this expression. First we note that the term $e^{2\lambda(r)}\frac{q(r)^2}{2r^2}$ can be dropped because it is positive. \par
By Corollary \ref{cor_gamma_p} we have $\gamma'(r) \leq C r + \frac{1}{r}$. So we have for $r\leq 2R_1$
\begin{equation}
\gamma(r)^\kappa \leq \left((r-R_1) \max_{s\in[R_1,r]}\gamma'(s)\right)^\kappa \leq \left((r-R_1)\left(CR_1+\frac{1}{R_1}\right)\right)^\kappa.
\end{equation}
For $R_1$ small we can estimate $C R_1 + \frac{1}{R_1} \leq \frac{2}{R_1}$ and obtain 
\begin{equation}
\gamma(r)^\kappa \leq \frac{2}{R_1^\kappa} \delta^\kappa = 2\alpha^\kappa R_1^{\kappa\beta}.
\end{equation}
With this observation at hand we have for any $\sigma \leq \delta$ and $R_1$ sufficiently small
\begin{align*}
\frac{m(R_1+\sigma)}{R_1+\sigma} &\leq \frac{8\pi}{R_1+\sigma} C_\varrho^{\mathrm u} \int_{R_1}^{R_1+\sigma} \frac{1}{s^2} \alpha^\kappa R_1^{\kappa\beta} \mathrm ds \leq \frac{8\pi}{R_1+\sigma} C_\varrho^{\mathrm u} \alpha^{\kappa} R_1^{\kappa\beta} \left(\frac{1}{R_1}-\frac{1}{R_1+\sigma}\right)  \\
&\leq 8\pi C_\varrho^{\mathrm u} \alpha^{\kappa+1} R_1^{(\kappa+1)\beta-2}
\end{align*}
and
\begin{equation}
4\pi (R_1+\sigma)^2 p(R_1+\sigma) \leq 8\pi C_p^{\mathrm u} \frac{\sigma^{\kappa+1}}{R_1^{\kappa+1}(R_1+\sigma)^2} \leq 8\pi C_p^{\mathrm u} \alpha^{\kappa+1} R_1^{(\kappa+1)\beta-2}.
\end{equation}
With the choices for $\alpha$ and $\beta$ in (\ref{ans_delta}) we obtain for $r\in[R_1,R_1+\delta]$ the bounds $\frac{m(r)}{r}\leq \frac 1 8$ and $4\pi r^2 p(r)\leq \frac{1}{16}$. \par
Next we consider
\begin{align}
\frac{1}{2(R_1+\sigma)} \int_0^{R_1+\sigma} \frac{q(s)^2}{s^2}\mathrm ds &\leq \frac{1}{2R_1}\frac{C_q}{3}\left( \left(R_1+\sigma \right)^3 -R_1^3\right)\leq CR_1^{\beta+2}
\end{align}
where we have used Proposition \ref{lem_qr_r2}. So this term goes to zero, as $R_1\to 0$, and for $R_1$ sufficiently small we can assume $\frac{1}{2r}\int_0^r\frac{q^2}{s^2}\mathrm ds \leq \frac{1}{16}$ for $r\leq R_1+\delta$. This implies
\begin{equation}
e^{2\lambda(R_1+\sigma)} = \left(1-\frac{2m(R_1+\sigma)}{R_1+\sigma} - \frac{1}{R_1+\sigma}\int_0^{R_1+\sigma}\frac{q(s)^2}{s^2}\mathrm ds\right)^{-1} \leq \frac{8}{5}.
\end{equation}
If we insert these upper bounds into (\ref{rem_paran}) we obtain for $r\in[R_1,R_1+\delta]$
\begin{equation}
\gamma'(r) \geq \frac{1}{2r},
\end{equation}
as asserted.
\end{proof}

With the last lemma we can construct a lower bound $\tilde\gamma(r)$ for $\gamma(r)$ on the interval $[R_1,R_1+\delta]$, given by
\begin{equation}
\tilde \gamma(r) := \int_{R_1}^{r} \frac{1}{2s}\mathrm ds = \frac{1}{2}\ln\left(\frac{r}{R_1}\right).
\end{equation}
Furthermore, for each $\sigma_*\leq \delta$ we define
\begin{equation}
\gamma_* := \tilde \gamma(R_1+\sigma_*).
\end{equation}
We have
\begin{equation}\label{def_gamma_star}
\gamma(R_1+\sigma_*) \geq  \gamma_* = \frac 1 2\ln\left(\frac{R_1+\sigma_*}{R_1}\right).
\end{equation}

\begin{lemma} \label{lem_gamma_star}
Let
\begin{equation}
\sigma_* = R_1^{1+b} \qquad \mathrm{and} \qquad \Delta = c R_1^{1+d} 
\end{equation}
with the constants
\begin{equation} \label{def_const_abcd}
b=\frac{2\kappa+1}{\kappa (\kappa+1)}, \qquad c=\left(2^{2-\kappa}\pi C_\varrho^{\mathrm l} \right)^{-1},\qquad d=2-\kappa b.
\end{equation}
Define $r_2$ via
\begin{equation}
r_2 := \min\left\{2R_1,\max\{r \geq R_1+\sigma_*\,:\,\forall s\in[R_1+\sigma_*,r] \mathrm{\;we\,have\;} \gamma(s)\geq \gamma_*\}\right\}.
\end{equation}
Then, if $\tilde R_1>0$ is sufficiently small, we have for all $R_1\leq\tilde R_1$ that $r_2 <  R_1+\sigma_* + \Delta$.
\end{lemma}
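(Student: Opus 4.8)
The plan is to argue by contradiction: if $r_2\ge R_1+\sigma_*+\Delta$, then by the definition of $r_2$ the characteristic function stays above the threshold, $\gamma(s)\ge\gamma_*$ for all $s\in[R_1+\sigma_*,R_1+\sigma_*+\Delta]$, and I will show that this pins so much Vlasov matter into this shell that the Hawking mass $m(R_1+\sigma_*+\Delta)$ exceeds $\tfrac{19}{40}(R_1+\sigma_*+\Delta)$, contradicting $m\le m_\lambda$ together with the bound $m_\lambda(r)/r\le\tfrac{19}{40}$ recorded in (\ref{equivalent_bounds}).

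First I would dispose of the preliminaries, all valid once $R_1$ is small. From (\ref{def_const_abcd}) one has $d=2-\kappa b=\tfrac1{\kappa+1}>0$ and $b=\tfrac{2\kappa+1}{\kappa(\kappa+1)}>\tfrac2{\kappa+1}=\beta$, so $\sigma_*=R_1^{1+b}<\delta$, which validates (\ref{def_gamma_star}) and hence makes $\gamma(R_1+\sigma_*)\ge\gamma_*>0$, so that $r_2$ is well defined; moreover $\sigma_*+\Delta<R_1$, so $R_1+\sigma_*+\Delta<2R_1$. Shrinking $R_1$ further makes $\mu_c$ (which tends to $-\infty$) negative enough for Lemma \ref{lem_est_mat_c} to apply, and Corollary \ref{cor_cor51} gives $2R_1<r_*$; hence $D_{\mu_c}=\{r\in[R_1,2R_1]:\gamma(r)\ge 0\}$, the interval $[R_1+\sigma_*,R_1+\sigma_*+\Delta]$ lies inside $D_{\mu_c}$ (using $\gamma\ge\gamma_*>0$ there), and (\ref{equivalent_bounds}) is available at the radius $R_1+\sigma_*+\Delta<r_*$.

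For the main estimate, on $[R_1+\sigma_*,R_1+\sigma_*+\Delta]$ the lower bound of Lemma \ref{lem_est_mat_c} gives $\varrho(s)\ge C_\varrho^{\mathrm l}\gamma_*^\kappa/s^4$, so
\[
m(R_1+\sigma_*+\Delta)\ \ge\ 4\pi C_\varrho^{\mathrm l}\gamma_*^\kappa\int_{R_1+\sigma_*}^{R_1+\sigma_*+\Delta}\frac{\mathrm ds}{s^2}\ =\ \frac{4\pi C_\varrho^{\mathrm l}\gamma_*^\kappa\,\Delta}{(R_1+\sigma_*)(R_1+\sigma_*+\Delta)}.
\]
Since $\sigma_*/R_1=R_1^b\to 0$ and $\Delta/R_1=cR_1^d\to 0$, the denominator equals $R_1^2(1+o(1))$, and $\gamma_*=\tfrac12\ln(1+R_1^b)=\tfrac12R_1^b(1+o(1))$ gives $\gamma_*^\kappa=2^{-\kappa}R_1^{\kappa b}(1+o(1))$. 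Inserting $\Delta=cR_1^{1+d}$, the power of $R_1$ collapses to $R_1^{\kappa b+1+d-2}=R_1$ exactly because $d=2-\kappa b$, and the constant collapses to $4\pi C_\varrho^{\mathrm l}c\,2^{-\kappa}=1$ exactly because $c=(2^{2-\kappa}\pi C_\varrho^{\mathrm l})^{-1}$. Thus $m(R_1+\sigma_*+\Delta)\ge R_1(1+o(1))$, and since $R_1+\sigma_*+\Delta=R_1(1+o(1))$ as well, the ratio $m(R_1+\sigma_*+\Delta)/(R_1+\sigma_*+\Delta)$ tends to $1>\tfrac{19}{40}$ as $R_1\to 0$. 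For $R_1$ small this contradicts $m\le m_\lambda$ and (\ref{equivalent_bounds}), and therefore $r_2<R_1+\sigma_*+\Delta$.

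The step I expect to require the most care is not conceptual but the bookkeeping of the $(1+o(1))$ error factors — from $\ln(1+R_1^b)\approx R_1^b$, from $(R_1+\sigma_*)(R_1+\sigma_*+\Delta)\approx R_1^2$, and from $R_1+\sigma_*+\Delta\approx R_1$ — where one must verify that each costs only a multiplicative factor tending to $1$, so that the leading coefficient in front of $R_1$ is genuinely $1$ and not merely bounded below by some quantity that might fall short of $\tfrac{19}{40}$. This is exactly what the choices $d=2-\kappa b$ (balancing the powers of $R_1$ so that $m$ scales like the radius) and $c=(2^{2-\kappa}\pi C_\varrho^{\mathrm l})^{-1}$ (normalizing the leading constant to one) are engineered to achieve.
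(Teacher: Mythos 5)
Your proof is correct and follows essentially the same route as the paper: assume $r_2\ge R_1+\sigma_*+\Delta$, use the lower bound on $\varrho$ from Lemma \ref{lem_est_mat_c} together with $\gamma\ge\gamma_*\approx\tfrac12R_1^b$ on the shell, and check that the choices of $c$ and $d$ make the resulting mass-to-radius ratio exceed the bound guaranteed on the interval of existence. The only cosmetic difference is that the paper phrases the contradiction as $2m_\lambda(r)/r\ge 2(1-\Gamma(R_1))$ violating the continuation criterion and Corollary \ref{cor_cor51}, while you phrase it as $m(r)/r\to 1$ violating $m\le m_\lambda$ and the bound $m_\lambda/r\le\tfrac{19}{40}$ of (\ref{equivalent_bounds}); these are the same contradiction.
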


Note that $\gamma_*$ is defined in (\ref{def_gamma_star}) and that $\sigma_* < \delta$ for $R_1$ sufficiently small since $b \geq \beta$. Furthermore note that the definition of $r_2$ is made such that $\gamma(r) \geq \gamma_*$ for $r\in [R_1 + \sigma_*, r_2]$.

\begin{proof}
We prove the statement by contradiction, i.e.~we assume $r_2\geq R_1+\sigma_*+\Delta$. This implies that $\gamma(r)\geq\gamma_*$ for all $r\in[R_1 + \sigma_*, R_1 + \sigma_* + \Delta]$. In this proof we employ $\Gamma(R_1)$ as a symbolic notation for a function with the property $\Gamma(R_1)\to 0$, as $R_1\to 0$. With this notation we have for all $r\in[R_1 + \sigma_*, R_1 + \sigma_* + \Delta]$
\begin{equation}
\gamma(r) \geq \gamma_* = \frac 1 2 \frac{\sigma_*}{R_1}(1-\Gamma(R_1)) = \frac{1}{2} R_1^{b} (1-\Gamma(R_1)).
\end{equation}
We use this observation to calculate
\begin{multline}
\frac{2 m_\lambda(R_1+\sigma_* + \Delta)}{R_1+\sigma_* + \Delta} = \frac{2}{R_1(1 + R_1^b + c R_1^d)} \bigg( 4\pi \int_{R_1}^{R_1+\sigma_* + \Delta} s^2 \varrho(s)\mathrm ds \\+ \frac 1 2 \int_{R_1}^{R_1+\sigma_* + \Delta} \frac{q^2(s)}{s^2}\mathrm ds \bigg).
\end{multline}
Using Lemma $\ref{lem_est_mat_c}$ and omitting the part of the first integral over the interval $(R_1,R_1+\sigma_*)$, as well as the whole second integral we obtain
\begin{align*}
\frac{2 m_\lambda(R_1+\sigma_* + \Delta)}{R_1+\sigma_* + \Delta} &\geq \frac{8\pi}{R_1} \left(1-\Gamma(R_1)\right) C_\varrho^{\mathrm l} \int_{R_1+\sigma_*}^{R_1+\sigma_*+\Delta} \frac{(\gamma(s))^{\kappa}}{s^2} \mathrm ds\\
&\geq 2^{3-\kappa}\pi  C_\varrho^{\mathrm l} R_1^{b\kappa-1} \left(1-\Gamma(R_1)\right) \frac{\Delta}{(R_1+\sigma_*)(R_1+\sigma_*+\Delta)} \\
&\geq 2^{3-\kappa}\pi  C_\varrho^{\mathrm l} R_1^{b\kappa-2+d} c  \left(1-\Gamma(R_1)\right) = 2(1-\Gamma(R_1)).
\end{align*}
In the last step the definitions of $c$ and $d$, given in equation (\ref{def_const_abcd}), have been used. Now, if $R_1$ is sufficiently small, we deduce $2m_\lambda(r) > r$. However, this would by the continuation criterion, Proposition \ref{prop_cont_crit}, imply that the solution $(\mu,\lambda,q)$ does not exist on the whole of the interval $[0,2R_1]$. Thus we have derived a contradiction to Corollary \ref{cor_cor51}.
\end{proof}

\begin{lemma} \label{lem_low_bd_ml}
Assume $r_2$ has been defined via the previous lemma. Then there holds:
\begin{equation}
\frac{2m_\lambda(r_2)}{r_2}\geq \frac 5 2.
\end{equation}
\end{lemma}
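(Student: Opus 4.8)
The plan is to repeat the mass estimate from the proof of Lemma~\ref{lem_gamma_star}, now evaluated at $r_2$ itself and retaining the terms discarded there. By Lemma~\ref{lem_gamma_star} we have $r_2<R_1+\sigma_*+\Delta<2R_1$ once $R_1$ is small, so $r_2$ is the genuine right endpoint of the interval on which $\gamma\geq\gamma_*$, rather than the cutoff $2R_1$; hence $\gamma(r)\geq\gamma_*$ for all $r\in[R_1+\sigma_*,r_2]$. Furthermore, since $\gamma'(r)\geq\tfrac1{2r}$ on $[R_1,R_1+\delta]$ (shown above) and $\gamma(R_1+\sigma_*)\geq\tilde\gamma(R_1+\sigma_*)=\gamma_*$, we get $r_2\geq R_1+\delta$, and on $[R_1+\sigma_*,R_1+\delta]$ even the sharper bound $\gamma(s)\geq\tilde\gamma(s)=\tfrac12\ln(s/R_1)$ holds, with $\tilde\gamma(R_1+\delta)\gg\gamma_*$. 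So I would start from $m_\lambda(r_2)\geq m(r_2)=4\pi\int_{R_1}^{r_2}s^2\varrho(s)\,\mathrm ds$, restrict the integral to $[R_1+\sigma_*,r_2]$, keep the nonnegative piece $\tfrac12\int_0^{r_2}q(s)^2 s^{-2}\,\mathrm ds$ of $m_\lambda$ (with $q$ controlled via Proposition~\ref{lem_qr_r2}), and on $[R_1+\sigma_*,R_1+\delta]$ use $\tilde\gamma$ in place of the cruder constant $\gamma_*$.

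Next I would insert the lower bound $\varrho(s)\geq C_\varrho^{\mathrm l}\gamma(s)^\kappa/s^4$ of Lemma~\ref{lem_est_mat_c}, valid on $[R_1+\sigma_*,r_2]\subset D_{\mu_c}$, and compute explicitly: the substitution $s=R_1e^{2t}$ converts $\int_{R_1+\sigma_*}^{R_1+\delta}\tilde\gamma(s)^\kappa s^{-2}\,\mathrm ds$ into $\tfrac2{R_1}\int_{\gamma_*}^{t_\delta}t^\kappa e^{-2t}\,\mathrm dt$ with $t_\delta=\tfrac12\ln(1+\delta/R_1)\sim\tfrac\alpha2 R_1^{\beta}$, which for $R_1$ small is $\sim\tfrac{\alpha^{\kappa+1}}{2^{\kappa+1}(\kappa+1)}R_1$. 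The exponents $\beta=\tfrac2{\kappa+1}$, $b=\tfrac{2\kappa+1}{\kappa(\kappa+1)}$, $d=\tfrac1{\kappa+1}$ and the constants $\alpha=(64\pi C^{\mathrm u})^{-1/(\kappa+1)}$, $c=(2^{2-\kappa}\pi C_\varrho^{\mathrm l})^{-1}$ fixed in the preceding lemmas are rigged precisely so that $\beta(\kappa+1)=2$ (and $b\kappa+d=2$), which makes all powers of $R_1$ cancel and leaves a genuine $O(1)$ lower bound for $m_\lambda(r_2)/r_2$ (also using $r_2/R_1\to1$ and, on $(R_1+\delta,r_2)$, that $\gamma$ has merely descended from $\tilde\gamma(R_1+\delta)$ and still exceeds $\gamma_*$). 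Collecting the constants $C_\varrho^{\mathrm l}$, $C^{\mathrm u}$, $c$ yields $m_\lambda(r_2)/r_2\geq\tfrac54$ up to a $\mu_c$-independent error $\Gamma(R_1)\to0$, i.e.\ $\tfrac{2m_\lambda(r_2)}{r_2}\geq\tfrac52$ after shrinking $\mu_c^{\mathrm{shell}}$ if necessary.

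The main obstacle is not the final arithmetic but the uniformity: every constant in the second step must be kept independent of $\mu_c$ (equivalently of $R_1$), as stressed in the remark after Lemma~\ref{lem_est_mat_c}, so that no power of $R_1$ is silently absorbed; and one must identify correctly which subinterval of $[R_1,r_2]$ and which available lower bound for $\gamma$ on it (the constant $\gamma_*$, versus $\tilde\gamma$, versus the behaviour of $\gamma$ after its maximum) carries the $O(1)$ contribution — the scalings fixed in Lemmas~\ref{lem_c_gamma} and \ref{lem_gamma_star} and in the lemma preceding \ref{lem_gamma_star} are what guarantee this. One should finally observe that the bound $\tfrac{2m_\lambda(r_2)}{r_2}\geq\tfrac52$ is incompatible with the universal inequality $\tfrac{2m_\lambda(r)}{r}<1$ (equivalently $e^{-2\lambda(r)}>0$, cf.\ (\ref{trivial})), which holds at $r_2$ because $r_2<2R_1<r_*$ by Corollary~\ref{cor_cor51} and Lemma~\ref{lem_q0s_qrl12}: this contradiction is exactly what the thin-shell argument is built to exploit in the step following this lemma.
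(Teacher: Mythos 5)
Your proposal has a genuine gap, and the route you chose cannot actually reach the stated bound. First, the ``$\tfrac52$'' in the statement is evidently a typographical error for $\tfrac45$: by (\ref{trivial}) one always has $2m_\lambda(r)/r<1$, the paper's proof concludes with $2m_\lambda(r_2)/r_2\geq\tfrac89-\Gamma(R_1)$ and hence $\geq\tfrac45$ for $R_1$ small, and the subsequent proposition uses precisely $\frac{2m_\lambda(r)}{r}\geq\frac45\frac{r_2}{r}\geq\frac{15}{19}$ as a quantitative input. Your closing interpretation --- that the lemma is meant to produce a contradiction with $2m_\lambda/r<1$ which ``the thin-shell argument is built to exploit'' --- misreads its role: a contradiction at $r_2$ would show the solution ceases to exist before $2R_1$, which is the opposite of what Section \ref{sect_thin_shells} needs; the bound is used constructively to force the blow-up of $x(r)=m_\lambda/(r\gamma)$ just beyond $r_2$.

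Second, and more seriously, direct integration of $\varrho\geq C_\varrho^{\mathrm l}\gamma^\kappa/s^4$ cannot produce a constant anywhere near $\tfrac45$. On $[R_1+\sigma_*,R_1+\delta]$ the exponents in (\ref{ans_delta}) were chosen in the preceding lemma precisely so that $m(r)/r\leq\tfrac18$ and $4\pi r^2p\leq\tfrac1{16}$ there (this is what keeps $e^{2\lambda}\leq\tfrac85$ and hence $\gamma'\geq\tfrac1{2r}$); carrying out your own substitution with the actual constants gives at most $8\pi C_\varrho^{\mathrm l}\alpha^{\kappa+1}/(2^\kappa(\kappa+1))=C_\varrho^{\mathrm l}/(8\cdot2^\kappa(\kappa+1)C^{\mathrm u})\leq\tfrac1{32}$, since $C^{\mathrm u}\geq C_\varrho^{\mathrm u}\geq C_\varrho^{\mathrm l}$ and $\kappa\geq1$. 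The interval $[R_1+\delta,r_2]$ cannot supply the missing mass either, because its length has no lower bound: Lemma \ref{lem_gamma_star} gives only the upper bound $r_2<R_1+\sigma_*+\Delta$, and nothing in the preceding lemmas prevents $\gamma$ from descending to $\gamma_*$ immediately after $R_1+\delta$. The paper's proof is of an entirely different nature: it integrates the TOV-derived identity for $\frac{\mathrm d}{\mathrm dr}\bigl[e^{(\mu+\lambda)(r)}(m_\lambda(r)+4\pi r^3p(r)-q(r)^2/(2r))\bigr]$, uses the first Einstein equation (\ref{f_eeq_impl}) to convert the density integral into the boundary term $1-e^{-\lambda(r_2)}=1-\sqrt{1-2m_\lambda(r_2)/r_2}$, controls the oscillation $\Delta_\mu(R_1)$ of $\mu$, and then solves the resulting self-referential inequality $e^{\lambda(r_2)}m_\lambda(r_2)\geq(2-\Gamma(R_1))\,2m_\lambda(r_2)\bigl(1+\sqrt{1-2m_\lambda(r_2)/r_2}\bigr)^{-1}+\Gamma(R_1)$ for $2m_\lambda(r_2)/r_2$. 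It is this Buchdahl-type bootstrap --- insensitive to the length of $[R_1,r_2]$ --- and not a direct density integration that produces the constant $\tfrac89$.
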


\begin{proof}
First note that the exponent $b$ defined in (\ref{def_const_abcd}) fulfills the conditions
\begin{equation} \label{cond_b}
b\kappa>1,\qquad (\kappa+1)b>2.
\end{equation} 
Recall the Tolman-Oppenheimer-Volkov equation (\ref{tov_eq}) which reads
\begin{equation*}
p'(r) = -\mu'(r)(\varrho(r)+p(r))-\frac{2}{r}(p(r)-p_T(r))+\frac{q(r)q'(r)}{4\pi r^4}.
\end{equation*}
Using this equation one can derive the differential equation
\begin{multline}
\frac{\mathrm d}{\mathrm dr} \left[e^{(\mu+\lambda)(r)} \left(m_\lambda(r)+4\pi r^3 p(r) - \frac{q(r)^2}{2r} \right)\right] \\= e^{(\mu+\lambda)(r)}\left(4\pi r^2 \left(\varrho(r) + p(r) + 2p_T(r) \right)+\frac{q(r)^2}{r^2}\right).
\end{multline}
This in turn yields
\begin{multline}
e^{(\mu+\lambda)(r_2)}m_\lambda(r_2) = \int_{R_1}^{r_2} e^{(\mu+\lambda)(s)} \left(4\pi s^2\left(2\varrho(s)-z(s)\right)+\frac{q(s)^2}{s^2}\right) \mathrm ds \\
+ e^{(\mu+\lambda)(r_2)} \frac{q^2(r_2)}{2r_2} - 4\pi r_2^3 p(r_2) e^{(\mu+\lambda)(r_2)}.
\end{multline}
The variable $z(r)$ defined in Lemma \ref{lem_est_mat_c} has been substituted. The first Einstein equation (\ref{eieq1}) implies the relation
\begin{equation} \label{f_eeq_impl}
4\pi r e^{\lambda(r)} \varrho(r) = -\frac{\mathrm d}{\mathrm dr}\left(e^{-\lambda(r)}\right) + e^{\lambda(r)}\frac{m_\lambda(r)}{r^2} - 2\pi e^{\lambda(r)} \frac{q^2(r)}{r^3}.
\end{equation}
Furthermore we estimate using Lemma \ref{lem_est_mat_c} and $\gamma(r_2)=\gamma_*$ defined in (\ref{def_gamma_star})
\begin{align}
r_2p(r_2) &\leq C_p^{\mathrm u} \frac{\gamma(r_2)^{\kappa+1}}{r_2^3} \leq C_p^{\mathrm u} \frac{1}{(R_1+\sigma_*)^3}\left(\frac 1 2 \ln\left(\frac{R_1+\sigma_*}{R_1}\right)\right)^{\kappa+1} \label{est_rp}\\
&\leq \frac{C_p^{\mathrm u}}{2^{\kappa+1}}\frac{\sigma_*^{\kappa+1}}{R_1^{\kappa+1}\left(R_1+\sigma_*\right)^3} = \frac{C_\varrho^{\mathrm u}}{2^{\kappa+1}} \frac{R_1^{(\kappa+1)(1+b)}}{R_1^{\kappa+4}(1+R_1^b)^3} \leq CR_1^{(\kappa+1)b-3} \leq \frac{C}{R_1} \nonumber
\end{align}
where we have used the smallness of $R_1$ in the last step, as well as the second property in (\ref{cond_b}), and the general fact $\ln(1+x)\leq x$. Using (\ref{f_eeq_impl}) and (\ref{est_rp}) we obtain
\begin{align}
e^{(\mu+\lambda)(r_2)} m_\lambda(r_2) \geq& \int_{R_1}^{r_2} e^{\mu(s)} s\left(2-\frac{C_z^{\mathrm u}}{C_\varrho^{\mathrm l}} s^2\right) \left[-\frac{\mathrm d}{\mathrm ds}\left(e^{-\lambda(s)}\right)+\frac{m_\lambda(s)}{s^2}e^{\lambda(s)}\right] \mathrm ds \label{next_est} \\
&+ \int_{R_1}^{r_2} e^{(\mu+\lambda)(s)} \frac{q(s)^2}{s^2} \left(1-4\pi+2\pi \frac{C_z^{\mathrm u}}{C_\varrho^{\mathrm l}} s^2\right) \mathrm ds \nonumber \\
&+ e^{(\mu+\lambda)(r_2)}\frac{q^2(r_2)}{2r_2} -  4\pi C R_1^{(\kappa+1)b}. \nonumber
\end{align}
Note that for $R_1$ sufficiently small
\begin{equation}
\left(2-\frac{C_z^{\mathrm u}}{C_\varrho^{\mathrm l}} s^2\right) > 0\qquad\mathrm{and}\qquad 1-4\pi +2\pi \frac{C_z^{\mathrm u}}{C_\varrho^{\mathrm l}} s^2 < 0.
\end{equation}
Dropping the terms
\begin{equation*}
\int_{R_1}^{r_2} e^{(\mu+\lambda)(s)} \left(2-\frac{C_z^{\mathrm u}}{C_\varrho^{\mathrm l}} s^2\right) \frac{m_\lambda(s)}{s} \mathrm ds + e^{(\mu+\lambda)(r_2)}\frac{q^2(r_2)}{2r_2}
\end{equation*}
on the right hand side in (\ref{next_est}) we obtain
\begin{align}
e^{(\mu+\lambda)(r_2)} m_{\lambda}(r_2) \geq& -\left(2-\mathcal O(r_2^2)\right) R_1 \int_{R_1}^{r_2} e^{\mu(s)} \frac{\mathrm d}{\mathrm dr}\left(e^{-\lambda}\right) \mathrm ds \label{eq_for_int_w_mu} \\
&-(4\pi-1-\mathcal O(r_2^2)) \int_{R_1}^{r_2} e^{(\mu+\lambda)(s)} \frac{q(s)^2}{s^2}  \mathrm ds -\mathcal O\left( R_1^2\right). \nonumber
\end{align}
We use the notation $\mathcal O(r^k)$ for a general function such that there is a positive constant $M$ and a real number $r_0$ such that for all $r \leq r_0$ we have $\left|\mathcal O(r^k)\right| \leq M\left|r^k\right|$. Next we claim that
\begin{equation} \label{claim_mu}
\Delta_\mu(R_1):=\max_{r\in[R_1,r_2]} |\mu(r)-\mu(R_1)| \to 0,\qquad\mathrm{as}\quad R_1\to 0.
\end{equation}
We have
\begin{equation} \label{eq_prim_est_dmu}
\Delta_\mu(R_1) \leq \int_{R_1}^{r_2} e^{2\lambda(s)} \left| 4\pi s p(s) + \frac{m_\lambda(s)}{s^2} - \frac{q^2(s)}{2s^3}\right| \mathrm ds.
\end{equation}
We find an upper bound for this expression. Recall $e^{2\lambda(r)} \leq 20$ (cf.~equation  (\ref{eq_l_b_e2l_nnf})). For the term containing $p(r)$ we can use the estimate (\ref{est_rp}), i.e.~$4\pi s p(s) \leq \frac{C}{R_1}$. Using the inequality (\ref{a_b_ineq_charged}) and Proposition \ref{lem_qr_r2} we can write
\begin{align}
\frac{m_\lambda(s)}{s^2} &\leq \frac{m_g(s)}{s^2} \leq \frac{1}{s} \left(\frac 1 3 +\sqrt{\frac{1}{9} + \frac{q(s)^2}{3s^2}}\right)^2 \\
&\leq \frac{1}{s} \left(\frac{1}{3} + \sqrt{\frac 1 9 + \frac{C_q}{3} r_2^2}\right)^2 = \frac{1}{s} \left(\frac{4}{9} + \Gamma(r_2)\right) \leq \frac{1}{2s}, \nonumber
\end{align}
where we used smallness of $R_1$ in the last step and again the symbolic notation $\Gamma(r_2)$ for a function fulfilling $\Gamma(r_2)\to 0$ as $r_2\to 0$. Inserting this bound and the bound (\ref{est_rp}) into (\ref{eq_prim_est_dmu}), and using Proposition \ref{lem_qr_r2} for the term involving $q(s)$, we obtain
\begin{equation}
\Delta_\mu(R_1) \leq C \int_{R_1}^{r_2} \left( \frac{1}{R_1} + \frac{1}{2s} + \frac{s}{2}\right) \mathrm ds  \leq C\left(\frac{r_2-R_1}{R_1}+\ln\left(\frac{r_2}{R_1}\right) + r_2^2\right).
\end{equation}
Since $r_2/R_1\to1$, as $R_1\to 0$, (\ref{claim_mu}) follows. \par.
 A consequence of (\ref{claim_mu}) is that in the integral in (\ref{eq_for_int_w_mu}) we have
\begin{equation}
- e^{\mu(s)} \geq -e^{\mu(r_2)+\Delta_\mu(R_1)}.
\end{equation}
So we can cancel $e^{\mu(r_2)}$ on both sides of the inequality and obtain
\begin{align}
e^{\lambda(r_2)} m_\lambda(r_2) \geq& \left(2-\mathcal O(r_2^2)\right) e^{\Delta_\mu(R_1)} R_1 \left(1 -\sqrt{1-\frac{2m_\lambda(r_2)}{r_2}}\right) \label{est_delta_o}\\
&-(4\pi-1-\mathcal O(r_2^2)) e^{\Delta_\mu(R_1)} \int_{R_1}^{r_2} e^{\lambda(s)} \frac{q(s)^2}{s^2}  \mathrm ds -\mathcal O\left( R_1^{(\kappa+1)b}\right). \nonumber
\end{align}
Note that $r_2\leq R_1+\sigma_* + \Delta = R_1\left(1+ R_1^b+cR_1^d\right)\in\mathcal O(R_1)$.  Moreover we have
\begin{equation}
1-\sqrt{1-\frac{2m_\lambda(r)}{r}} = \frac{2m_\lambda(r)}{r\left(1+\sqrt{1-\frac{2m_\lambda(r)}{r}}\right)},
\end{equation}
and by Proposition \ref{lem_qr_r2} we also have
\begin{equation}
\int_{R_1}^{r_2} e^{\lambda(s)} \frac{q(s)^2}{s^2} \mathrm ds \to 0, \quad\mathrm{as}\; R_1\to 0.
\end{equation}
Moreover, since $\frac{R_1}{r}\to 1$ as $R_1\to 0$ for $r\in[R_1,2R_1]$, we can write
\begin{equation}
\left(2-\mathcal O\left(r_1^2\right) \right) e^{\Delta_\mu(R_1)}\frac{R_1}{r} = \left(2-\Gamma(R_1)\right)
\end{equation}
for $R_1$ sufficiently small, where $\Gamma(R_1)$ is the symbolic notation for a function satisfying $\Gamma(R_1) \leq 1$ and $\Gamma(R_1)\to 0$, as $R_1\to 0$. The actual function denoted by $\Gamma$ my change from line to line but the properties just mentioned stay preserved. So we obtain
\begin{equation}
e^{\lambda(r_2)}m_\lambda(r_2) \geq \left(2-\Gamma(R_1)\right) \frac{2m_\lambda(r_2)}{1+\sqrt{1-\frac{2m_\lambda(r_2)}{r_2}}} + \Gamma(R_1), 
\end{equation}
This is equivalent to
\begin{align*}
&1 \geq \left( 4-\Gamma(R_1) \right)\sqrt{1-\frac{2m_\lambda(r_2)}{r_2}} \frac{1}{\left(1+\sqrt{1-\frac{2m_\lambda(r_2)}{r_2}}\right)} + \Gamma(R_1) \\
\Leftrightarrow \qquad & -(3-\Gamma(R_1)) \sqrt{1-\frac{2 m_\lambda(r_2)}{r_2}}  \geq - (1-\Gamma(R_1)) \\
\Leftrightarrow \qquad & \frac{2m_\lambda(r_2)}{r_2} \geq \frac{8}{9}-\Gamma(R_1).
\end{align*}
We see that $2m_\lambda(r_2)/r_2\to\frac 89$, as $R_1\to 0$. So in particular there is $\tilde R_1$ such that for all $R_1\leq\tilde R_1$ we have
\begin{equation}
\frac{2m_\lambda(r_2)}{r_2}\geq \frac{4}{5},
\end{equation}
as asserted.
\end{proof}

\begin{prop}
Define the variable
\begin{equation} \label{def_mak_var}
x(r):=\frac{m_\lambda(r)}{r\gamma(r)}.
\end{equation}
$x(r)$ diverges at a finite radius $R_2\leq 2R_1$ if $\mu_c$ is chosen sufficiently small. Moreover $R_2/R_1\to 1$, as $R_1\to 0$.
\end{prop}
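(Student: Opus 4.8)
The plan is to obtain a Riccati-type differential inequality for $x$ just to the right of the radius $r_2$ built in the previous lemmas and to read off blow-up from an elementary ODE comparison; throughout, ``$R_1$ small'' means $\mu_c\to-\infty$. First I would collect what is known near $r_2$. By construction $\gamma(r_2)=\gamma_*=\tfrac12\ln(1+R_1^b)\to 0$, while Lemma~\ref{lem_low_bd_ml} (and its proof, which gives $2m_\lambda(r_2)/r_2\to 8/9$) yields $m_\lambda(r_2)/r_2\geq\tfrac49-\Gamma(R_1)$. Since $m_\lambda$ is non-decreasing, $r_2>R_1$, and $\gamma>0$ on $(R_1,r_2]$ (from $\gamma'\geq 1/(2r)$ near $R_1$ and from the construction of $r_2$), on a short interval $[r_2,r_2+\eta]$ with $\eta$ chosen so that $\eta/r_2\to 0$ but $\eta\,x(r_2)/R_1\to\infty$ (e.g.\ $\eta=R_1^{1+b/2}$, which also gives $r_2+\eta<2R_1$) we still have $m_\lambda(r)/r\geq\tfrac49-\Gamma(R_1)$, hence $e^{2\lambda(r)}=(1-2m_\lambda(r)/r)^{-1}\geq 8$, while also $e^{2\lambda}\leq 20$, $q(r)\leq C_q r^2$ (Proposition~\ref{lem_qr_r2}), and $\mu$ bounded. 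In particular $x(r_2)=\frac{m_\lambda(r_2)}{r_2\gamma_*}\geq\frac{4/9-\Gamma(R_1)}{\gamma_*}\to\infty$.

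Next I would differentiate. With $m_\lambda'=4\pi r^2\varrho+\tfrac{q^2}{2r^2}\geq 0$ and the formula~(\ref{eq_formula_gammap}) for $\gamma'$ one gets
\[
x'=\frac{m_\lambda'}{r\gamma}-\frac{x}{r}-\frac{x\gamma'}{\gamma},
\]
and in $-\tfrac{x\gamma'}{\gamma}$ the decisive contribution, coming from the $\tfrac{m_\lambda}{r^2}$-term of $\gamma'$, equals exactly $+\tfrac{e^{2\lambda}x^2}{r}$. Every other contribution is controlled on $[r_2,r_2+\eta]$ while $\gamma>0$: the term $\tfrac{m_\lambda'}{r\gamma}$ and the $4\pi rp$-part are non-negative and may be dropped; the $q$-terms carry a factor $q^2\leq C_q^2r^4$ and so enter with coefficient $\mathcal O(R_1)$ in front of $x^2$; the term $\tfrac{\mathrm d}{\mathrm dr}\ln(1+I_q)$ is bounded by a constant times $q_0$, so it enters with coefficient $\mathcal O(q_0)$ in front of $x^2$, which is $\ll 1/r$ once $R_1$ is small; $-x/r$ is linear in $x$; and the only other term of size $x^2/r$ is $-\tfrac{L_0}{r^2+L_0}\tfrac{x^2}{m_\lambda}\geq-\tfrac94(1+\Gamma(R_1))\tfrac{x^2}{r}$. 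Since $e^{2\lambda}\geq 8>\tfrac94$ on $[r_2,r_2+\eta]$, the two $x^2/r$-terms combine to at least $\tfrac{5x^2}{r}$; after absorbing the $\mathcal O(q_0)$- and $\mathcal O(R_1)$-coefficient terms (legitimate since $1/r\geq 1/(2R_1)\to\infty$) and the linear term (legitimate since $x(r_2)$ is large and a standard continuity argument then keeps $x$ increasing), and using $r\leq 2R_1$, one obtains for $R_1$ small
\[
x'(r)\geq\frac{x(r)^2}{R_1}\qquad\text{for all }r\in[r_2,r_2+\eta]\text{ with }\gamma>0\text{ on }[r_2,r].
\]

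Finally I would run the comparison. While $\gamma$ stays positive $x$ is finite and $\tfrac{\mathrm d}{\mathrm dr}\tfrac1x\leq-\tfrac1{R_1}$, so $\tfrac1{x(r)}\leq\tfrac1{x(r_2)}-\tfrac{r-r_2}{R_1}$; since $x(r_2)\to\infty$, the right-hand side hits $0$ at some $r\leq r_2+\tfrac{R_1}{x(r_2)}<r_2+\eta$. Hence $\gamma$ must vanish at a first radius $R_2:=\inf\{r>r_2:\gamma(r)=0\}$ with $R_1<R_2\leq r_2+\tfrac{R_1}{x(r_2)}<r_2+\eta<2R_1$, and on $(r_2,R_2)$ we have $\gamma>0$, so $\gamma(r)\to 0^+$ as $r\to R_2^-$ while $m_\lambda(R_2)\geq m_\lambda(r_2)>0$ by Lemma~\ref{lem_low_bd_ml}; therefore $x(r)=\frac{m_\lambda(r)}{r\gamma(r)}\to+\infty$, i.e.\ $x$ diverges at the finite radius $R_2\leq 2R_1$. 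For the last claim, $R_1<R_2<r_2+\eta$ together with $r_2-R_1<\sigma_*+\Delta=R_1^{1+b}+cR_1^{1+d}$ and $\eta=R_1^{1+b/2}$ forces $R_2/R_1\to1$ as $R_1\to0$.

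The main obstacle is the balance in the differential inequality: the favourable term $\tfrac{e^{2\lambda}x^2}{r}$ and the competing term $-\tfrac{L_0}{r^2+L_0}\tfrac{x^2}{m_\lambda}$ are of exactly the same order $x^2/r$, so one genuinely needs the quantitative bound $e^{2\lambda}\geq 8$, equivalently $m_\lambda/r\geq\tfrac49-\Gamma(R_1)$ from Lemma~\ref{lem_low_bd_ml}, and this bound is available only on a short interval to the right of $r_2$ before $m_\lambda/r$ can decay. It is precisely this localization that makes the comparison produce blow-up inside that short interval, which is also what yields $R_2/R_1\to1$.
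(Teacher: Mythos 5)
Your proposal is correct and follows essentially the same route as the paper: you derive a Riccati-type differential inequality for $x$ on a short interval to the right of $r_2$, using the lower bound on $2m_\lambda(r_2)/r_2$ from Lemma \ref{lem_low_bd_ml}, the bound $q(r)\leq C_q r^2$ from Proposition \ref{lem_qr_r2}, and the estimate (\ref{est_ddr_1piq}) to absorb the competing terms into the favourable $e^{2\lambda}x^2/r$ term, exactly as the paper does by splitting $\frac{x^2}{1-2x\gamma}$ into pieces. The only cosmetic differences are that the paper keeps the linear term and integrates $rx'\geq\frac{19}{20}x^2-x$ explicitly by separation of variables (locating the blow-up radius via the lower bound on $x(r_2)$ from Lemma \ref{lem_c_gamma}), whereas you absorb the linear term using the largeness of $x(r_2)=m_\lambda(r_2)/(r_2\gamma_*)$ and run the comparison on $1/x$.
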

\begin{proof}
We have
\begin{multline}
rx'(r) = \frac{4\pi r^2 \varrho(r)}{\gamma(r)} + \frac{q(r)^2}{2r^2\gamma(r)} - x(r) - \frac{x(r)}{\gamma(r)}\frac{q_0 e^{(\lambda+\mu)(r)}}{1+I_q(r)} \frac{q(r)}{r} \\+ \frac{x(r)}{\gamma(r)(1-2x(r)\gamma(r))}\left(4\pi r^2 p(r) + x(r)\gamma(r) -\frac{q^2(r)}{2r^2}\right)-\frac{x(r)}{\gamma(r)}\frac{L_0}{r^2+L_0}.
\end{multline}
Since it is the aim now to find a lower bound for the expression $rx'(r)$ we can drop the positive terms
\begin{equation*}
\frac{4\pi r^2\varrho(r)}{\gamma(r)} + \frac{q(r)^2}{2r^2 \gamma(r)} + \frac{4\pi r^2 p(r) x(r)}{\gamma(r)(1-2x(r)\gamma(r))}.
\end{equation*}
Then, using the estimate
\begin{equation*}
\frac{\mathrm d}{\mathrm dr} \ln\left(1+I_q(r)\right) \leq \frac{q_0}{\sqrt{r^2+L_0}} e^{\lambda(r)} \frac{q(r)}{r}
\end{equation*}
(cf.~equation (\ref{est_ddr_1piq})) we obtain
\begin{multline} \label{est_m_var_red}
rx'(r) \geq \frac{x^2(r)}{1-2x(r)\gamma(r)}-x(r) - \frac{x(r)}{\gamma(r)} \left( \frac{q_0}{\sqrt{r^2+L_0}}e^{\lambda(r)} q(r) + \frac{L_0}{r^2+L_0}\right) \\- \frac{q(r)^2}{2r^2\gamma(r)} \frac{x(r)}{1-2x(r)\gamma(r)}.
\end{multline}
Note that by Lemma \ref{lem_qr_r2} the factor in the third term can be estimated by
\begin{equation}
\frac{q_0}{\sqrt{r^2+L_0}} e^{\lambda(r)}q(r) + \frac{L_0}{r^2+L_0} \leq 1+\Gamma(R_1),
\end{equation}
where $\Gamma(R_1)$ is the symbolic notation for a function fulfilling $\Gamma(R_1)\to 0$, as $R_1\to 0$. Using this observation and reordering terms we write (\ref{est_m_var_red}) as
\begin{multline} \label{est_m_var_red_2}
rx'(r) \geq x(r) \left(\frac 3 5 \frac{x(r)}{1-2x(r)\gamma(r)} - \frac{1+\Gamma(R_1)}{\gamma(r)}\right) \\
+\left(\frac{x(r)}{5}-\frac{q(r)^2}{2r^2\gamma(r)} \right) \frac{x(r)}{1-2x(r)\gamma(r)} + \frac 1 5 \frac{x^2(r)}{1-2x(r)\gamma(r)}-x(r)
\end{multline}
to prepare for the next estimates.\par
In the remainder of this proof let $r\in[r_2,\frac{76}{75}r_2]$. By Lemma \ref{lem_low_bd_ml} for $r>r_2$ we have
\begin{equation} \label{eq_mr_r2r}
\frac{2m_\lambda(r)}{r} \geq \frac 4 5 \frac{r_2}{r} \geq \frac{15}{19}.
\end{equation}
Then we have
\begin{equation}
\frac 3 5\frac{x(r)}{1-2x(r)\gamma(r)} = \frac{3}{5\gamma(r)} \frac{m_\lambda(r)}{r} \frac{1}{1-\frac{2m_\lambda(r)}{r}} \geq \frac{1}{\gamma(r)} \frac{9}{38} \frac{19}{4} = \frac{1}{\gamma(r)} \frac{9}{8} \geq \frac{1+\Gamma(R_1)}{\gamma(r)}.
\end{equation}
So the first term in (\ref{est_m_var_red_2}) is positive and can be dropped if $R_1$ is sufficiently small. In order to deal with the second term we estimate using Lemma \ref{lem_qr_r2} and (\ref{eq_mr_r2r})
\begin{align}
\frac{x(r)}{5} - \frac{q(r)^2}{2r^2\gamma(r)} &\geq \frac{x(r)}{5}-\frac{C_q^2r^2}{2\gamma(r)} = \frac{1}{\gamma(r)}\left(\frac{m_\lambda(r)}{5r}-\frac{C_q^2r^2}{2}\right) \geq \frac{1}{\gamma(r)} \left(\frac{152}{1875}\frac{r_2}{r}-\frac{C_q^2r^2}{2}\right) \\
&= \frac{C_q^2r^2}{2\gamma(r)} \left(C\frac{r_2}{r^3}-1\right) \leq \frac{C_q^2 r_2^2}{2\gamma(r)} \left(\frac{C}{r^2}-1\right). \nonumber
\end{align}
This is positive if $R_1$ is chosen sufficiently small since $r_2 \to 0$, as $R_1\to 0$. So the second term can be dropped as well in (\ref{est_m_var_red_2}). We are left with
\begin{equation}
rx'(r) \geq \frac 1 5 \frac{x^2(r)}{1-2x(r)\gamma(r)} - x(r) \geq \frac{1}{5-4\frac{r_2}{r}} x^2(r) - x(r) \geq \frac{19}{20} x^2(r) - x(r).
\end{equation}
We solve this differential inequality with the method of separation of variables. We have
\begin{equation}
\int_{r_2}^r\frac{1}{s}\mathrm ds = \int_{x(r_2)}^{x(r)} \frac{1}{\frac{19}{20} x^2-x}\mathrm dx
\end{equation}
which yields
\begin{equation}
x(r) \geq \frac{20x(r_2)}{(20-19x(r_2))\frac{r}{r_2}+19x(r_2)}.
\end{equation}
For the next steps we note the lower bound 
\begin{equation}
x(r_2) \geq \frac{2}{5C_\gamma} r_2^{-\frac{2}{\kappa+1}},
\end{equation}
which is a combination of Lemma \ref{lem_c_gamma} and Lemma \ref{lem_low_bd_ml}. This yields
\begin{equation}
x(r) \geq \frac{20}{19 + r\left(50 C_\gamma r_2^{\frac{2}{\kappa+1}-1}-\frac{19}{r_2}\right)}.
\end{equation}
We note that the denominator is positive if $r=r_2$. However, if we substitute 
\begin{equation}
r=\tilde R_2:=r_2\left(1+\frac{50C_\gamma}{19}r_2^{\frac{1}{\kappa+1}}\right)
\end{equation}
the denominator becomes
$$
50C_\gamma r_2^{\frac{1}{\kappa+1}}\left(r_2^{\frac{1}{\kappa+1}}-1\right) + \frac{2500 C_\gamma^2}{19} r_2^{\frac{3}{\kappa+1}}
$$
which is less than zero if $r_2$ is small enough. By continuity the denominator has at least one zero in the interval $[r_2,\tilde R_2]$. We call the first zero $R_2$.  By construction of the variable $x(r)$ in (\ref{def_mak_var}) we have $\gamma(R_2)=0$ and since $r_2\leq R_2 \leq \tilde R_2$ we have $R_2/R_1\to 1$ as $R_1\to 0$, since already $r_2/R_1\to 1$, as $R_1\to 0$, as implied by Lemma \ref{lem_gamma_star}. 
\end{proof}

\end{document}